\documentclass[10pt,a4paper,reqno]{amsart}
\usepackage[a4paper,margin=2.5cm]{geometry}
\usepackage{amsmath,amssymb,amsthm,mathtools}
\usepackage{amscd}        
\usepackage[bibliography=common]{apxproof} 
\usepackage{natbib}
\usepackage{bibentry}    
\usepackage[T1]{fontenc}
\usepackage[utf8]{inputenc}
\usepackage{lmodern}  
\usepackage{graphicx}
\graphicspath{{figures/}}
\usepackage{subcaption}
\usepackage{float}
\usepackage{tikz}
\usepackage{booktabs}
\usepackage[table]{xcolor}
\usepackage{array}
\usepackage{multirow}
\usepackage[foot]{amsaddr}
\usepackage{hyperref}
\hypersetup{
  colorlinks      = true,
  pdfauthor       = {G. Custers, S. Karbach, M. Friesen},
  pdftitle        = {Pricing and Hedging of Discretely Monitored Asian Options in the Volterra--Heston Model},
  pdfkeywords     = {Asian options, rough volatility, Volterra--Heston, hedging, affine processes}
}
\usepackage[capitalise]{cleveref}

\usepackage{bm}
\usepackage{bbm}
\usepackage{enumitem}  
\usepackage{empheq}    
\usepackage{microtype} 

\makeatletter
\newsavebox{\@brx}
\newcommand{\llangle}[1][]{\savebox{\@brx}{$\m@th{#1\langle}$}%
  \mathopen{\copy\@brx\kern-0.5\wd\@brx\usebox{\@brx}}}
\newcommand{\rrangle}[1][]{\savebox{\@brx}{$\m@th{#1\rangle}$}%
  \mathclose{\copy\@brx\kern-0.5\wd\@brx\usebox{\@brx}}}
\makeatother

\newcommand*{\E}{\mathop{} \mathrm{e}}

\DeclareMathOperator{\argmin}{argmin}

\newcommand{\R}{\mathbb{R}}
\newcommand{\C}{\mathbb{C}}

\theoremstyle{plain}
\newtheorem{theorem}{Theorem}[section]
\newtheorem{proposition}[theorem]{Proposition}
\newtheorem{lemma}[theorem]{Lemma}
\newtheorem{corollary}[theorem]{Corollary}
\newtheorem{assumption}[theorem]{Assumption}

\theoremstyle{definition}
\newtheorem{definition}[theorem]{Definition}

\theoremstyle{remark}
\newtheorem{remark}[theorem]{Remark}

\newtheoremstyle{example}
  {.3\baselineskip}{.3\baselineskip}
  {\normalfont}{0pt}{\bfseries}{.}{5pt plus 1pt minus 1pt}{}
\theoremstyle{example}
\newtheorem{example}[theorem]{Example}

\title[Asian Options in the Volterra--Heston Model]{Pricing and Hedging of Discretely Monitored Asian Options\\in the Volterra--Heston Model}

\author[G. Custers]{Gijs Custers$^{1}$}

\author[M. Friesen]{Martin Friesen$^{2}$}

\author[S. Karbach]{Sven Karbach$^{3}$}

\address{$^{1}$Department of Mathematical Sciences, Chalmers University of Technology and University of Gothenburg, 
SE--412 96 Gothenburg, Sweden.}
\email{gijs.custers@chalmers.se}

\address{$^{2}$Division of Mathematical Statistics, Lund University, Sweden}
\email{martin.friesen@matstat.lu.se}

\address{$^{3}$Korteweg--de~Vries Institute for Mathematics, University of Amsterdam, 
Science Park~904, 1098~XH Amsterdam, The~Netherlands.}

\email{sven@karbach.org}

\begin{document}
\begin{abstract}
We develop semi-closed pricing formulas and lifted-model hedging methods for
discretely monitored geometric and arithmetic Asian options in the Volterra--Heston stochastic volatility model. Exploiting the affine Volterra structure, we derive a tractable transform for the joint law of the terminal log-price and the discretely monitored geometric average. This transform yields semi-closed pricing formulas for geometric Asian options, which in turn provide effective control variates for Monte Carlo valuation of arithmetic Asian options. Under the stated real-moment and affine-transform hypotheses, we also derive the Galtchouk--Kunita--Watanabe decomposition for Fourier-representable payoffs and obtain a variance-optimal hedge in terms of the Riccati--Volterra equation and the forward-variance curve. Using $N$-factor Markovian approximations, we obtain a finite-dimensional numerical implementation for hedging Asian options. Our numerical experiments document factor convergence for a regular non-Markovian kernel and the effect of rebalancing frequency on hedging error. In the Heston benchmark, geometric Asian controls substantially reduce the variance of arithmetic-Asian price estimates and improve the finite-sample stability of regression-based hedging relative to direct regression.
\medskip

\noindent\textbf{Keywords:} Asian options; rough volatility; Volterra--Heston model; Fourier inversion; quadratic hedging.
\end{abstract}

\maketitle

\vspace{-4mm}

\section{Introduction}

\subsection{Asian options in stochastic volatility models}

Asian options are path-dependent derivatives whose payoff depends on the average of the underlying asset over a pre-specified time window. They are among the most actively traded exotics in equity, FX, and especially commodity markets. In power and energy markets, for instance, swaps and futures are typically written on arithmetic averages of spot prices over a delivery period. Their popularity stems from their natural risk-reducing properties, where averaging reduces the impact of short-lived price spikes and lowers volatility. This makes them attractive to market participants. From a financial modelling perspective, market conventions distinguish between three key design choices: (i) \emph{geometric} vs.\ \emph{arithmetic} averaging, (ii) \emph{continuous} vs.\ \emph{discrete}
monitoring, and (iii) \emph{fixed} vs.\ \emph{floating} strikes. These distinctions crucially affect
analytical tractability, numerical implementation, pricing and ultimately hedging. 

In the classical Black--Scholes setting, geometric averages are lognormally distributed, enabling
closed-form valuation formulas for both fixed- and floating-strike contracts~\cite{KemnaVorst1990}. In contrast, no closed-form pricing formula analogous to the geometric case is available for arithmetic Asian options. Several approximations and numerical approaches have been developed, such as moment-matching and analytic approximations \cite{TurnbullWakeman1991}, Laplace-transform techniques \cite{GemanYor1993}, and PDE reductions to one spatial dimension \cite{RogersShi1995,Vecer2001}. These results form the methodological benchmark against which more advanced stochastic models should be compared.

It is well known that Black--Scholes pricing fails to reproduce implied volatility smiles and heavy tails. To address this, stochastic-volatility models, and most prominently the Heston model \cite{Heston1993}, have been introduced. The affine structure of the Heston model makes transform methods particularly effective. Combining change-of-numeraire arguments with Fourier inversion~\cite{GemanKarouiRochet1995,GilPelaez1951}, semi-closed-form pricing formulas for continuously monitored geometric Asian options with fixed and floating strikes have been derived in \cite{KimWee2014}. While continuous averaging simplifies analysis, actual Asian contracts are typically settled from a finite set of fixing dates, making discrete monitoring the economically relevant specification. Continuous-monitoring formulas often serve as approximations or control variates, but the difference between discrete and continuous sampling may be material, especially in volatile markets, and is therefore important for both valuation and hedging. Even in Markovian models, working directly with discrete fixings leads to more accurate pricing and risk management. Contributions on discretely monitored Asians include~\cite{MR3508667, MR3181982}.

\subsection{The Volterra--Heston stochastic volatility model}\label{subsec:volterra-hestonsv}

Over the past decade, a new paradigm for stochastic volatility modelling emerged, where literature provides empirical evidence for \emph{rough} or strongly \emph{path-dependent} volatility dynamics, with short-time scaling behaviour consistent with a fractional exponent $H \in (0,\frac{1}{2})$, see \cite{MR4960404, GatheralJaissonRosenbaum2018, MR4625912}. Through the convolution with a Volterra kernel $\mathcal{K}$ in the variance dynamics, one is able to simultaneously capture the path-dependence and roughness in Volterra stochastic volatility models. A canonical example is the rough Heston model, where the variance is driven by a fractional kernel of the form $\mathcal{K}(t)=t^{H-\frac12}/\Gamma(H+\tfrac12)$ for $H \in (0,\frac{1}{2})$. More generally, throughout this work, we shall suppose that the following assumption is satisfied:

\begin{assumption}\label{assump:K}
$\mathcal{K} \in L_{\mathrm{loc}}^2(\R_+) \cap C((0,\infty))$ and there exists $\gamma \in (0,2]$ with
  \begin{align}\label{eq: K increment condition}
    \int_{0}^{h} \mathcal{K}(t)^2\,\mathrm{d}t = O(h^\gamma) \ \ \text{ and } \ \  \int_{0}^{T} \bigl(\mathcal{K}(t+h) - \mathcal{K}(t)\bigr)^2\,\mathrm{d}t = O(h^\gamma)
  \end{align}
  for all $T \in (0,\infty)$ as $h \downarrow 0$. Moreover, for all $h\geq 0$, the shifted kernel $\Delta_h \mathcal{K}(t):= \mathcal{K}(t+h)$ satisfies
  \begin{enumerate}
    \item[(a)] $\Delta_h \mathcal{K}$ is nonnegative, nonincreasing, and not identically zero on $(0,\infty)$;
    \item[(b)] its resolvent of the first kind $L_h$ exists and is nonnegative and nonincreasing in
the sense that $s\longmapsto L_h([s,s+t])$ is nonincreasing for all $t \geq 0$.
  \end{enumerate}
\end{assumption}

Assumption \ref{assump:K} is satisfied for (a) the Gamma kernel $\mathcal{K}(t) = t^{H - \frac{1}{2}}\E^{-\lambda t}$ with $\gamma = 2H$, where $H \in (0, \frac{1}{2})$ and $\lambda \geq 0$; (b) the $\alpha$-$\log$-kernel $\mathcal{K}(t) = \log(1+t^{-\alpha})$ with any choice of $\gamma \in (0,1)$, where $\alpha \in (0,1]$; (c) locally Lipschitz continuous kernels $\mathcal{K}$ that are non-constant and completely monotone (in particular $\mathcal{K}(0) < \infty$) with $\gamma = 1$. The last class of examples covers, e.g., the hyperbolic kernel $\mathcal{K}(t) = c_H (\varepsilon + t)^{H-\frac{1}{2}}$ with $H \in (-\infty, \frac{1}{2})$ and $c_H, \varepsilon > 0$, and the $M$-factor exponential kernels with $c_i, \lambda_i \geq 0$ and $M \geq 1$ defined as $\mathcal{K}(t) = \sum_{i=1}^M c_i \E^{- \lambda_i t}$.

The asset price process $(S_{t})_{t\geq 0}$ is defined on some filtered probability space $(\Omega,\mathcal{F},\mathbb{F},\mathbb{P})$ as the unique weak solution to the coupled stochastic differential equations
\begin{align}
  \mathrm{d}S_t &= r\,S_t\,\mathrm{d}t 
    + \sqrt{\nu_t}\,S_t\,\mathrm{d}W^S_t, 
    \label{eq:VH-S} \\
  \nu_t &= \nu_0 
    + \int_0^t \mathcal{K}(t-s)\,
        \kappa\bigl(\theta - \nu_s\bigr)\,\mathrm{d}s
    + \int_0^t \mathcal{K}(t-s)\,
        \sigma\sqrt{\nu_s}\,\mathrm{d}W^\nu_s,
    \label{eq:VH-nu}
\end{align}
where $(\nu_t)_{t\geq 0}$ is the instantaneous variance. Here, $(\Omega,\mathcal{F},\mathbb{F},\mathbb{P})$ denotes a complete filtered probability space satisfying the usual conditions, $\mathbb{P}$ a risk-neutral measure, and the filtration $\mathbb{F}=(\mathcal{F}_t)_{t\ge 0}$ supports two independent $\mathbb{P}$-Brownian motions $(W^{S},W^{I})$, while the Brownian motion in the volatility dynamics is given by 
\[
  W^{\nu}_t := \rho\,W^{S}_t + \sqrt{1-\rho^{2}}\,W^{I}_t, \qquad t\ge 0,
\]
with correlation parameter $\rho \in [-1,1]$ satisfying $\mathrm{d}\langle W^S, W^\nu\rangle_t = \rho\,\mathrm{d}t$. Finally, $r \geq 0$ denotes the risk-free market rate, $\kappa > 0$ a mean-reversion strength parameter, $\theta > 0$ the long-term volatility level, and $\sigma > 0$ the vol-of-vol coefficient. We specify the model directly under the risk-neutral measure. Hence $\kappa$ and $\theta$ denote risk-neutral parameters. Constructing an equivalent pricing measure from a physical-measure Volterra specification is more delicate and generally requires additional assumptions on the drift and market price of volatility risk, see \cite{FGW26} and \cite{BJ26}.

Existence and uniqueness of a weak, continuous solution $(S,\nu)$ taking values in $(0,\infty) \times \mathbb{R}_+$ to~\eqref{eq:VH-S}-\eqref{eq:VH-nu} is established in~\cite[Thm.~7.1(i)]{abi2019affine}. The first moment of $S$ and higher moments of $\nu_t$ are uniformly bounded on finite time horizons, i.e. $\sup_{t\in[0,T]} \mathbb{E}\bigl[\, S_t + \nu_t^{\,q}\,\bigr] < \infty$ for all $q\ge 2$ and $T>0$, and the discounted price process $(\E^{-rt}S_t)_{t\geq 0}$ is a $\mathbb{P}$-martingale, see~\cite[Lem. 3.1 and Lem.~7.3]{abi2019affine}. In particular, the discounted stock price is a true martingale, and the model is well posed under the chosen pricing measure. As typical for stochastic-volatility models, and in particular when $|\rho|<1$, the stock-and-cash market is incomplete since volatility risk cannot be perfectly hedged solely by trading the underlying asset, see \cite{steinstein1991} for classical discussions of incompleteness in stochastic volatility frameworks, and \cite{euch2018perfect} for the rough Heston model.

\subsection{Pricing and hedging of Asian options}
In this paper, we study the pricing and hedging of geometric and arithmetic \emph{discretely monitored} Asian options with fixed and floating strikes in the Volterra--Heston stochastic volatility model. Given $T > 0$, let $0=t_0<t_1<\dots<t_N=T$ be a discrete monitoring grid for the underlying Asian option. The grid consists of $N+1$ monitoring dates $t_0,\dots,t_N$ (including the spot fixing at $t_0=0$), and throughout $N$ denotes the monitoring index. Define the geometric and arithmetic average
\begin{align}\label{eq: intro 1}
    G_T^N = \exp\left( \frac{1}{N+1}\sum_{j=0}^{N} \log(S_{t_j}) \right) \quad \text{ and } \quad A_T^N = \frac{1}{N+1}\sum_{j=0}^{N} S_{t_j}.
\end{align}
Then $(G^N_T-K)^+$ and $(A^N_T - K)^+$ denote the payoff for a \emph{fixed-strike discretely monitored geometric/arithmetic Asian call option} with strike $K>0$, while $(G^N_T - S_T)^+$ and $(A^N_T - S_T)^+$ denote the payoff for the \emph{floating-strike discretely monitored geometric/arithmetic Asian call option}, where $(x)^+:=\max\{x,0\}$. The corresponding risk-neutral prices are, for $t \in [0,T]$, given by
\begin{align*}
    C^N_{t} := \E^{-r(T-t)}\mathbb{E}\left[(G^N_T-K)^+\middle|\mathcal{F}_{t}\right]
    \ \text{ and } \ \widetilde{C}^N_{t} := \E^{-r(T-t)}\mathbb{E}\left[(G^N_T-S_T)^+\middle|\mathcal{F}_{t}\right].
\end{align*}
For arithmetic Asian options, the corresponding arbitrage-free prices are given in the same way with $G_T^N$ replaced by $A_T^N$.

For European options, semi-closed transform formulas and fast numerical methods are now well established \cite{ElEuchRosenbaum2019}, while American options have been studied in \cite{chevalier2022american}. By contrast, the literature on pricing and hedging exotic derivatives under rough and path-dependent volatility is still comparatively young, and analytical results for path-dependent payoffs remain much scarcer. Using Fourier pricing methods in combination with the affine property, semi-explicit pricing formulas for continuously monitored geometric Asian options were obtained in \cite{aichinger2024pricing} for the Volterra--Heston model. To the best of our knowledge, semi-closed transform formulas for discretely monitored geometric Asian options in the affine Volterra--Heston setting with measure-valued Riccati--Volterra inputs have not been developed. This gap is the focus of the first part of this work, whose key ingredient is a joint Fourier transform formula for the terminal log-price and the log-geometric average, expressed in terms of solutions to Riccati--Volterra equations with distribution-valued inputs. This extends the Markovian approach of~\cite{KimWee2014} to the affine Volterra setting, and complements \cite{aichinger2024pricing} by allowing for discretely observed monitoring dates. For the time-weighted left-Riemann discretisation of the continuous average, we establish a strong convergence rate of order $\Delta_N$ in $L^q$. The latter provides a rigorous justification when exact formulas for discrete observations can be replaced by continuous averages.

On the hedging side, stochastic-volatility markets are incomplete unless one can trade the underlying volatility factor through an additional instrument. In rough Heston models, perfect hedging becomes possible only after enlarging the market to include forward-variance related instruments \cite{euch2018perfect}. In this work, we focus on the practically relevant stock-and-cash market, where hedging Asian options is understood in a mean-variance or risk-minimisation framework. The variance-optimal strategy is characterised by the Galtchouk--Kunita--Watanabe decomposition. In the semimartingale Heston setting, this formalism has been carried out for European claims \cite{Kallsen2010}. Partial hedging under rough volatility is studied in \cite{MEHD2023}. Extending these ideas to discretely monitored Asian payoffs in affine Volterra models is substantially more delicate. We derive blockwise GKW formulas for discretely monitored geometric Asian options in the Volterra--Heston model. The resulting hedge is explicit in terms of the current spot, the forward-variance curve, and the solution of the Riccati--Volterra equation. A Markovian $N$-factor lift then yields a finite-dimensional approximation that is suitable for implementation. For non-singular kernels, we prove that the variance-optimal hedge of the lifted $N$-factor approximation converges to the true variance-optimal hedge. For singular fractional kernels, the numerical accuracy gate shows that the factor approximation used here is not sufficiently accurate to support pathwise hedge claims, and we therefore report no such claims.

Beyond their theoretical interest, our results provide effective computational tools. In particular, the semi-closed formulas for geometric Asians are well suited as control variates for Monte Carlo pricing of arithmetic Asians. On the hedging side, the semi-analytic geometric hedge is explicit in the affine Volterra state variable $(S_t,\xi_t(\cdot))$, while the Markovian lift provides the finite-dimensional representation used in the numerical implementation. We combine the semi-analytic geometric hedge with a backward-regression scheme for arithmetic-Asian hedging. In the Heston benchmark, this substantially stabilises direct regression, although the resulting improvement over the geometric-only hedge is small.

\subsection{Layout of the paper}

Section~\ref{sec:volterra-heston} derives the central measure-valued affine transformation formula, while semi-closed pricing formulas for Asian options, and their continuous limits, are studied in Section~\ref{sec:pricing}. Section~\ref{sec:hedging} develops the quadratic-hedging formulas, discusses the Markovian approximation, and reports reproducible hedging diagnostics. Section~\ref{sec:arithmetic-Asian-options} applies these results to control-variate pricing and hedging of arithmetic Asians.

\section{Joint Fourier-Laplace transform}\label{sec:volterra-heston}

\subsection{Volterra--Riccati equation with measure-valued input}\label{sec:VolterraRiccati}

Let $\mathcal{M}_{\mathrm{lf}}$ be the space of $(\C^2)^*$-valued locally finite measures on $\R_+$, that is, pairs $\mu=(\mu_1,\mu_2)$ of $\mathbb{C}$-valued locally finite measures on $\R_+$, acting on $\mathbb{C}^2$-valued integrands componentwise. For $\mu = (\mu_1, \mu_2) \in \mathcal{M}_{\mathrm{lf}}$, define
\begin{align}\label{eq:phi1}
    \psi_1(t, \mu) = \mu_1([0,t]),
\end{align}
and let $\psi_2 \in L_{\mathrm{loc}}^2(\R_+; \C)$ be given by the \emph{generalised Volterra--Riccati equation}
\begin{align}
    \psi_2(t,\mu) = \int_{[0,t]} \mathcal{K}(t-s)\,\mu_2(\mathrm{d}s) + \int_0^t \mathcal{K}(t-s)R(\psi(s,\mu))\,\mathrm{d}s. \label{eq:VolRic}
\end{align}
The first term in \eqref{eq:VolRic} is defined as an element in $L^2_{\mathrm{loc}}(\R_+)$, and $R$ denotes the quadratic function
\begin{align}\label{eq:R}
  R(x_1, x_2) = \tfrac{1}{2}(x_1^2 - x_1) - \kappa x_2 
       + \tfrac{1}{2}\bigl(\sigma^2x_2^2 + 2\rho\sigma x_1x_2\bigr).
\end{align}
The \emph{fractional Sobolev space} $W^{\eta,2}([0,T]; \C)$ with regularity $\eta\in(0,1)$ is defined as the Banach space of equivalence classes of functions $g:[0,T]\to\mathbb{C}$ with finite norm
\begin{align*}
    \|g\|_{W^{\eta,2}([0,T])} :=
    \left(
        \int_0^T |g(t)|^2\,\mathrm{d}t 
        + \int_0^T \int_0^T 
            \frac{|g(t)-g(s)|^2}{|t-s|^{1+2\eta}}\,\mathrm{d}s\,\mathrm{d}t
    \right)^{1/2}.
\end{align*}

For the vector-valued measure $\mu = (\mu_1, \mu_2) \in \mathcal{M}_{\mathrm{lf}}$ let us write $|\mu|, |\mu_1|, |\mu_2|$ for the corresponding variation measures on $\R_+$. Moreover, let us define
\[
    \mathcal{M}_{\mathrm{ad}} \coloneqq \left\{ \mu = (\mu_1, \mu_2) \in \mathcal{M}_{\mathrm{lf}} \ | \ \Re(\mu_1([0,t])) \in [0,1] \ \text{ for all }\ t \geq 0 \ \text{ and } \ \Re(\mu_2)\leq 0 \right\}.
\]
The next theorem establishes the existence and uniqueness of solutions $(\psi_1,\psi_2)$ to equations~\eqref{eq:phi1}-\eqref{eq:VolRic} for given $\mu \in \mathcal{M}_{\mathrm{ad}}$. 

\begin{theorem}\label{thm:solRicVolmeasure}
    Let $\mu \in\mathcal{M}_{\mathrm{ad}}$. For every $\eta\in(0,\gamma/2)$ there exists a unique solution $\psi_2(\cdot; \mu) \in W_{\mathrm{loc}}^{\eta, 2}(\mathbb{R}_+,\mathbb{C}_-)$ to \eqref{eq:VolRic}. Moreover, for every $p\in[2,\infty)$ with $\mathcal{K}\in L^p([0,T])$, there exists a constant $C > 0$ that only depends on the model parameters $(\rho,\sigma,\kappa)$, $p$, $T$, and $\|\mathcal{K}\|_{L^1([0,T])}$ such that
    \[
        \|\psi(\cdot, \mu)\|_{L^p([0,T])} \leq C\left (1 + \| \mathcal{K}\|_{L^{p}([0,T])}  \right)\Bigl(1 + |\mu|([0,T])\Bigr),
    \]
    and for $[\mathcal{K}]_{\eta, 2,T} := \left( \int_0^T t^{-2\eta}|\mathcal{K}(t)|^2\, \mathrm{d}t + \int_0^T \int_0^T \frac{|\mathcal{K}(t) - \mathcal{K}(s)|^2}{|t-s|^{1 + 2\eta}}\,\mathrm{d}s\,\mathrm{d}t \right)^{1/2}$ with $\eta \in (0,\gamma/2)$, we find
    \begin{align*}
    \|\psi_2(\cdot,\mu)\|_{W^{\eta,2}([0,T])} 
    &\leq \|\psi_2(\cdot, \mu)\|_{L^2([0,T])}+ C\big(1 + [\mathcal{K}]_{\eta,2,T}\big) \\
    &\quad \times
	        \Big(1 + |\mu|([0,T]) + \|\psi(\cdot,\mu)\|_{L^1([0,T])} 
	              + \|\psi(\cdot,\mu)\|_{L^2([0,T])}^2\Big).
	    \end{align*}
\end{theorem}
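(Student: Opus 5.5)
The plan is to reduce the measure-valued Riccati--Volterra equation \eqref{eq:VolRic} to a standard Volterra equation with an $L^2_{\mathrm{loc}}$ forcing term. First, $\psi_1(t,\mu)=\mu_1([0,t])$ is bounded on $[0,T]$ by $|\mu|([0,T])$. Its real part lies in $[0,1]$ because $\mu\in\mathcal{M}_{\mathrm{ad}}$, and it is a fixed input. Set $h_\mu(t):=\int_{[0,t]}\mathcal{K}(t-s)\,\mu_2(\mathrm{d}s)$. By Minkowski's integral inequality,
\[
\|h_\mu\|_{L^p([0,T])}\le \|\mathcal{K}\|_{L^p([0,T])}\,|\mu_2|([0,T]),
\]
and for the fractional seminorm a similar Minkowski argument on the shifted kernels gives $[h_\mu]_{W^{\eta,2}}\lesssim [\mathcal{K}]_{\eta,2,T}\,|\mu_2|([0,T])$. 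The weight $t^{-2\eta}$ in $[\mathcal{K}]_{\eta,2,T}$ accounts for the jump of $\mathcal{K}(\cdot-s)\mathbf{1}_{\{\cdot\ge s\}}$ at $t=s$. That $[\mathcal{K}]_{\eta,2,T}<\infty$ for $\eta<\gamma/2$ follows from \eqref{eq: K increment condition} by a dyadic decomposition.

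Existence and uniqueness are then treated as an equation $\psi_2=h_\mu+\mathcal{K}*F(\cdot,\psi_2)$, where $F(t,x)=R(\psi_1(t),x)$ is quadratic in $x$ and has bounded measurable coefficients. Since $F$ is locally Lipschitz, a Picard iteration in $L^2([0,\tau])$ for small $\tau$ gives a unique local solution, and it can be continued as long as $\|\psi_2\|_{L^2}$ stays bounded. The key step is an a priori bound. For this I follow the argument of Abi Jaber--Larsson--Pulido (and \cite{abi2019affine}, Thm.~6.1), built on the invariance of $\mathbb{C}_-$. Write
\[
\Re F(t,x)=\tfrac12\bigl(\Re\psi_1^2-\Re\psi_1\bigr)-\kappa\Re x+\tfrac{\sigma^2}{2}\Re(x^2)+\rho\sigma\Re(\psi_1x).
\]
For $\Re\psi_1\in[0,1]$ the first term is $\le\tfrac12(\Im\psi_1)^2$, up to bounded terms. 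Together with $\Re\mu_2\le0$ and the resolvent properties in Assumption~\ref{assump:K}(b), this yields $\Re\psi_2\le0$, and hence the solution takes values in $\mathbb{C}_-$. I would establish this first for absolutely continuous $\mu$ (or via the shifted kernels $\Delta_h\mathcal{K}$) and then pass to general measures by approximation.

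With $\Re\psi_2\le 0$, one bounds $|\psi_2|$ by comparison. Take the imaginary part and use the real part's sign so that the quadratic term enters with a favourable sign. Concretely, $|\psi_2|^2$ is controlled by $-\Re\psi_2$ plus data, as in the proof of \cite[Lemma~6.3]{abi2019affine}. The aim is an $L^\infty$ or $L^p$ bound on $\psi_2-h_\mu$, linear in $1+|\mu|([0,T])$. Combining this with the bound on $h_\mu$ gives the stated $L^p$ estimate, with a constant depending only on $(\rho,\sigma,\kappa)$, $p$, $T$ and $\|\mathcal{K}\|_{L^1}$. Here $\|\mathcal{K}\|_{L^1}$ controls $\mathcal{K}*F$ through Young's inequality. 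I expect this a priori estimate, which must hold uniformly for measure inputs rather than only for continuous inputs, to be the main obstacle.

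Finally, the Sobolev estimate follows from
\[
\|\psi_2\|_{W^{\eta,2}}\le\|\psi_2\|_{L^2}+[h_\mu]_{W^{\eta,2}}+[\mathcal{K}*F]_{W^{\eta,2}},
\]
together with the standard convolution bound $[\mathcal{K}*g]_{W^{\eta,2}}\lesssim[\mathcal{K}]_{\eta,2,T}\|g\|_{L^1}$. Since
\[
\|F(\cdot,\psi_2)\|_{L^1}\lesssim 1+|\mu|([0,T])+\|\psi\|_{L^1}+\|\psi\|_{L^2}^2,
\]
with the $|\mu|$ term absorbing the $\psi_1^2$ contributions, this gives exactly the displayed right-hand side. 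The local $W^{\eta,2}$ membership on every $[0,T]$ then follows.
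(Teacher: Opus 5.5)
Your treatment of the forcing term and of the Sobolev estimate is sound. The Minkowski argument, with the weight $t^{-2\eta}$ absorbing the jump of $\mathcal{K}(\cdot-s)\mathbf{1}_{\{\cdot\ge s\}}$, gives $[\mathcal{K}*\nu]_{W^{\eta,2}}\lesssim[\mathcal{K}]_{\eta,2,T}\,|\nu|([0,T])$ for every finite measure $\nu$. This is a self-contained substitute for the paper's appeal to \cite[Thm.~3.7]{friesen2024volterra}. (The $\psi_1^2$-part of $\|F\|_{L^1}$ is absorbed by $\|\psi\|_{L^2}^2$, not by the $|\mu|$-term, but the resulting bound is the same.)

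The gap is the a priori estimate, which in your plan carries everything. It is what lets the local Picard solution be continued to $[0,T]$, and it is the first displayed inequality of the theorem, yet you only state it as an aim. The mechanism you indicate cannot produce it. Young's inequality for $\mathcal{K}*F$ needs $\|F\|_{L^p}$ or $\|F\|_{L^1}$. That norm contains $\tfrac{\sigma^2}{2}\psi_2^2$, so the estimate does not close. It also contains $\tfrac12\psi_1^2$, which is quadratic in $|\mu_1|([0,T])$, so the bound cannot be linear in $1+|\mu|([0,T])$.

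What is needed is a comparison argument that uses the Riccati structure. Completing the square with the cross term gives
\begin{align*}
\Re R(\psi_1,\psi_2)&=\bigl(\rho\sigma\Re\psi_1-\kappa+\tfrac{\sigma^2}{2}\Re\psi_2\bigr)\Re\psi_2+\tfrac12\bigl((\Re\psi_1)^2-\Re\psi_1\bigr)\\
&\quad-\tfrac12\bigl(\Im\psi_1+\rho\sigma\Im\psi_2\bigr)^2-\tfrac{\sigma^2(1-\rho^2)}{2}(\Im\psi_2)^2,
\end{align*}
and the last three terms are $\le 0$ for $\Re\psi_1\in[0,1]$ and $|\rho|\le 1$. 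Your bound ``$\le\tfrac12(\Im\psi_1)^2$'' for the first term points the wrong way to be of any use. One then sandwiches $\psi_2$ between solutions of linear Volterra equations, $l\le\Re\psi_2\le 0$ and $|\Im\psi_2|\le h+|\tfrac{\rho}{\sigma}\Im\psi_1|$, as in \cite[Lem.~7.4]{abi2019affine}. The functions $l,h$ are then bounded in $L^p$ by a Volterra comparison principle; this is Lemma~\ref{lem:Lpest} in the paper.

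Those comparison results are available only for absolutely continuous inputs. Your step ``pass to general measures by approximation'' therefore needs a convergence statement $\psi_2^{(n)}\to\psi_2$, which you do not supply. On your route it must hold for the local solution on every $[0,\tau]$ before global existence is known.

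The paper avoids this by never constructing a measure-input solution directly:
\begin{enumerate}
\item It approximates $\mu$ by absolutely continuous inputs $(g_n\,\mathrm{d}s,f_n\,\mathrm{d}s)$ with $\|g_n\|_{L^1}\le|\mu_1|([0,T])$, $\|f_n\|_{L^1}\le|\mu_2|([0,T])$, $\Re f_n\le 0$ and $\Re\mu_1^{(n)}([0,t])\in[0,1]$, using \cite[Lem.~3.8]{friesen2024volterra}.
\item It takes uniform $W^{\eta,2}$ bounds from the absolutely continuous case.
\item It extracts an $L^2$-convergent subsequence via the compact embedding $W^{\eta,2}\hookrightarrow L^2$.
\item It verifies that the limit solves \eqref{eq:VolRic}, using local Lipschitz continuity of $R$ and Young's inequality.
\item It obtains the measure-input bounds at the end by Fatou's lemma.
\end{enumerate}
Once you supply such a compactness argument to justify your approximation step, it already gives global existence, and the Picard/continuation part of your plan becomes redundant.
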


In appendix~\ref{sec:1} it is shown that condition \eqref{eq: K increment condition} implies $[\mathcal{K}]_{\eta, 2,T} < \infty$ for $\eta \in (0,\gamma/2)$, while a proof of this theorem is given in appendix~\ref{sec:proof-Theorem2.3}. Next, we focus on stability for the solution $\psi$ with respect to the initial condition $\mu$ and the Volterra kernel $\mathcal{K}$. 

\begin{theorem}\label{thm:stability-Lq}
    Let $\mu = (\mu_1, \mu_2), \mu^{(n)} = (\mu^{(n)}_1, \mu^{(n)}_2) \in \mathcal{M}_{\mathrm{ad}}$. Let $\mathcal{K}, \mathcal{K}^{(n)}$ be Volterra kernels that satisfy Assumption~\ref{assump:K}, and let $\psi = (\psi_1, \psi_2)$ and $\psi^{(n)} = (\psi^{(n)}_1, \psi^{(n)}_2)$ be the corresponding solutions of \eqref{eq:phi1} and \eqref{eq:VolRic}. 
    Suppose that $\sup_{n \geq 1}\  [\mathcal{K}^{(n)}]_{\eta,2,T} < \infty$ holds for some $\eta \in (0,\gamma/2)$, that $\sup_{n\ge1}|\mu^{(n)}|([0,T])<\infty$, and
    \begin{align*}
        \lim_{n \to \infty} \left( \| \mathcal{K}^{(n)} - \mathcal{K} \|_{L^2([0,T])} + |\mu_1([0,t]) - \mu^{(n)}_1([0,t])| + |\mu_2 - \mu_2^{(n)}|([0,T]) \right)= 0, \qquad \forall t \in [0,T].
	    \end{align*}
	    Then $\psi_1^{(n)} \longrightarrow \psi_1$ pointwise, and $\psi_2^{(n)} \longrightarrow \psi_2$ in $L^{2}([0,T]; \C)$.
\end{theorem}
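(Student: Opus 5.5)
The plan is to treat $\psi_1$ directly and $\psi_2$ through a Gronwall-type argument for the difference $\psi_2^{(n)}-\psi_2$, using the a priori bounds of Theorem~\ref{thm:solRicVolmeasure} together with a compactness/uniform-integrability step to control the quadratic nonlinearity. The first component is immediate: $\psi_1^{(n)}(t)=\mu_1^{(n)}([0,t])\to\mu_1([0,t])=\psi_1(t)$ for every $t$ by hypothesis. Moreover $|\psi_1^{(n)}(t)|\le |\mu^{(n)}|([0,T])$ is uniformly bounded, so $\psi_1^{(n)}\to\psi_1$ also in every $L^p([0,T])$ by dominated convergence.

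For the second component, I write
\[
\psi_2^{(n)}-\psi_2 = I_n + J_n + \int_0^\cdot \mathcal{K}(\cdot-s)\bigl(R(\psi^{(n)}(s))-R(\psi(s))\bigr)\,\mathrm{d}s,
\]
where
\[
I_n=\int_{[0,\cdot]}\mathcal{K}^{(n)}(\cdot-s)\,\mu_2^{(n)}(\mathrm{d}s)-\int_{[0,\cdot]}\mathcal{K}(\cdot-s)\,\mu_2(\mathrm{d}s)
\]
collects the measure terms and
\[
J_n=\int_0^\cdot(\mathcal{K}^{(n)}-\mathcal{K})(\cdot-s)R(\psi^{(n)}(s))\,\mathrm{d}s .
\]
For $I_n$, I split it as $(\mathcal{K}^{(n)}-\mathcal{K})*\mu_2^{(n)}+\mathcal{K}*(\mu_2^{(n)}-\mu_2)$ and apply Minkowski's integral inequality. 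This gives
\[
\|I_n\|_{L^2}\le \|\mathcal{K}^{(n)}-\mathcal{K}\|_{L^2}\,|\mu_2^{(n)}|([0,T])+\|\mathcal{K}\|_{L^2}\,|\mu_2^{(n)}-\mu_2|([0,T])\to0 .
\]
For $J_n$, Young's inequality gives $\|J_n\|_{L^2}\le \|\mathcal{K}^{(n)}-\mathcal{K}\|_{L^2}\,\|R(\psi^{(n)})\|_{L^1}$. Since $R$ is quadratic, this requires a uniform bound on $\|\psi^{(n)}\|_{L^2}$. Theorem~\ref{thm:solRicVolmeasure} with $p=2$ supplies it, because its constant depends only on $\|\mathcal{K}^{(n)}\|_{L^1}$ and $\|\mathcal{K}^{(n)}\|_{L^2}$, both bounded since $\mathcal{K}^{(n)}\to\mathcal{K}$ in $L^2$, and on $\sup_n|\mu^{(n)}|([0,T])<\infty$.

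The main obstacle is the nonlinear term. Writing $R(x)-R(y)=\ell(x,y)\cdot(x-y)$ with $\ell$ affine in $(x,y)$, the coefficient contains $\psi_2^{(n)}+\psi_2$, which is only known to lie in $L^2$. A plain Gronwall argument in $L^2$ therefore fails, since it would need an $L^\infty$ bound on the coefficient.

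I would try two routes, in this order.
\begin{enumerate}
\item[(i)] Use the sign structure: $\Re\psi_2^{(n)}\le 0$ and $\Re\psi_1\in[0,1]$, as in the existence proof, to obtain uniform $L^\infty$ bounds on $\psi_2^{(n)}$ on $[0,T]$. If such bounds hold, a Volterra–Gronwall inequality for nonnegative kernels in $L^2$ closes the estimate directly.
\item[(ii)] If (i) is unavailable, argue by compactness. The uniform bound $\sup_n[\mathcal{K}^{(n)}]_{\eta,2,T}<\infty$, fed into the $W^{\eta,2}$-estimate of Theorem~\ref{thm:solRicVolmeasure}, gives $\sup_n\|\psi_2^{(n)}\|_{W^{\eta,2}([0,T])}<\infty$. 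By the compact embedding $W^{\eta,2}\hookrightarrow L^2$, every subsequence has a further subsequence converging in $L^2$ (and a.e.) to some $\tilde\psi_2$. To identify $\tilde\psi_2$, I need uniform integrability of $|\psi_2^{(n)}|^2$, which I would obtain from the $L^p$-bound with some $p>2$ (possible when $\mathcal{K}\in L^p$, and generally via the fractional Sobolev embedding $W^{\eta,2}\subset L^{p}$ for $p<2/(1-2\eta)$). With this, $R(\psi^{(n)})\to R(\psi_1,\tilde\psi_2)$ in $L^1$. Passing to the limit in the equation then shows that $\tilde\psi_2$ solves \eqref{eq:VolRic} for $\mu,\mathcal{K}$, where the limit in the nonlinear term uses Young's inequality again. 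Uniqueness from Theorem~\ref{thm:solRicVolmeasure} gives $\tilde\psi_2=\psi_2$, so the whole sequence converges in $L^2([0,T])$.
\end{enumerate}
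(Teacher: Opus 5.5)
Your route (ii) is essentially the paper's proof: uniform $W^{\eta,2}$ bounds from Theorem~\ref{thm:solRicVolmeasure}, compactness of $W^{\eta,2}([0,T])\hookrightarrow L^2([0,T])$, identification of each subsequential limit via the same Minkowski/Young estimates you wrote for $I_n$ and $J_n$, and then uniqueness together with the subsequence principle. Route (i) cannot work in general, because an atom $u_2\delta_0$ in $\mu_2$ makes $\psi_2\approx u_2\mathcal{K}$ near $0$, which is unbounded for singular kernels; the uniform-integrability step in (ii) is also superfluous, since the compact embedding already gives strong $L^2$ convergence and hence $R(\psi^{(n_k)})\to R(\psi_1,\tilde\psi_2)$ in $L^1$.
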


The uniform variation bound $\sup_{n\ge1}|\mu^{(n)}|([0,T])<\infty$ is automatic for the monitoring measures used below, all of which have total mass one. The proof is given in Appendix~\ref{sec:proofthm24}. In some cases, one would like to have continuous solutions $\psi_2$, and stability uniformly in $t$. The next theorem provides a sufficient condition.

\begin{theorem}\label{thm:stability-uniform}
    Suppose that $\mu = (\mu_1, \mu_2), \mu^{(n)} = (\mu^{(n)}_1, \mu^{(n)}_2) \in \mathcal{M}_{\mathrm{ad}}$ with $\mu_2(\mathrm{d}s) = f_2(s)\mathrm{d}s$, $\mu_2^{(n)}(\mathrm{d}s) = f_2^{(n)}(s)\mathrm{d}s$, and there exists $q \in [3, \infty]$ such that $\mathcal{K}, \mathcal{K}^{(n)} \in L^q([0,T])$, $f_2, f_2^{(n)} \in L^{\frac{2q}{q-1}}([0,T]; \C)$,
    $$
        \lim_{n \to \infty}\left(\| \mathcal{K} - \mathcal{K}^{(n)}\|_{L^q([0,T])} + \| f_2 - f_2^{(n)}\|_{L^{\frac{q}{q-1}}([0,T]) } + |\mu_1([0,t]) - \mu^{(n)}_1([0,t])| \right) = 0
    $$
    for each $t \in [0,T]$. If $\sup_{n \geq 1}\  [\mathcal{K}^{(n)}]_{\eta,2,T} < \infty$ holds for some $\eta \in (0,\gamma/2)$ and
    \begin{align*}
        &\sup_{n \geq 1}\  \left( |\mu^{(n)}|([0,T]) + \| f_2^{(n)}\|_{L^{\frac{2q}{q-1}}([0,T])}\right) < \infty,
	    \end{align*}
	    and if $\|\psi_2\|_{L^\infty([0,T])}<\infty$ and $\sup_{n\ge1}\|\psi_2^{(n)}\|_{L^\infty([0,T])}<\infty$, where for $q=\infty$ we use the convention $\frac{q}{q-1}=1$ and $\frac{2q}{q-1}=2$, then $\psi_2, \psi_2^{(n)}$ are continuous on $[0,T]$, and $\psi_2^{(n)} \longrightarrow \psi_2$ uniformly on $[0,T]$. 
\end{theorem}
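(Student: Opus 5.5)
We first prove continuity of $\psi_2$ (and of each $\psi_2^{(n)}$), and then obtain uniform convergence from a Grönwall-type estimate in the sup-norm. Write $f_1:=\mu_1([0,\cdot])=\psi_1(\cdot,\mu)$, and note that $\Re f_1\in[0,1]$ while $|f_1|\le|\mu|([0,T])$. With $f_2$ given, the Riccati equation reads
\[
\psi_2(t)=\int_0^t\mathcal{K}(t-s)\bigl(f_2(s)+R(f_1(s),\psi_2(s))\bigr)\,\mathrm{d}s=:(\mathcal{K}*g)(t).
\]
By assumption $\|\psi_2\|_{L^\infty}<\infty$, and $R$ is quadratic. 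Hence $g=f_2+R(f_1,\psi_2)$ belongs to $L^{2q/(q-1)}([0,T])\subset L^{q/(q-1)}([0,T])$. Indeed, $f_1$ and $\psi_2$ are bounded, and $f_2\in L^{2q/(q-1)}$.

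Continuity then follows from Hölder's inequality with conjugate exponents $(q,\,q/(q-1))$:
\[
|\psi_2(t+h)-\psi_2(t)|\le \|\mathcal{K}(\cdot+h)-\mathcal{K}\|_{L^q([0,T])}\|g\|_{L^{q/(q-1)}}+\Bigl(\int_0^h|\mathcal{K}|^q\Bigr)^{1/q}\|g\|_{L^{q/(q-1)}}.
\]
Continuity of translation in $L^q$ for $q<\infty$ sends the first term to zero. Absolute continuity of the integral sends the second term to zero. For $q=\infty$ one instead puts $g$ in $L^1$ and uses $\mathcal{K}\in L^\infty$ together with $\mathcal{K}\in C((0,\infty))$, monotonicity and dominated convergence. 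The same argument applies verbatim to each $\psi_2^{(n)}$.

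For convergence, set $\Delta_n:=\psi_2^{(n)}-\psi_2$ and decompose
\[
\Delta_n=(\mathcal{K}^{(n)}-\mathcal{K})*g^{(n)}+\mathcal{K}*(f_2^{(n)}-f_2)+\mathcal{K}*\bigl(R(f_1^{(n)},\psi_2^{(n)})-R(f_1,\psi_2)\bigr).
\]
\begin{itemize}
\item The first term is bounded uniformly in $t$ by $\|\mathcal{K}^{(n)}-\mathcal{K}\|_{L^q}\sup_n\|g^{(n)}\|_{L^{q/(q-1)}}$. The supremum is finite thanks to the uniform $L^{2q/(q-1)}$ bounds on $f_2^{(n)}$, the uniform $L^\infty$ bounds on $\psi_2^{(n)}$, and $|f_1^{(n)}|\le\sup_n|\mu^{(n)}|([0,T])$.
\item The second term is bounded by $\|\mathcal{K}\|_{L^\infty}\|f_2^{(n)}-f_2\|_{L^1}$ if $q=\infty$. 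For finite $q$ one needs a bound $\|\mathcal{K}\|_{L^q}\|f_2^{(n)}-f_2\|_{L^{q'}}$ with $q'=q/(q-1)$, and this is exactly the assumed convergence.
\item For the third term, $R$ is locally Lipschitz. On the bounded set determined by the uniform bounds, this gives
\[
|R(f_1^{(n)},\psi_2^{(n)})-R(f_1,\psi_2)|\le C\bigl(|f_1^{(n)}-f_1|+|\Delta_n|\bigr).
\]
\end{itemize}
The contribution of $|f_1^{(n)}-f_1|$ is bounded by $\|\mathcal{K}\|_{L^q}\|f_1^{(n)}-f_1\|_{L^{q'}([0,T])}$. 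This tends to zero by dominated convergence, since pointwise convergence holds and the functions are uniformly bounded. We are left with
\[
|\Delta_n(t)|\le\varepsilon_n+C\int_0^t|\mathcal{K}(t-s)||\Delta_n(s)|\,\mathrm{d}s,\qquad \varepsilon_n\to0.
\]

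The main obstacle is closing this Volterra–Grönwall inequality for a kernel that is only in $L^q$ rather than bounded. First, note that the convolution $f\mapsto C|\mathcal{K}|*f$ has nonnegative resolvent $r_C\in L^1([0,T])$; the resolvent series converges because $|\mathcal{K}|\in L^1$. The generalised Grönwall lemma then yields
\[
|\Delta_n(t)|\le\varepsilon_n\bigl(1+\|r_C\|_{L^1([0,T])}\bigr).
\]
If one prefers to avoid resolvents, one can split $[0,T]$ into short intervals on which $C\|\mathcal{K}\|_{L^1([0,h])}<1/2$ and iterate, handling the history terms with the bound already obtained on the earlier intervals. Either way we get $\sup_{t\le T}|\Delta_n(t)|\to0$, i.e.\ uniform convergence. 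The hypotheses $\sup_n[\mathcal{K}^{(n)}]_{\eta,2,T}<\infty$ and $\mathcal{K}^{(n)}\to\mathcal{K}$ in $L^q\subset L^2$ are in any case covered by Theorem~\ref{thm:stability-Lq}. That theorem gives $L^2$ convergence, which can serve as a consistency check and can replace the uniform-boundedness step if needed.
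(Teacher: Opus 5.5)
Your proof is correct, but it closes the estimate by a different mechanism than the paper. Both arguments get continuity from the fact that $\mathcal{K}*g$ is continuous when $\mathcal{K}\in L^q$ and $g\in L^{q/(q-1)}$. Both also split $\psi_2^{(n)}-\psi_2$ into a kernel-error term, a forcing-error term and a nonlinear term.

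The difference is in the nonlinear term. The paper never keeps $|\psi_2^{(n)}-\psi_2|$ on the right-hand side. It bounds that contribution in sup-norm by $C\|\mathcal{K}\|_{L^q}\|\psi^{(n)}-\psi\|_{L^{2q'}}$, with $q'=q/(q-1)$. It then shows this tends to zero by interpolating the $L^2$ convergence from Theorem~\ref{thm:stability-Lq} against the uniform $L^\infty$ bounds, via $\|\psi_2^{(n)}-\psi_2\|_{L^{2q'}}\le\|\psi_2^{(n)}-\psi_2\|_{L^2}^{1/q'}\|\psi_2^{(n)}-\psi_2\|_{L^\infty}^{1-1/q'}$. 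Two hypotheses are used there. The condition $\sup_n[\mathcal{K}^{(n)}]_{\eta,2,T}<\infty$ feeds the $W^{\eta,2}$-compactness behind Theorem~\ref{thm:stability-Lq}. The restriction $q\ge 3$ ensures $2q'\le q$, so the a-priori $L^{2q'}$ bounds apply.

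Your resolvent/Gr\"onwall closure bypasses Theorem~\ref{thm:stability-Lq} entirely. This has three consequences:
\begin{itemize}
\item It is self-contained.
\item Once the $L^\infty$ bounds are assumed, it uses neither the fractional Sobolev hypothesis nor $q\ge3$. Those bounds are what make your Lipschitz constant uniform in $n$ and make the remainder in the resolvent iteration vanish.
\item It is quantitative: $\sup_{t\le T}|\psi_2^{(n)}(t)-\psi_2(t)|\le(1+\|r_C\|_{L^1([0,T])})\varepsilon_n$, with $\varepsilon_n$ explicit in $\|\mathcal{K}^{(n)}-\mathcal{K}\|_{L^q}$, $\|f_2^{(n)}-f_2\|_{L^{q'}}$ and $\|f_1^{(n)}-f_1\|_{L^{q'}}$.
\end{itemize}
The paper's route is shorter given Theorem~\ref{thm:stability-Lq}, but it is only qualitative, since it rests on compactness plus uniqueness.

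Your closing remark is slightly off. You suggest the $L^2$ convergence from Theorem~\ref{thm:stability-Lq} could replace the uniform-boundedness step. In the paper's argument it is combined with the $L^\infty$ bounds through the interpolation above, not substituted for them.
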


 The proof is given in Appendix~\ref{sec:proofthm25}. For the Volterra kernel, such an approximation could be based, for example, on a discretisation of the associated Bernstein measure. Approximations in $\mu_1, \mu_2$ could be based on approximations of $\mu_1, \mu_2$ through Dirac measures. Finally, concerning continuity of solutions on $(0,\infty)$, the assumption $\mathcal{K} \in L_{\mathrm{loc}}^3(\R_+)$ may be too restrictive, but might be weakened through analytical bootstrapping arguments, see \cite[Section 2]{FKW26}.

\subsection{Generalised affine transformation formula}\label{sec:affine-transform-formula}

Let us define $X_t := (\log S_t,\nu_t)^\top$. Then $X$ is an affine Volterra process in the sense of \cite{abi2019affine}. In particular, it admits a semi-explicit Fourier--Laplace transform in terms of a solution $(\psi_1, \psi_2)$ to a Volterra--Riccati equation as obtained in~\cite[Section 7]{abi2019affine}, see also \cite{ElEuchRosenbaum2019}. Following \cite{friesen2024volterra}, we provide an extension of the affine transformation formula to the case of measure-valued inputs $\mu = (\mu_1, \mu_2) \in \mathcal{M}_{\mathrm{ad}}$. The latter allows us to compute the Fourier--Laplace transform at discrete time points.

\begin{theorem}\label{cor:CorsolRicVolmeasure}
    Fix $T > 0$ and recall that $X_t = (\log S_t, \nu_t)^{\top}$. Let $\mu=(\mu_1,\mu_2)\in\mathcal{M}_{\mathrm{ad}}$. Then, for all $t\in[0,T]$,
\begin{align*}
    \mathbb{E}\left[\exp\left(\int_{[0,T]}\langle X_{T-s},\mu(\mathrm{d}s)\rangle\right)\middle|\mathcal{F}_t\right]
    &= \exp\Bigg( \int_{[0,T]} \langle \mathbb{E}[X_{T-s}\mid\mathcal{F}_t],\mu(\mathrm{d}s)\rangle 
    \\  &\qquad + \tfrac{1}{2}\int_0^{T-t}\psi(s)\, \mathbb{E}[\nu_{T-s}\mid\mathcal{F}_t]\,
        A^2\psi(s)^\top\,\mathrm{d}s
    \Bigg),
	\end{align*}
	where $\psi = (\psi_1, \psi_2)$ is the solution to~\eqref{eq:phi1} and~\eqref{eq:VolRic}, and $A^2 = \begin{pmatrix}1 & \rho \sigma \\ \rho \sigma & \sigma^2 \end{pmatrix}$. 
\end{theorem}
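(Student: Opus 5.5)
The plan follows the standard martingale argument for affine Volterra processes (as in \cite{abi2019affine}, Thm.~4.3, and \cite{friesen2024volterra} for measure inputs). For $t\in[0,T]$ define
\[
  Y_t := \int_{[0,T]} \langle \mathbb{E}[X_{T-s}\mid\mathcal{F}_t],\mu(\mathrm{d}s)\rangle + \tfrac12\int_0^{T-t}\psi(s)\,\mathbb{E}[\nu_{T-s}\mid\mathcal{F}_t]\,A^2\psi(s)^\top\,\mathrm{d}s,
\]
and set $M_t:=\exp(Y_t)$. At $t=T$, we have $\mathbb{E}[X_{T-s}\mid\mathcal{F}_T]=X_{T-s}$ and the integral term vanishes, so $M_T=\exp(\int\langle X_{T-s},\mu(\mathrm{d}s)\rangle)$. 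The claim therefore reduces to showing that $M$ is a true martingale on $[0,T]$.

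First I compute the conditional means explicitly. For $u\geq t$, the log-price component satisfies $\mathbb{E}[\log S_u\mid\mathcal{F}_t]=\log S_{u\wedge t}+\int_{u\wedge t}^{u}(r-\tfrac12\mathbb{E}[\nu_v\mid\mathcal{F}_t])\,\mathrm{d}v$. The forward variance is $\xi_t(u):=\mathbb{E}[\nu_u\mid\mathcal{F}_t]$, given through the resolvent of $\kappa\mathcal{K}$. Its dynamics are $\mathrm{d}\xi_t(u)=\mathcal{R}_\kappa(u-t)\sigma\sqrt{\nu_t}\,\mathrm{d}W^\nu_t$, where $\mathcal{R}_\kappa$ is the resolvent of $\kappa\mathcal{K}$. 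I then apply Fubini together with the stochastic Fubini theorem. Both are justified by the moment bounds $\sup_{t\le T}\mathbb{E}[\nu_t^q]<\infty$, by $|\mu|([0,T])<\infty$, and by $\psi_2\in L^2$ from Theorem~\ref{thm:solRicVolmeasure}. This gives the semimartingale decomposition $\mathrm{d}Y_t=b_t\,\mathrm{d}t+\sqrt{\nu_t}\,(\psi_1(T-t),\tilde\psi_2(T-t))\,(\mathrm{d}W^S_t,\sigma\,\mathrm{d}W^\nu_t)^\top$.

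The key algebraic step is to identify the diffusion coefficient of $\nu$ in $Y$ with $\psi_2(T-t)$. This is where the Riccati--Volterra equation \eqref{eq:VolRic} enters. Collecting the terms of the form $\int\mathcal{R}_\kappa(\cdot)\,\mu_2$ and $\int\mathcal{R}_\kappa(\cdot)\,(\tfrac12\psi_1^2-\tfrac12\psi_1+\ldots)$, the resolvent identity converts the equation with $-\kappa x_2$ inside $R$ into an equation with kernel $\mathcal{R}_\kappa$. The drift $b_t$ should then equal $-\tfrac12\nu_t\,\psi(T-t)A^2\psi(T-t)^\top$. This comes from differentiating the upper limit $T-t$ in the second term and from the time derivative of $\xi_t(u)$, and it cancels exactly with the quadratic variation term in Itô's formula for $\exp(Y)$. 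The conclusion is that $M$ is a local martingale, $\mathrm{d}M_t=M_t\sqrt{\nu_t}\,\psi(T-t)\,\mathrm{d}(W^S,\sigma W^\nu)^\top$. Because $\psi_2$ is only in $L^2_{\mathrm{loc}}$ and the input $\mu$ may contain atoms (Dirac masses at the monitoring dates), I would verify this identity first for smooth densities $\mu$ and $\mathcal{K}\in L^3$. I would then pass to general $\mu$ through the stability result Theorem~\ref{thm:stability-Lq}, using $L^2$ convergence of $\psi^{(n)}$ and dominated convergence in the conditional expectations.

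The main obstacle is upgrading the local martingale to a true martingale. Admissibility gives $\Re\mu_1([0,t])\in[0,1]$ and $\Re\mu_2\le 0$. By Theorem~\ref{thm:solRicVolmeasure}, $\psi_2$ takes values in $\mathbb{C}_-$, and $\Re\psi_1\in[0,1]$. I would use these to show that $\Re Y_t\le \Re\psi_1$-weighted log-price terms. Concretely, I expect $|M_t|\le \exp(\int\Re\mu_1\,\mathbb{E}[\log S\mid\mathcal{F}_t]+\text{nonpositive terms})$, which is bounded by a convex combination of conditional expectations of $S_u$ via Jensen. This yields domination by a uniformly integrable family built from the martingale $\E^{-rt}S_t$. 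Should that bound be too crude, the fallback is the approach of \cite{abi2019affine}, Thm.~7.1. There one shows $\Re\big(\psi A^2\psi^\top/2\big)\le\Re R$-type inequalities, so that $|M|$ is a nonnegative local martingale, hence a supermartingale, and a localisation argument combined with the $L^p$ bounds of Theorem~\ref{thm:solRicVolmeasure} gives $\mathbb{E}|M_T|$ equal to the required limit. Once true martingality holds, $M_t=\mathbb{E}[M_T\mid\mathcal{F}_t]$, which is exactly the claimed formula.
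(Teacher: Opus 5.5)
\paragraph{Where the argument breaks.} Your reduction to showing that $M=\exp(Y)$ is a true martingale is fine, and so is the local martingale step via forward-variance dynamics and the resolvent form of \eqref{eq:VolRic}; both run parallel to the paper (Lemma~\ref{lem:VolRicequiv}, Theorem~\ref{thm:expafftransf}). The gap is in the step you flag as the main obstacle.

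Your primary bound is false. In the conditional-expectation form of $Y_t$, the term $\tfrac12\int_0^{T-t}\psi(s)\,\xi_t(T-s)A^2\psi(s)^\top\,\mathrm{d}s$ is not nonpositive: for real $\psi$ one has $\psi A^2\psi^\top=(\psi_1+\rho\sigma\psi_2)^2+(1-\rho^2)\sigma^2\psi_2^2\geq0$. Concretely, take $\mu_1=w\delta_0$ with $w\in(0,1)$ and $\mu_2=0$. Conditional Jensen gives $M_t=\mathbb{E}[S_T^w\mid\mathcal{F}_t]\geq\exp(w\,\mathbb{E}[\log S_T\mid\mathcal{F}_t])$, so your expected inequality $|M_t|\le\exp(\int\Re\mu_1\,\mathbb{E}[\log S\mid\mathcal{F}_t]+\text{nonpositive terms})$ points the wrong way. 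Moreover, $\mathcal{M}_{\mathrm{ad}}$ only constrains $\Re\mu_1([0,t])\in[0,1]$, so $\Re\mu_1$ may be signed and there is no convex combination for Jensen.

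The fallback does not close the gap either. Write $\mathrm{d}Y_t=b_t\,\mathrm{d}t+c_t\,\mathrm{d}W_t$ with $W=(W^S,W^I)$ and $b_t=-\tfrac12c_tc_t^\top$. Then $|M|=\exp(\Re Y)$ has drift $\tfrac12|\Im c_t|^2|M_t|\geq0$, so it is a local \emph{sub}martingale, not a local martingale. Neither a supermartingale property nor the deterministic $L^p$ bounds on $\psi$ gives uniform integrability of $M$ along a localising sequence.

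\paragraph{The missing idea.} You need to rewrite $Y_t$ pathwise so that the quadratic and forward-variance terms are absorbed through the Riccati--Volterra equation; this is Theorem~\ref{thm:Yexpress}. Using the resolvent of the first kind $L$, $Y_t$ becomes the sum of $\int_{(T-t,T]}\langle X_{T-s},\mu(\mathrm{d}s)\rangle$, $g(T-t)$, $\psi_1(T-t)\log S_t$, $(\Delta_{T-t}\psi_2\ast L)(0)\nu_t$, $-\pi_{T-t}(t)\nu_0$ and $(\mathrm{d}\pi_{T-t}\ast\nu)(t)$.

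Because $\Re\psi_2\le0$ and $L$ is nonnegative and nonincreasing, every $\nu$-term has nonpositive real part, and $\Re g\le rT$. The monotonicity of $L$ is Assumption~\ref{assump:K}(b), which your proposal never uses. The $\mu_1$-terms are handled by integration by parts:
$\Re\psi_1(T-t)\log S_t+\int_{(T-t,T]}\log S_{T-s}\,\Re\mu_1(\mathrm{d}s)=\Re\psi_1(T)\log S_0+\int_0^t\Re\psi_1(T-s)\,\mathrm{d}\log S_s$.
This needs only $\Re\psi_1\in[0,1]$, not $\Re\mu_1\geq0$, and $(\Re\psi_1)^2\le\Re\psi_1$ absorbs the $-\tfrac12\nu$ drift.

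The result is $|M_t|\le\mathrm{e}^{2rT}S_0^{\Re\psi_1(T)}\exp(U_t-\tfrac12\langle U\rangle_t)$ with $U=\int_0^\cdot\Re\psi_1(T-s)\sqrt{\nu_s}\,\mathrm{d}W^S_s$. The right-hand side is a true martingale by \cite[Lem.~7.3]{abi2019affine}. Domination by a martingale puts $M$ in class (D), so $M$ is a martingale.

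\paragraph{Regularisation step.} Your plan to pass to general $\mu$ cannot use Theorem~\ref{thm:stability-Lq} when $\mu_2$ has atoms. That theorem requires $|\mu_2-\mu_2^{(n)}|([0,T])\to0$, whereas $|\delta_a-f\,\mathrm{d}s|([0,T])\geq1$ for every density $f$. The paper avoids approximation entirely by working with measure inputs directly.
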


The proof of Theorem~\ref{cor:CorsolRicVolmeasure} is given in the appendix~\ref{sec:proof-affine-formula}. This formula covers the case of continuously monitored time points when $\mu(\mathrm{ds}) = u \delta_0(\mathrm{d}s) + f(s)\mathrm{d}s$, compare with \cite[Thm. 4.3]{aichinger2024pricing}, but also provides semi-explicit formulas for a fixed discrete monitoring grid by choosing $\mu_1$ as a linear combination of Dirac measures and $\mu_2 = 0$. Below, we provide a particular case that is essential for the pricing of different geometric Asian-type payoffs. 

Let $g$ be a finite nonnegative Borel measure on $[0,T]$ with $g([0,T])=1$. Define for $t \in [0,T]$
\begin{align}\label{eq:Gg}
    G_{t,T} := \exp\left( \int_{(t,T]} \log(S_{u})\, g(\mathrm{d}u) \right) \ \text{ and } \ G_t := \exp\left( \int_{[0,t]} \log(S_u)\, g(\mathrm{d}u) \right).
\end{align}
Note that $G_{T} = G_{t}G_{t,T}$. An application of the arithmetic-geometric mean inequality gives $G_{T} \leq \int_{[0,T]}S_{u}\, g(\mathrm{d}u)$ and hence $\mathbb{E}[ G_{T} ] \leq \int_{[0,T]} \mathbb{E}[S_{u}]\, g(\mathrm{d}u) = S_0 \int_{[0,T]} \E^{ru}\, g(\mathrm{d}u) < \infty$, since $(\E^{-ru}S_u)_{u \in [0,T]}$ is a martingale. Similarly, we see that also $\mathbb{E}[G_{t,T}] < \infty$. Below, we compute the joint Fourier--Laplace transform of $\log G_{t,T}$ and $\log S_T$ conditioned on $\mathcal{F}_{t}$ for the Volterra--Heston stochastic volatility model. This will be our key tool when formulating semi-explicit pricing formulas. 

\begin{corollary}\label{thm:condcfG_nS_Texpression}
Let $(s,w)\in\mathbb{C}^2$, $\Re(s), \Re(w) \geq 0$ be such that $\Re(w) + \Re(s) g([0,T]) \leq 1$. Define
\begin{align}\label{eq:psit}
    \Psi_t(s,w) := \mathbb{E}\left[\exp\left(s\log G_{t,T} + w\log S_T\right)\middle|\mathcal{F}_{t}\right],
\end{align}
 and let $\xi_s(t) := \mathbb{E}[\nu_t\mid\mathcal{F}_s]$ be the forward variance process. Then
\begin{align*}
    \Psi_t(s,w) &= \exp\Bigg( \log(S_t) \psi_{1,-}(T-t)  
        \\ &\qquad + r \left( s \int_{(t,T]} (y-t)\, g(\mathrm{d}y) + w(T-t) \right)
        - \frac{1}{2}\int_t^T \psi_{1,-}(T-x) \xi_t(x)\, \mathrm{d}x
        \\ &\qquad + \tfrac{1}{2}\int_t^T\Big( \psi_1(T-y)^2 + 2\rho\sigma \psi_1(T-y)\psi_2(T-y) + \sigma^2\psi_2(T-y)^2 \Big)\,\xi_t(y)\,\mathrm{d}y \Bigg),
\end{align*}
where $\psi=(\psi_1,\psi_2)$ is given by $\psi_1(t) = s g([T-t, T]) + w$, $\psi_{1,-}(t) = s g((T-t, T]) + w$, and $\psi_2$ satisfies \eqref{eq:VolRic} with $\mu_2 = 0$.
\end{corollary}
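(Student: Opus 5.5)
Our plan is to obtain the corollary as a specialisation of Theorem~\ref{cor:CorsolRicVolmeasure}, with a suitably chosen measure $\mu$. Then we make the conditional means $\mathbb{E}[X_{T-u}\mid\mathcal{F}_t]$ explicit using the dynamics of $\log S$ and the forward variance curve.

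\textbf{Choice of $\mu$.} Let $\tilde g$ be the push-forward of $g|_{(t,T]}$ under the time reversal $u\mapsto T-u$, so that $\tilde g$ is supported in $[0,T-t)$. We set
\[
\mu_1 := s\,\tilde g + w\,\delta_0, \qquad \mu_2 := 0.
\]
With this choice,
\[
\int_{[0,T]}\langle X_{T-u},\mu(\mathrm{d}u)\rangle = s\log G_{t,T}+w\log S_T .
\]
Admissibility $\mu\in\mathcal{M}_{\mathrm{ad}}$ holds because $\Re\mu_1([0,v])=\Re(w)+\Re(s)\,\tilde g([0,v])\in[0,\Re w+\Re s\, g([0,T])]\subset[0,1]$. (The statement writes $\psi_1(v)=s g([T-v,T])+w$ using all of $g$, but on $[0,T-t]$ only the mass of $g$ on $(t,T]$ enters the $\mathcal{F}_t$-conditional computation. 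I would either restrict $g$ to $(t,T]$ first, or check that the difference only affects the range $v>T-t$, which lies outside the integral.) Then $\psi_1(v)=\mu_1([0,v])$ is as stated, and $\psi_2$ solves \eqref{eq:VolRic} with $\mu_2=0$ by Theorem~\ref{thm:solRicVolmeasure}.

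\textbf{Computing the first-order term.} We have $\int\langle \mathbb{E}[X_{T-u}\mid\mathcal{F}_t],\mu(\mathrm{d}u)\rangle=\int \mathbb{E}[\log S_{T-u}\mid\mathcal{F}_t]\,\mu_1(\mathrm{d}u)$. For $y\ge t$, Itô's formula gives
\[
\mathbb{E}[\log S_y\mid\mathcal{F}_t]=\log S_t+r(y-t)-\tfrac12\int_t^y\xi_t(x)\,\mathrm{d}x ,
\]
where the stochastic integral has zero conditional mean since $\sup_{u\le T}\mathbb{E}[\nu_u]<\infty$. Integrating this against $s\,g|_{(t,T]}(\mathrm{d}y)+w\,\delta_T(\mathrm{d}y)$ gives three contributions:
\begin{itemize}
\item the term $\log S_t\,(s g((t,T])+w)=\log S_t\,\psi_{1,-}(T-t)$;
\item the drift term $r\bigl(s\int_{(t,T]}(y-t)\,g(\mathrm{d}y)+w(T-t)\bigr)$;
\item the term $-\tfrac12\int_t^T\xi_t(x)\bigl(s g([x,T])+w\bigr)\mathrm{d}x$, obtained by Fubini.
\end{itemize}
In the third term, $g([x,T])$ and $g((x,T])$ differ only at the countably many atoms of $g$, a Lebesgue-null set. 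Since $s g([x,T])+w=\psi_{1,-}(T-x)$ up to such a set, this yields $-\tfrac12\int_t^T\psi_{1,-}(T-x)\xi_t(x)\,\mathrm{d}x$.

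\textbf{Quadratic term.} Substituting $y=T-v$ in $\tfrac12\int_0^{T-t}\psi(v)\,\xi_t(T-v)\,A^2\psi(v)^\top\,\mathrm{d}v$ and expanding $\psi A^2\psi^\top=\psi_1^2+2\rho\sigma\psi_1\psi_2+\sigma^2\psi_2^2$ gives exactly the last line of the statement.

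\textbf{Main obstacle.} I expect the main difficulty to be in the bookkeeping rather than the analysis. Two points need care: the left- versus right-closed intervals at atoms of $g$ (in particular monitoring dates coinciding with $t$, which belong to $G_t$ and not to $G_{t,T}$), and the justification that the conditional formula of Theorem~\ref{cor:CorsolRicVolmeasure} applies when the measure only charges times after $t$. Integrability of $\exp(s\log G_{t,T}+w\log S_T)$ follows from the AM--GM bound noted before the corollary together with Hölder's inequality, since the real exponents sum to at most one.
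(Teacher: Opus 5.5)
Your proposal is correct and follows essentially the paper's route: specialise Theorem~\ref{cor:CorsolRicVolmeasure} with $\mu_1 = s\cdot(\text{time-reversed } g) + w\,\delta_0$ and $\mu_2 = 0$, compute $\mathbb{E}[\log S_y\mid\mathcal{F}_t]$ from the dynamics of $\log S$ using Fubini (the atoms of $g$ are Lebesgue-null), and expand $\psi A^2\psi^\top$. The only difference is that the paper pushes forward all of $g$, applies the theorem to $s\log G_T + w\log S_T$ and divides out the $\mathcal{F}_t$-measurable factor $G_t^{s}$, so $\mu$ is independent of $t$ and $\psi_1(v) = s\,g([T-v,T]) + w$ comes out exactly as stated; your truncation to $g|_{(t,T]}$ instead needs the extra (easy) remark that, by the Volterra structure of \eqref{eq:VolRic} and uniqueness on $[0,T-t]$, $\psi_2$ on $[0,T-t]$ depends only on $\psi_1|_{[0,T-t]}$.
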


The proof is given in the appendix~\ref{sec:proof-CFform}. We close this section with the central examples for continuous and discretely monitored Asian options.

\begin{example}[continuous monitoring]\label{example:continuous}
    Suppose that $g_c(\mathrm{d}t) = \frac{1}{T}\mathrm{d}t$. Then 
    $$
        g_c((t,T]) = \frac{T-t}{T}, \quad \int_{(t,T]}(y-t)\, g_c(\mathrm{d}y) = \frac{(T-t)^2}{2T}, \quad g_c([T-t,T]) = \frac{t}{T}.
    $$
\end{example} 
\begin{example}[discrete monitoring]\label{example:discrete}
    Let $0 = t_0 < \dots < t_N = T$ be a fixed discrete monitoring grid with $N \geq 1$, and define
    $$
      g_{N}(\mathrm{d}t) =  \frac{1}{N+1}\sum_{j=0}^N \delta_{t_j}(\mathrm{d}t).
    $$
	For $t \in [0,T]$, let us define $n_t \in \{0, \dots, N-1\}$ by $t_{n_t}\leq t<t_{n_t+1}$ and $n_T=N$. Then 
    $$
        g_N((t,T]) = \frac{N - n_t}{N+1}, \quad g_N([T-t, T]) = \frac{ N - n_{T-t} + \mathbbm{1}_{\{t_{n_{T-t}} = T-t\}}}{N+1}
    $$
    and 
    $$
        \int_{(t,T]}(y-t)\, g_N(\mathrm{d}y) = \frac{1}{N+1}\sum_{j = n_t + 1}^N (t_j - t).
    $$
\end{example} 

Note that similar formulas can also be obtained for \textit{continuous weighted} averages $g(\mathrm{d}t) = g(t)\mathrm{d}t$ with $g(t)\ge0$ and $\int_0^T g(t)\, \mathrm{d}t = 1$, and \textit{discrete weighted} averages on a fixed grid $0 = t_0 < t_1 < \dots < t_N = T$ with weights $c_0, \dots, c_N \geq 0$, where $g(\mathrm{d}t) = \sum_{j=0}^{N}c_j \delta_{t_j}(\mathrm{d}t)$ with $\sum_{j=0}^{N}c_j = 1$.

\section{Pricing \texorpdfstring{$g$}{g}-monitored Asian Options}\label{sec:pricing}

\subsection{Semi-analytic pricing formulas}\label{sec:semi-closed-pricing}

We provide a model-independent pricing formula for $g$-monitored geometric Asian options, which covers the Volterra--Heston model as a particular case. Let $T > 0$ and let $(S_t)_{t \in [0,T]}$ be a continuous semimartingale on $(\Omega, \mathcal{F}, (\mathcal{F}_t)_{t \in [0,T]}, \mathbb{P})$ such that $\mathbb{P}(S_t > 0, \ \forall t \in [0,T]) = 1$, and $(\E^{-r t}S_t)_{t\in[0,T]}$ is a $\mathbb{P}$-martingale. Let $g$ be a Borel probability measure on $[0,T]$, and recall that $G_t, G_{t,T}$ are the geometric averages defined in~\eqref{eq:Gg}. For $(s,w)\in\mathbb{C}^2$, let $\Psi_t(s,w) := \mathbb{E}\left[ G_{t,T}^s S_T^w \,\middle|\,\mathcal{F}_t \right]$ be given as in \eqref{eq:psit}, whenever the conditional expectation is well-defined. In the Volterra--Heston model, this transform is given explicitly by Corollary~\ref{thm:condcfG_nS_Texpression}.

We derive a unified pricing formula for a class of payoffs that contains fixed- and floating-strike geometric Asian calls and puts as particular examples. Let $\mathcal{S}_R := \{R+\mathrm{i}y:y\in\mathbb{R}\}$ denote the vertical line with fixed real part $R\in(0,1)$. We denote by $\lambda_R$ the Lebesgue measure along $\mathcal{S}_R$, that is, $\int_{\mathcal{S}_R} f(z)\,\lambda_R(\mathrm{d}z) := \int_{\mathbb{R}} f(R+\mathrm{i}y)\,\mathrm{d}y$ whenever the integral is well-defined. Suppose that the payoff function $h:\mathbb{R}_+^2\longrightarrow\mathbb{R}$ admits a representation of the form
\begin{align}\label{eq:bromwich-general-payoff}
    h(x,y) &= a+bx+cy + \int_{\mathcal{S}_R}x^z\,\zeta_0(\mathrm{d}z) + \int_{\mathcal{S}_R}x^z y^{1-z}\,  \zeta_1(\mathrm{d}z), \qquad x,y>0,
\end{align}
where $a,b,c\in\mathbb{R}$ and $\zeta_0,\zeta_1$ are finite complex Borel measures on $\mathcal{S}_R$. We assume that the integrals on the right-hand side of \eqref{eq:bromwich-general-payoff} are convergent and real-valued. The corresponding geometric Asian payoff is given by $h(G_T,S_T)$, and contains the usual fixed- and floating-strike options as demonstrated in the next example.

\begin{example}\label{example:putcall}
 Define
\begin{align}\label{eq:zeta-fixed-floating}
    \zeta_K(\mathrm{d}z) = \frac{K^{1-z}}{2\pi z(z-1)} \lambda_R(\mathrm{d}z) \quad \text{ and } \quad  \zeta_{\mathrm{fl}}(\mathrm{d}z) = \frac{1}{2\pi z(z-1)} \lambda_R(\mathrm{d}z).
\end{align}
Since $R\in(0,1)$ and $|K^{1-z}| = K^{1-R}$ on $\mathcal{S}_R$, the densities in \eqref{eq:zeta-fixed-floating} have no poles on $\mathcal{S}_R$ and decay as $|\operatorname{Im}(z)|^{-2}$ when $\operatorname{Im}(z) \to \infty$. Thus both $\zeta_K$ and $\zeta_{\mathrm{fl}}$ are finite complex measures. A standard Mellin--Bromwich inversion, or equivalently a residue calculation, gives
\begin{align}\label{eq:bromwich-min-fixed}
    \int_{\mathcal{S}_R} x^z\,\zeta_K(\mathrm{d}z) = -\min\{x,K\} \ \text{ and } \ \int_{\mathcal{S}_R} x^z y^{1-z}\, \zeta_{\mathrm{fl}}(\mathrm{d}z) = -\min\{x,y\}.
\end{align}
Hence the four standard geometric Asian payoffs arise from
\eqref{eq:bromwich-general-payoff} as follows:
\begin{center}
\begin{tabular}{c|c|c|c|c|c}
    Payoff & $a$ & $b$ & $c$ & $\zeta_0$ & $\zeta_1$
    \\ \hline $(x-K)^+$ & $0$ & $1$ & $0$ & $\zeta_K$ & $0$
    \\ $(K-x)^+$ & $K$ & $0$ & $0$ & $\zeta_K$ & $0$ 
    \\ $(x-y)^+$ & $0$ & $1$ & $0$ & $0$ & $\zeta_{\mathrm{fl}}$ 
    \\ $(y-x)^+$ & $0$ & $0$ & $1$ & $0$ & $\zeta_{\mathrm{fl}}$
\end{tabular}
\end{center}
In particular, setting $(x,y)=(G_T,S_T)$ yields respectively the fixed-strike call, fixed-strike put, floating-strike call, and floating-strike put.
\end{example}

Given $h$ as in \eqref{eq:bromwich-general-payoff}, the risk-neutral price of the payoff $h(G_T, S_T)$ at time $t\in[0,T]$ is given by
\begin{align}\label{eq:general-payoff-price-definition}
    H_t^h := \mathrm{e}^{-r(T-t)} \mathbb{E}\left[ h(G_T,S_T) \,\middle|\, \mathcal{F}_t \right].
\end{align}
The next theorem provides a semi-closed pricing formula. Its proof is given in Appendix~\ref{sec:proof-pricing-formula}.

\begin{theorem}\label{thm:general-geometric-payoff-pricing}
    Let $R\in(0,1)$ and let $h$ admit the representation \eqref{eq:bromwich-general-payoff}, where $\zeta_0$ and $\zeta_1$ are finite complex Borel measures on $\mathcal{S}_R$. Then, for every $t\in[0,T]$,
    \begin{align}\label{eq:general-geometric-payoff-transform}
        H_t^h &= \mathrm{e}^{-r(T-t)}a + \mathrm{e}^{-r(T-t)} bG_t\Psi_t(1,0) + cS_t
        \nonumber
        \\ &\qquad + \mathrm{e}^{-r(T-t)} \int_{\mathcal{S}_R} G_t^z\Psi_t(z,0)\, \zeta_0(\mathrm{d}z) + \mathrm{e}^{-r(T-t)} \int_{\mathcal{S}_R} G_t^z\Psi_t(z,1-z)\, \zeta_1(\mathrm{d}z).
    \end{align}
    The two integrals in \eqref{eq:general-geometric-payoff-transform} are absolutely convergent almost surely.
\end{theorem}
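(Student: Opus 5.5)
The plan is to plug the representation \eqref{eq:bromwich-general-payoff} in with $(x,y)=(G_T,S_T)$, take conditional expectations term by term, and justify swapping the conditional expectation with the $\zeta_0,\zeta_1$ integrals by a conditional Fubini argument. The linear terms are routine. The constant $a$ gives $\mathrm{e}^{-r(T-t)}a$. The term $cS_T$ gives $cS_t$ because $\mathrm{e}^{-rt}S_t$ is a martingale. For $bG_T$ we write $G_T=G_tG_{t,T}$, use that $G_t$ is $\mathcal{F}_t$-measurable and positive, and get $\mathbb{E}[G_T\mid\mathcal{F}_t]=G_t\Psi_t(1,0)$. Integrability here follows from the AM--GM bound $G_{t,T}\le \int_{(t,T]}S_u\,g(\mathrm{d}u)/g((t,T])$ when $g((t,T])>0$, together with the martingale property. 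If $g((t,T])=0$, then $G_{t,T}=1$ and there is nothing to prove.

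For the $\zeta_0$ term, on $\mathcal{S}_R$ we have
\[
|G_T^z| = G_T^R \le 1+G_T .
\]
Since $\zeta_0$ is finite,
\[
\mathbb{E}\!\left[\int_{\mathcal{S}_R}|G_T^z|\,|\zeta_0|(\mathrm{d}z)\right]\le |\zeta_0|(\mathcal{S}_R)\,(1+\mathbb{E}[G_T])<\infty .
\]
We then apply Fubini for conditional expectations to the product space $\Omega\times\mathcal{S}_R$ with the jointly measurable integrand $(\omega,z)\mapsto G_T(\omega)^z$. To do this rigorously we take a regular version of the conditional law of $(G_t,G_{t,T},S_T)$ given $\mathcal{F}_t$ (all three live on Polish spaces), so that $\mathbb{E}[\int \cdot\,\mathrm{d}\zeta_0\mid\mathcal{F}_t]=\int\mathbb{E}[\cdot\mid\mathcal{F}_t]\,\mathrm{d}\zeta_0$ a.s. We then pull out $G_t^z$, which is $\mathcal{F}_t$-measurable, to obtain $G_t^z\Psi_t(z,0)$. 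Here we use the version of $\Psi_t(z,0)$ obtained from the regular conditional distribution, so that it is jointly measurable in $(\omega,z)$.

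The $\zeta_1$ term is handled the same way. The bound is
\[
|G_T^zS_T^{1-z}| = G_T^RS_T^{1-R}\le RG_T+(1-R)S_T ,
\]
by the weighted AM--GM (Young) inequality, and the right-hand side is integrable. Factorising gives
\[
G_T^zS_T^{1-z}=G_t^z\,G_{t,T}^zS_T^{1-z},
\]
whose conditional expectation is $G_t^z\Psi_t(z,1-z)$.

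Absolute convergence a.s. of the two integrals in \eqref{eq:general-geometric-payoff-transform} comes from the same bounds. First, $|G_t^z\Psi_t(z,0)|\le \mathbb{E}[G_T^R\mid\mathcal{F}_t]$, which is finite a.s. and independent of $z$. Likewise $|G_t^z\Psi_t(z,1-z)|\le \mathbb{E}[RG_T+(1-R)S_T\mid\mathcal{F}_t]$. Integrating these bounds against the finite measures $|\zeta_0|,|\zeta_1|$ gives finite values a.s.

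The main obstacle is the measurability and version issue in the conditional Fubini step. The fix is to define $\Psi_t(\cdot,\cdot)$ through a regular conditional distribution and check that it agrees a.s. with the conditional expectation for each fixed $z$. The rest is bookkeeping.
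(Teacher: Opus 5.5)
Your proof is correct and is essentially the paper's argument. It uses the same dominating bounds $G_T^R\le 1+G_T$ and $G_T^RS_T^{1-R}\le RG_T+(1-R)S_T$, conditional Fubini, the factorisation $G_T=G_tG_{t,T}$, and the martingale property for the $cS_T$ term; your regular-conditional-distribution construction only makes explicit the joint measurability in $(\omega,z)$ that the paper's appeal to conditional Fubini leaves implicit.

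One small slip: with $m=g((t,T])$, Jensen gives $G_{t,T}\le\bigl(m^{-1}\int_{(t,T]}S_u\,g(\mathrm{d}u)\bigr)^{m}$, not the bound without the exponent $m$. You do not need that bound anyway: $G_T\le\int_{[0,T]}S_u\,g(\mathrm{d}u)$ is integrable and $G_t>0$ is $\mathcal{F}_t$-measurable, so $\mathbb{E}[G_T\mid\mathcal{F}_t]=G_t\Psi_t(1,0)$ follows directly.
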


\subsection{Convergence to continuously monitored Asian options}\label{secconvtocont}

In this section, we study the convergence of discretely monitored Asian call options 
to their continuous counterparts. Let $\mathcal{P}_T^N=\{t_0,t_1,\dots,t_N\}$ with $0=t_0<t_1<\dots<t_N=T$ be a monitoring grid and define its mesh size by 
\[
    \Delta_N:=\max_{j\in\{0,\dots,N-1\}}(t_{j+1}-t_j).
\]
Define the weights
\[
    \omega_j^N:=t_{j+1}-t_j,\qquad j=0,\dots,N-1, \qquad \omega_N^N:=t_N-t_{N-1},
\]
and the normalising constant
\[
    T_N:=\sum_{j=0}^N\omega_j^N=T+\omega_N^N.
\]
The corresponding probability measure of monitored time points is
\begin{align}\label{eq:gN-delta-general}
    g_N^\Delta(\mathrm{d}t) := \frac{1}{T_N}\sum_{j=0}^N\omega_j^N\delta_{t_j}(\mathrm{d}t).
\end{align}
The additional terminal weight is included so that on an equidistant grid \(g_N^\Delta\) coincides with the standard equal-weight monitoring measure introduced in the introduction. For $t<T$, recall that $n_t\in\{0,\dots,N-1\}$ is defined by $t_{n_t}\leq t<t_{n_t+1}$, and $n_T=N$. Integration with respect to $g_N^\Delta$ is then given by
\[
    \int_{(t,T]}f(r)\,g_N^\Delta(\mathrm{d}r) = \frac{1}{T_N}\sum_{j=n_t+1}^{N}\omega_j^N f(t_j),
    \qquad \int_{[0,t]}f(r)\,g_N^\Delta(\mathrm{d}r) = \frac{1}{T_N}\sum_{j=0}^{n_t}\omega_j^N f(t_j).
\]
Let $G_T^N$ and $G_T^c$ denote the discretely and continuously sampled
geometric averages
\[
    G_T^N := \exp\left( \frac{1}{T_N}\sum_{j=0}^{N}\omega_j^N\log(S_{t_j}) \right),
    \qquad G_T^c := \exp\left( \frac{1}{T}\int_0^T\log(S_t)\,\mathrm{d}t \right).
\]
Remark that, for an equidistant grid $t_j = j \frac{T}{N}$ we obtain $\Delta_N = \frac{T}{N}$, $\omega_j^N = \frac{T}{N}$, $T_N = T\left( 1 + \frac{1}{N}\right)$. The next proposition provides the first convergence result from discrete to continuously averaged Asian payoffs.

\begin{theorem}\label{thm:general-price-process-convergence}
    Let $(\mathcal P_T^N)_{N\geq1}$ be a sequence of monitoring grids
    such that $\Delta_N\to0$, and let $g_N^\Delta$ be defined by
    \eqref{eq:gN-delta-general}. Assume that $(\mathrm e^{-rt}S_t)_{t\in[0,T]}$ is a square-integrable martingale. Let $R\in(0,1)$ and let $h$ have representation \eqref{eq:bromwich-general-payoff}. Define
    \[
        H_t^{h,N} := \mathrm e^{-r(T-t)} \mathbb E\left[ h(G_T^N,S_T) \,\middle|\, \mathcal F_t
        \right] \ \text{ and } \ H_t^{h,c} := \mathrm e^{-r(T-t)} \mathbb E\left[ h(G_T^c,S_T) \,\middle|\, \mathcal F_t \right].
    \]
    Then 
        \begin{align}\label{eq:price-process-L2-convergence}
            \mathbb E\left[ \sup_{t\in[0,T]} |H_t^{h,N}-H_t^{h,c}|^2 \right] \longrightarrow 0.
        \end{align}
\end{theorem}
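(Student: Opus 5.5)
Writing $\Delta H_t := H_t^{h,N}-H_t^{h,c} = \mathrm e^{-r(T-t)}\mathbb E[h(G_T^N,S_T)-h(G_T^c,S_T)\mid\mathcal F_t]$, the process $(\mathrm e^{-r(T-t)})^{-1}\Delta H_t$ is a martingale in $t$. Doob's $L^2$ maximal inequality therefore gives
\[
\mathbb E\Bigl[\sup_{t\in[0,T]}|\Delta H_t|^2\Bigr]\le 4\,\mathbb E\bigl[|h(G_T^N,S_T)-h(G_T^c,S_T)|^2\bigr].
\]
This reduces the claim to $L^2$ convergence of the terminal payoffs, and I would prove that next. Square-integrability of the right-hand side follows from a linear growth bound for $h$, established below.

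The next step is a Lipschitz-type estimate for $h$ in its first argument. From \eqref{eq:bromwich-general-payoff}, the difference $h(x,y)-h(x',y)$ splits into three pieces: $b(x-x')$, $\int_{\mathcal S_R}(x^z-x'^z)\zeta_0(\mathrm dz)$, and $\int_{\mathcal S_R}(x^z-x'^z)y^{1-z}\zeta_1(\mathrm dz)$. Since $\zeta_0,\zeta_1$ are only finite measures, I would not differentiate in $z$. Instead, write $x^z-x'^z=z\int_{x'}^{x}u^{z-1}\,\mathrm du$. This gives
\[
|x^z-x'^z|\le |z|\,|\log x-\log x'|\,\max(x,x')^R\ \ (\text{or } \min(x,x')^{R}),
\]
but it introduces a factor $|z|$, which is not integrable against a general finite measure. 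To avoid this, I would use the cruder bound $|x^z-x'^z|\le \min\{2\max(x,x')^R,\ |z||\log x-\log x'|\max(x,x')^R\}$ and then split the integral into $|z|\le M$ and $|z|>M$. On $|z|>M$ the bound is $2\max(x,x')^R y^{1-R}|\zeta_1|(\{|z|>M\})$, which is small uniformly in $N$ after taking expectations, because of Hölder and the moment bounds below. On $|z|\le M$ the bound is $M|\log G_T^N-\log G_T^c|\max(G_T^N,G_T^c)^Ry^{1-R}$. A dominated-convergence argument could replace this split, but the split gives a clean $\varepsilon$--$M$ argument.

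It therefore suffices to show two things. First, $\log G_T^N-\log G_T^c\to0$ in probability. Second, uniform integrability of $|G_T^N|^2$, $|G_T^c|^2$, $S_T^2$ and of the cross terms $\max(G_T^N,G_T^c)^{2R}S_T^{2(1-R)}$. Uniform integrability follows from the AM–GM inequality as in Section~\ref{sec:affine-transform-formula}: $G_T^N\le\int S_u\,g_N^\Delta(\mathrm du)\le \sup_{u\le T}S_u$ and similarly for $G_T^c$. Doob's inequality applied to the square-integrable martingale $\mathrm e^{-rt}S_t$ gives $\mathbb E[\sup_u S_u^2]<\infty$, which dominates every term uniformly in $N$; for the cross terms, Young's inequality $a^{2R}b^{2(1-R)}\le a^2+b^2$ suffices. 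For the first point, $\log G_T^N=\frac{1}{T_N}\sum_j\omega_j^N\log S_{t_j}$ is a left-Riemann sum, up to the terminal weight, of the continuous path $u\mapsto\log S_u$. Pathwise continuity together with $\Delta_N\to0$ and $T_N\to T$ then yields almost sure convergence to $\frac1T\int_0^T\log S_u\,\mathrm du$, using that $\omega_N^N\le\Delta_N\to0$. Combining this with uniform integrability (Vitali), and letting $N\to\infty$ and then $M\to\infty$, gives $\mathbb E|h(G_T^N,S_T)-h(G_T^c,S_T)|^2\to0$.

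I expect the main obstacle to be the handling of the $\zeta_0,\zeta_1$ integrals: the measures are merely finite, so the naive derivative bound fails, and one must make sure the truncation in $|z|$ interacts correctly with the moment bounds. The term $\zeta_0$ is treated identically with $y^{1-z}$ replaced by $1$; its moment bound is easier, since $x^R\le 1+x$.
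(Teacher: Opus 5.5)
Your proof is correct in outline and follows the paper's skeleton: Doob's $L^2$ inequality to reduce to $\mathbb E|D_N|^2\to0$, a.s.\ convergence of the weighted left-Riemann sums of $\log S$, and domination of $G_T^N$, $G_T^c$, $S_T$ by $S_T^*=\sup_{u\le T}S_u\in L^2$ via Jensen and Doob. Where you differ is in how you treat the contour integrals.

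The paper never uses a Lipschitz estimate in $x$. For each fixed $z\in\mathcal S_R$ it has $(G_T^N)^z\to(G_T^c)^z$ and $(G_T^N)^zS_T^{1-z}\to(G_T^c)^zS_T^{1-z}$ a.s., with $z$-independent dominants $1+S_T^*$ and $S_T^*$. Since $|\zeta_0|$ and $|\zeta_1|$ are finite, pathwise dominated convergence on $\mathcal S_R$ gives $h(G_T^N,S_T)\to h(G_T^c,S_T)$ a.s. The bound $|h(G_T^N,S_T)|\le C_h(1+S_T^*)$ then closes the argument by dominated convergence in $\Omega$.

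So the ``main obstacle'' you identify does not arise: no uniformity in $z$ is needed, only pointwise convergence plus a dominant that does not depend on $z$. Your truncation at $|z|\le M$ with the log-Lipschitz bound is valid but longer. It shows explicitly how the $|z|$-tail of $\zeta$ controls the error, which looks like a route to a rate. It does not give one for the standard payoffs, though. The densities of $\zeta_K,\zeta_{\mathrm{fl}}$ decay only like $|z|^{-2}$, so $\int|z|\,|\zeta_K|(\mathrm dz)=\infty$. This is why the paper's rate theorem assumes Lipschitz continuity of $h$ in $x$ directly.

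One step needs tightening. On $\{|z|\le M\}$ you must keep the minimum, i.e.\ use $|(x^z-x'^z)y^{1-z}|\le \max(x,x')^R y^{1-R}\min\{2,M|\log x-\log x'|\}$.

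If you drop it, as in your bound $M|\log G_T^N-\log G_T^c|\max(G_T^N,G_T^c)^R y^{1-R}$, the Vitali step needs uniform integrability of $M^2|\log G_T^N-\log G_T^c|^2(S_T^*)^2$. That does not follow from the uniform integrability of $(S_T^*)^2$ and the cross terms that you list. Only square-integrability of $S$ is assumed, and nothing controls moments of $\log S$. With the minimum kept, the truncated integral is dominated by $2|\zeta_1|(\mathcal S_R)S_T^*$ and tends to $0$ a.s., so dominated convergence applies; the same holds for the $\zeta_0$ term.

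Also, your parenthetical alternative with $\min(x,x')^R$ is not an upper bound. Writing $x^z-x'^z=\int_{\log x'}^{\log x} z\mathrm e^{zv}\,\mathrm dv$ and bounding $|\mathrm e^{zv}|=\mathrm e^{Rv}$ on the interval, only $\max(x,x')^R$ works.
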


While the previous theorem provides a general convergence result, below we provide a convergence rate of order $\Delta_N$ under minor additional assumptions.

\begin{theorem}\label{thm:lipschitz-payoff-convergence-rate}
    Let $(\mathcal P_T^N)_{N\geq1}$ be a sequence of monitoring grids with mesh size $\Delta_N\to0$, and let $g_N^\Delta$ be defined by~\eqref{eq:gN-delta-general}. Suppose that $\exists q>1$ such that
    \[
        \quad \mathbb E[S_T^{2q}]<\infty.
    \]
    Set $X_t:=\log S_t$ and suppose that $X$ admits the semimartingale decomposition $X_t=X_0+A_t+M_t$, where $A$ is a continuous finite-variation process with $A_0=0$ and $M$ is a continuous local martingale with $M_0=0$, such that
    \begin{align}\label{eq:log-semimartingale-rate-assumption}
        \big\||A|_T\big\|_{L^{2q}(\Omega)} + \big\|\langle M\rangle_T^{1/2}\big\|_{L^{2q}(\Omega)} < \infty.
    \end{align}
    Let $h:\mathbb R_+^2\longrightarrow \mathbb R$ satisfy
    \begin{align}\label{eq:lipschitz-first-variable}
        |h(x,y)-h(x',y)| \leq L_h|x-x'|, \qquad x,x',y>0,
    \end{align}
    for some $L_h \in (0,\infty)$, and assume that $h(1,S_T)\in L^q(\Omega)$. Then there exists a constant $\Lambda_{q,h}(T)>0$, independent of $N$, such that
    \begin{align}\label{eq:lipschitz-price-process-rate}
        \mathbb E\left[ \sup_{t\in[0,T]} |H_t^{h,N}-H_t^{h,c}|^q \right] \leq \Lambda_{q,h}(T)\Delta_N^q.
    \end{align}
\end{theorem}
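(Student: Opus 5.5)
Since $h$ is Lipschitz in its first argument, I would first reduce the price-process estimate to an $L^q$ bound on $G_T^N-G_T^c$, and then control that difference through the log-averages. Since $H^{h,N}_t-H^{h,c}_t=\mathrm e^{-r(T-t)}\mathbb E[h(G_T^N,S_T)-h(G_T^c,S_T)\mid\mathcal F_t]$, the difference is (up to the bounded deterministic factor $\mathrm e^{-r(T-t)}$) a martingale with terminal value $D:=h(G_T^N,S_T)-h(G_T^c,S_T)$. I would take care of the discount factor by writing $\mathrm e^{-r(T-t)}\le 1$ and applying Doob's $L^q$-inequality ($q>1$) to the martingale $\mathbb E[D\mid\mathcal F_t]$. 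This gives
\[
\mathbb E\Big[\sup_{t}|H^{h,N}_t-H^{h,c}_t|^q\Big]\le \Big(\tfrac{q}{q-1}\Big)^q\mathbb E|D|^q\le \Big(\tfrac{q}{q-1}\Big)^q L_h^q\,\mathbb E|G_T^N-G_T^c|^q .
\]
Integrability of $D$ follows from $|h(x,S_T)|\le |h(1,S_T)|+L_h(x+1)$, the assumption $h(1,S_T)\in L^q$, and the bound $G_T^N,G_T^c\le \sup_t S_t$, whose $L^q$-norm is controlled below.

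Next I would linearise the exponential. By the mean value theorem, $|e^a-e^b|\le (e^a+e^b)|a-b|$, so
\[
|G_T^N-G_T^c|\le (G_T^N+G_T^c)\,\Big|\tfrac1{T_N}\textstyle\sum_j\omega_j^N X_{t_j}-\tfrac1T\int_0^T X_t\,\mathrm dt\Big| .
\]
Hölder's inequality with exponents $2$ and $2$ then gives $\mathbb E|G^N_T-G^c_T|^q\le \|G^N_T+G^c_T\|_{L^{2q}}^q\,\|\Delta X\|_{L^{2q}}^q$, where $\Delta X$ denotes the difference of the log-averages. By the AM--GM inequality, $G^N_T,G^c_T\le \sup_{t\le T}S_t$, and Doob's inequality applied to the martingale $\mathrm e^{-rt}S_t$ gives $\|\sup_t S_t\|_{L^{2q}}\le C\,\|S_T\|_{L^{2q}}<\infty$. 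Both bounds are uniform in $N$.

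The main step is to show $\|\Delta X\|_{L^{2q}}\le C\Delta_N$. I would split
\[
\Delta X=\Big(\tfrac1{T_N}-\tfrac1T\Big)\textstyle\sum_{j<N}\omega_j^N X_{t_j}+\tfrac{\omega_N^N}{T_N}X_T+\tfrac1T\sum_{j<N}\int_{t_j}^{t_{j+1}}(X_{t_j}-X_u)\,\mathrm du .
\]
The first two terms are $O(\Delta_N)\sup_t|X_t|$, because $|1/T_N-1/T|\le \omega^N_N/T^2\le\Delta_N/T^2$. Here $\sup_t|X_t|\le|X_0|+|A|_T+\sup_t|M_t|$ lies in $L^{2q}$ by \eqref{eq:log-semimartingale-rate-assumption} and BDG.

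The third term is the genuine difficulty. A naive pathwise bound via the modulus of continuity of $M$ would give only $\Delta_N^{1/2}$, so it has to be handled differently. For the finite-variation part, $|\int_{t_j}^{t_{j+1}}(A_u-A_{t_j})\,\mathrm du|\le \omega_j^N\,|A|([t_j,t_{j+1}])$, so summing over $j$ gives at most $\Delta_N|A|_T$. For the martingale part, I would use stochastic Fubini:
\[
\textstyle\sum_j\int_{t_j}^{t_{j+1}}(M_u-M_{t_j})\,\mathrm du=\int_0^T (t_{j(s)+1}-s)\,\mathrm dM_s,
\]
where $j(s)$ is the index with $s\in[t_j,t_{j+1})$. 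The integrand is deterministic and bounded by $\Delta_N$, so BDG gives an $L^{2q}$-norm of at most $C_q\Delta_N\|\langle M\rangle_T^{1/2}\|_{L^{2q}}$. This step, exploiting martingale cancellation to recover the full order $\Delta_N$ rather than $\Delta_N^{1/2}$, is where I expect the main obstacle.

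Collecting the constants yields \eqref{eq:lipschitz-price-process-rate}, with $\Lambda_{q,h}(T)$ depending on $q$, $L_h$, $T$, $\|S_T\|_{L^{2q}}$, $|X_0|$, and the norms in \eqref{eq:log-semimartingale-rate-assumption}.
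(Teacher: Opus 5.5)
Your proposal is correct and follows essentially the same route as the paper: Doob's $L^q$ inequality together with the Lipschitz bound, the estimate $|\mathrm e^a-\mathrm e^b|\le(\mathrm e^a+\mathrm e^b)|a-b|$ combined with H\"older and $G_T^N,G_T^c\le\sup_{t\le T}S_t$, and, as the key step, rewriting the left-Riemann error as $\int_0^T(t_{j(u)+1}-u)\,\mathrm dX_u$ and applying BDG to its martingale part. The only difference is the terminal-weight and normalisation correction: you bound it through $\sup_t|X_t|$, so your constant picks up $|X_0|$, whereas the paper writes it as $\frac{\omega_N^N}{T_N}\bigl(X_T-\frac1T\int_0^TX_u\,\mathrm du\bigr)$ and uses $X_T-\frac1T\int_0^TX_u\,\mathrm du=\frac1T\int_0^Tu\,\mathrm dX_u$; since the $X_0$-contributions of your first two terms cancel exactly, you may replace $X$ by $X-X_0$ there and obtain the same $X_0$-free constant.
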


The proofs of Theorem~\ref{thm:general-price-process-convergence} and Theorem~\ref{thm:lipschitz-payoff-convergence-rate} are given in Appendix~\ref{sec:proof-discrete-to-continuous}.
For the Volterra--Heston model, the assumption $S_T \in L^{2q}(\Omega)$ is related to moment-explosions. Such a problem was studied in \cite{gerhold2019moment} for the rough Heston model, and in \cite{MR4099324} for the Volterra--Heston model. In particular, when $\mathcal{K}(t)= t^{\alpha-1}/\Gamma(\alpha)$ with $\alpha \in (\frac{1}{2},1)$, then $S_T \in L^{2q}(\Omega)$ for each $T > 0$ is equivalent to 
\begin{align}\label{eq: moment}
     \kappa - 2\rho \sigma q \geq \sigma \sqrt{2q(2q - 1)} \ \ \Longleftrightarrow \ \ q \leq \frac{1 - 2\rho \frac{\kappa}{\sigma} + \sqrt{1 - 4\rho \frac{\kappa}{\sigma} + 4 \left( \frac{\kappa}{\sigma}\right)^2} }{4(1- \rho^2) },
\end{align}
with the last equivalence valid for $|\rho| < 1$. For the remaining integrability assumptions, note that $A_t = rt - \frac{1}{2}\int_0^t \nu_s\, \mathrm{d}s$ and $\langle M \rangle_t = \int_0^t \nu_s\, \mathrm{d}s$. Hence \eqref{eq:log-semimartingale-rate-assumption} follows from the known moment bound on the variance process $\nu$.

\subsection{Numerical study}\label{sec:numericspricing}

 We validate the fixed-strike geometric-Asian pricing formula in
Theorem~ \ref{thm:general-geometric-payoff-pricing} using the classical Heston case, which permits
an accurate independent simulation benchmark.  The parameters are
\[
 \kappa=1,\qquad \theta=\nu_0=0.04,\qquad \sigma=0.3,\qquad \rho=-0.7,
 \qquad S_0=K=100,\qquad T=1,\qquad r=0,
\]
and the contract is monitored at the 21 equidistant dates \(t_0,\ldots,t_{20}\).

The implementation follows representation of Example~\ref{example:putcall}. For \(0<R<1\),
\[
 (G_T-K)^+=G_T-\min(G_T,K),  \qquad  \min(G_T,K)=-\int_{\mathcal S_R}G_T^z\,\zeta_K(\mathrm dz).
\]
Consequently, both the price and the hedge contain the block $G_T$ in addition to the
contour integral. It is possible to omit this contribution, provided that integration is carried out over the line \(R>1\) with adjusted measure $\zeta_K$. The latter is justified by the Residue theorem, since one passes a singularity at $z = 1$. 

The Volterra--Riccati equation reduces to an ordinary Riccati equation in the Heston benchmark and is solved by classical fourth-order Runge--Kutta on a grid aligned with all monitoring dates. On each interval, the left value of the piecewise-constant monitoring coefficient is used, while the continuous Riccati solution is integrated at interval midpoints. The accepted
transform controls are 384 Gauss--Legendre nodes on \(|\operatorname{Im}z|\leq60\) and 1,000
Runge--Kutta steps.  They are compared with 512 nodes on \(|\operatorname{Im}z|\leq80\) and 1,400 steps.  Across \(R\in\{0.25,0.50,0.75\}\), the finer-grid prices differ by at most \(1.6\times10^{-7}\) and the finer-grid hedge ratios by at most \(3.0\times10^{-9}\). Relative to the accepted grid, all price differences are below  \(1.2\times10^{-5}\) and all hedge differences below \(8.1\times10^{-6}\).

 For the Monte Carlo benchmark, the CIR variance transition is sampled exactly from its
noncentral-\(\chi^2\) law.  Integrated variance is approximated by the trapezoidal rule, the
variance-driver contribution to the stock is reconstructed from the CIR increment, and the
orthogonal stock component is simulated independently.  Thus the variance transition is exact,
whereas the joint stock-variance path remains time discretised.  The price study uses 200,000
paths and fixed random seeds.  Parentheses in Table~\ref{tab:pricing-primary-validation} are 95\%
Monte Carlo confidence-interval half-widths.

\begin{table}[htbp]
\centering
 \small
\begin{tabular}{lrrr}
\toprule
check &  benchmark &  transform/GKW &  absolute gap \\
 \midrule
Heston MC (1000 steps) & 4.253520 (0.025055) & 4.247377 & 6.14e-03 \\
Heston MC (2000 steps) & 4.263444 (0.025137) & 4.247377 & 1.61e-02 \\
Constant variance (price) & 4.36884811 & 4.36884811 & 4.55e-11 \\
Constant variance (delta) & 0.48453809 & 0.48453809 & 3.04e-13 \\
GKW finite difference ($h=0.0100$) & 0.45268194 & 0.45298997 & 3.08e-04 \\
GKW finite difference ($h=0.0050$) & 0.45291288 & 0.45298997 & 7.71e-05 \\
GKW finite difference ($h=0.0025$) & 0.45297069 & 0.45298997 & 1.93e-05 \\
\bottomrule
\end{tabular}
\caption{Validation of the pricing and time-zero hedging implementation. Both Heston confidence
intervals contain the transform price. The constant-variance rows compare the raw integral and its delta with the analytic discrete geometric-Asian Black--Scholes values. The final rows compare the transform GKW hedge with \(\partial_{S_0}C_0+(\rho\sigma/S_0)\partial_{\nu_0}C_0\), using centred finite differences and holding the first fixing fixed. The fourfold reduction in the gap when \(h\) is halved is consistent with the expected second-order accuracy of centred differences.}
 \label{tab:pricing-primary-validation}
\end{table}

 The constant-variance price and delta errors are approximately \(4.6\times10^{-11}\) and
\(3.0\times10^{-13}\), respectively. In the Heston benchmark the
transform price lies inside every reported confidence interval, including the 500- and 4,000-step
checks recorded in the computational release.  The time-discretisation study also compares hedge
MSE at 1,000 and 2,000 simulation steps over 20 independent replications of 20,000 paths.  A
two-one-sided equivalence test on the log-MSE contrast gives a 90\% ratio interval
\([0.99968,1.00758]\), entirely inside the prespecified two-percent equivalence margin.  These
checks validate the numerical implementation at the stated parameter values; they do not replace
the analytic assumptions of the pricing and hedging theorems.

\section{Variance-Optimal Hedging of Discretely-Monitored Geometric Asian Options}\label{sec:hedging}

\subsection{Variance-optimal hedge in transform coordinates}
\label{subsec:volterra-block-hedge}

In this section, we work on the filtered probability space $(\Omega,\mathcal F,\mathbb F,\mathbb P)$ as in Section~\ref{subsec:volterra-hestonsv}, with $r=0$, and suppose that the stock price $S$ is a strictly positive continuous square-integrable martingale. Sufficient conditions for the square-integrability have been studied in \cite{gerhold2019moment, MR4099324}. Let $\mathcal G^2(S)$ denote the space of predictable processes $\vartheta$ with finite norm
\[
    \|\vartheta\|_{\mathcal G^2(S)} := \left( \mathbb E\left[ \int_0^T |\vartheta_t|^2\,\mathrm d\langle S\rangle_t \right] \right)^{1/2}.
\]
Let $\mathcal G^2_{\mathbb C}(S)$ denote its complexification, and write $(\vartheta\cdot S)_t := \int_0^t\vartheta_s\,\mathrm dS_s$ for the stochastic integral of $\vartheta$ against $S$.

Let $g$ be an arbitrary Borel probability measure on $[0,T]$, and recall that $G_T, G_{t,T}$ are defined in \eqref{eq:Gg}. We consider a payoff $h:\mathbb R_+^2\longrightarrow \mathbb R$ admitting the representation~\eqref{eq:bromwich-general-payoff}. As before, we assume that the right-hand side of~\eqref{eq:bromwich-general-payoff} is real-valued. The risk-neutral price martingale of the payoff is given by \eqref{eq:general-payoff-price-definition}, i.e.
\begin{align}\label{eq:Hh-hedging}
    H_t^h := \mathbb E\left[ h(G_T,S_T) \,\middle|\, \mathcal F_t \right].
\end{align}
Since $S$ is square-integrable, it follows that $\mathbb E[G_T^2] \leq \int_{[0,T]}\mathbb E[S_t^2]\,g(\mathrm dt) \leq \mathbb E[S_T^2] <\infty$. Since $\zeta_0$ and $\zeta_1$ have finite total variation, \eqref{eq:bromwich-general-payoff} implies that $h(G_T,S_T)\in L^2(\Omega)$. Hence $H^h$ is a square-integrable martingale.

The variance-optimal hedging problem consists in finding an initial capital $v\in\mathbb R$ and a strategy $\vartheta\in\mathcal G^2(S)$ minimizing
\begin{align}\label{eq:vo-problem}
    \epsilon_h := \inf_{v\in\mathbb R,\, \vartheta\in\mathcal G^2(S)} \mathbb E\left[ \left( v+(\vartheta\cdot S)_T-h(G_T,S_T) \right)^2 \right].
\end{align}
In incomplete markets, this problem dates back to F\"ollmer--Sondermann~\cite{Foellmer1985} and is characterized through the Galtchouk--Kunita--Watanabe decomposition. More precisely, there exists $\vartheta^{*,h}$ and a square-integrable martingale $L^h$ such that
\[
    H_t^h = H_0^h + (\vartheta^{*,h}\cdot S)_t + L_t^h, \qquad \langle L^h,S\rangle\equiv0.
\]
In particular, it follows that $v^{*,h}=H_0^h$ and $\epsilon_h = \mathbb E[(L_T^h)^2]$. Since $S$ is continuous, the optimal strategy is characterized by
\begin{align}\label{eq:var-opt-rn-deriv}
    \vartheta_t^{*,h} = \frac{\mathrm d\langle H^h,S\rangle_t}{\mathrm d\langle S\rangle_t}.
\end{align}

While this formally gives a solution to the variance-optimal hedging problem, it is not very useful for practical implementations. Below we derive a semi-explicit representation of this strategy in the Volterra--Heston model. Our approach extends \cite{DT2017, Kallsen2010} towards affine Volterra processes with particular emphasis on Asian payoffs. Let
\[
    \mathcal D := \left\{ (s,w)\in\mathbb C^2: \Re(s),\Re(w)\geq0,\quad \Re(s)+\Re(w)\leq1 \right\},
\]
and, for $(s,w)\in\mathcal D$, define the elementary pricing martingale
\begin{align}\label{eq:Hsw-hedging}
    H_t(s,w) &:= \mathbb E\left[ G_T^sS_T^w \,\middle|\, \mathcal F_t \right] = G_t^s\Psi_t(s,w),
\end{align}
where $\Psi_t$ is given by Corollary \ref{thm:condcfG_nS_Texpression}. In particular, for every $z\in\mathcal S_R$, we find $(z,0),(z,1-z), (1,0) \in\mathcal D$. Thus all arguments entering \eqref{eq:bromwich-general-payoff} lie in the domain of Corollary~\ref{thm:condcfG_nS_Texpression}. For $(s,w)\in\mathcal D$, let
\[
    \psi_1^{s,w}(u) := w+s\,g([T-u,T]) \quad \text{ and } \quad \psi_{1,-}^{s,w}(u) := w+s\,g((T-u,T]),
\]
and let $\psi_2^{s,w}$ denote the solution of the Volterra--Riccati equation \eqref{eq:VolRic} with $\mu_2 = 0$. Let $A_t^g := \int_{[0,t]}\log(S_y)\,g(\mathrm dy)$ denote the accumulated observed part of the geometric average. In terms of the forward variance $\xi_t(u) := \mathbb E[\nu_u\mid\mathcal F_t]$, $u\in[t,T]$, Corollary~\ref{thm:condcfG_nS_Texpression} gives
\begin{align}\label{eq:Y-general-hedge}
    H_t(s,w) &= \exp\bigl(Y_t(s,w)\bigr),
\end{align}
where
\begin{align}
    Y_t(s,w) &= sA_t^g + \psi_{1,-}^{s,w}(T-t)\log(S_t) \nonumber - \frac12 \int_t^T \psi_{1,-}^{s,w}(T-x) \xi_t(x)\, \mathrm dx 
    \\ &\quad + \frac12 \int_t^T \Big( \psi_1^{s,w}(T-y)^2 + 2\rho\sigma \psi_1^{s,w}(T-y) \psi_2^{s,w}(T-y) + \sigma^2 \psi_2^{s,w}(T-y)^2 \Big) \xi_t(y)\, \mathrm dy. \label{eq:Y-general-hedge-expanded}
\end{align}
For later use, define
\begin{align}\label{eq:beta-sw-hedge}
    \beta_t(s,w) := \psi_{1,-}^{s,w}(T-t) + \rho\sigma \psi_2^{s,w}(T-t) 
    = w+s\,g((t,T]) + \rho\sigma \psi_2^{s,w}(T-t).
\end{align}

The next theorem provides a representation of the solution to the mean-variance problem in terms of $H_t$ and $\beta_t$. The latter can be computed semi-analytically in terms of $\psi_2$. 

\begin{theorem}[Variance-optimal hedge for general $g$-monitored geometric Asian payoffs]
\label{thm:optimal-strategy-geo}
Let $g$ be a Borel probability measure on $[0,T]$, let $R\in(0,1)$, and let $h$ admit the representation \eqref{eq:bromwich-general-payoff}. Then the variance-optimal strategy for the payoff $h(G_T,S_T)$ is
\begin{align}
    \vartheta_t^{*,h} &= c + b\, \frac{H_t(1,0)}{S_t} \beta_t(1,0) + \int_{\mathcal S_R} \frac{H_t(z,0)}{S_t} \beta_t(z,0)\, \zeta_0(\mathrm dz) + \int_{\mathcal S_R} \frac{H_t(z,1-z)}{S_t} \beta_t(z,1-z)\, \zeta_1(\mathrm dz). \label{eq:theta-star-general-g}
\end{align}
The optimal initial capital is
\begin{align}\label{eq:optimal-capital-general-g}
    v^{*,h} = a + bH_0(1,0) + cS_0 + \int_{\mathcal S_R} H_0(z,0)\, \zeta_0(\mathrm dz) + \int_{\mathcal S_R} H_0(z,1-z)\, \zeta_1(\mathrm dz).
\end{align}
Moreover, the minimal mean-squared hedging error is given by
\begin{align}\label{eq:hedging-error-general}
    \epsilon_h = (1-\rho^2)\sigma^2 \int_0^T \mathbb E\left[ \nu_t |Q_t^h|^2 \right] \mathrm dt,
\end{align}
with 
\begin{align}\label{eq:Qh-general-g}
    Q_t^h &:= bH_t(1,0)\psi_2^{1,0}(T-t) + \int_{\mathcal S_R} H_t(z,0) \psi_2^{z,0}(T-t)\, \zeta_0(\mathrm dz) + \int_{\mathcal S_R} H_t(z,1-z) \psi_2^{z,1-z}(T-t)\, \zeta_1(\mathrm dz).
\end{align}
\end{theorem}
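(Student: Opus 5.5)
The plan is to compute the martingale dynamics of each elementary pricing martingale $H_t(s,w)=\exp(Y_t(s,w))$ and then superpose them through the representation \eqref{eq:bromwich-general-payoff}. Only the martingale part of $Y_t(s,w)$ matters, since $H(s,w)$ is a martingale by \eqref{eq:Hsw-hedging}. The forward variance satisfies $\mathrm d\xi_t(u)=\mathcal{R}(u,t)\,\sigma\sqrt{\nu_t}\,\mathrm dW^\nu_t$ for $u>t$. Here $\mathcal{R}$ denotes the kernel obtained from the resolvent of $\kappa\mathcal K$, via the variation-of-constants representation of $\nu$ as in \cite{abi2019affine}. Also $\mathrm d\log S_t=(\dots)\mathrm dt+\sqrt{\nu_t}\,\mathrm dW^S_t$, and $A^g_t$ together with the jump of $\psi^{s,w}_{1,-}(T-t)$ at monitoring dates contributes only finite-variation terms. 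Indeed, the jumps of $sA^g_t$ and $\psi_{1,-}(T-t)\log S_t$ cancel at atoms of $g$, because $\log S$ is continuous. Differentiating \eqref{eq:Y-general-hedge-expanded} under the integral (stochastic Fubini) therefore gives
\[
\mathrm dY_t(s,w)=(\dots)\,\mathrm dt+\psi^{s,w}_{1,-}(T-t)\sqrt{\nu_t}\,\mathrm dW^S_t+\Big(\int_t^T\Big(-\tfrac12\psi_{1,-}+\tfrac12(\psi_1^2+2\rho\sigma\psi_1\psi_2+\sigma^2\psi_2^2)\Big)(T-y)\,\mathcal{R}(y,t)\,\mathrm dy\Big)\sigma\sqrt{\nu_t}\,\mathrm dW^\nu_t .
\]

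The key identity I would prove is that the bracketed $\mathrm dW^\nu$ coefficient equals $\psi_2^{s,w}(T-t)$. This uses the Riccati--Volterra equation \eqref{eq:VolRic} with $\mu_2=0$: writing $R(\psi)=\tfrac12(\psi_1^2-\psi_1)+\tfrac12(\sigma^2\psi_2^2+2\rho\sigma\psi_1\psi_2)-\kappa\psi_2$, one gets $\psi_2=\mathcal K*R(\psi)$. Equivalently, after resolving the $-\kappa\psi_2$ term, $\psi_2=\mathcal R_\kappa *\bigl(\tfrac12(\psi_1^2-\psi_1)+\dots\bigr)$, where $\mathcal R_\kappa$ is the resolvent kernel of $\kappa\mathcal K$. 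This is exactly the integral above after the substitution $y\mapsto T-y$. The replacement of $\psi_1$ by $\psi_{1,-}$ is harmless, since the two differ only on a Lebesgue-null set. The cleanest route is to avoid explicit resolvents altogether. By Theorem \ref{cor:CorsolRicVolmeasure} and its proof in Appendix~\ref{sec:proof-affine-formula}, the transform martingale has the form $\exp(\text{local martingale part}\,\int_0^t \psi(T-s)\,\sigma_s\,\mathrm dW_s)$, with diffusion coefficient $\psi_1\sqrt{\nu}\,\mathrm dW^S+\psi_2\sigma\sqrt{\nu}\,\mathrm dW^\nu$. I would read this off from the stochastic-exponential representation used there. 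Hence
\[
\mathrm dH_t(s,w)=H_t(s,w)\sqrt{\nu_t}\bigl(\psi^{s,w}_{1,-}(T-t)\,\mathrm dW^S_t+\sigma\psi_2^{s,w}(T-t)\,\mathrm dW^\nu_t\bigr).
\]
Since $\mathrm d\langle S\rangle_t=S_t^2\nu_t\,\mathrm dt$, this yields $\mathrm d\langle H(s,w),S\rangle_t/\mathrm d\langle S\rangle_t=H_t(s,w)\beta_t(s,w)/S_t$. Decomposing $W^\nu=\rho W^S+\sqrt{1-\rho^2}W^I$, the part orthogonal to $S$ is $H_t\sigma\psi_2\sqrt{1-\rho^2}\sqrt{\nu_t}\,\mathrm dW^I_t$.

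Next I superpose. By Theorem \ref{thm:general-geometric-payoff-pricing} with $r=0$, $H^h_t=a+bH_t(1,0)+cS_t+\int H_t(z,0)\,\zeta_0(\mathrm dz)+\int H_t(z,1-z)\,\zeta_1(\mathrm dz)$. Setting $t=0$ gives \eqref{eq:optimal-capital-general-g}. To obtain the dynamics of $H^h$, I interchange the stochastic integral with the $\zeta_i$-integrals by a stochastic Fubini theorem. This requires showing
\[
\int_{\mathcal S_R}\Big(\mathbb E\int_0^T |H_t(z,w)|^2\nu_t\bigl(|\psi_1|^2+|\psi_2(T-t)|^2\bigr)\,\mathrm dt\Big)^{1/2}|\zeta_i|(\mathrm dz)<\infty .
\]
Applying \eqref{eq:var-opt-rn-deriv} termwise then gives \eqref{eq:theta-star-general-g}. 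The remainder $L^h$ is $\int_0^\cdot(1-\rho^2)^{1/2}\sigma\sqrt{\nu_t}\,Q^h_t\,\mathrm dW^I_t$, and It\^o's isometry gives \eqref{eq:hedging-error-general}.

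I expect the main obstacle to be this integrability needed for stochastic Fubini, uniformly on $\mathcal S_R$. The bound $|H_t(z,w)|\le H_t(R,w')$ with real exponents, together with $\mathbb E[H_t(R,\cdot)^2]\le\mathbb E[G_T^{2R}S_T^{2\Re w}]\le\mathbb E[S_T^2]$ by Jensen and AM--GM, handles the factor $H$. Similarly, $\nu_t\le$ something with finite moments by the moment bounds of \cite{abi2019affine}, used with H\"older. The delicate factor is $\psi_2^{z,w}$, which grows with $|\operatorname{Im} z|$. Here I would use Theorem \ref{thm:solRicVolmeasure}, which gives an $L^p$ bound linear in $1+|\mu|([0,T])$, where $|\mu|\lesssim 1+|z|$. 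This yields polynomial growth in $|z|$. Polynomial growth is not absorbed by finiteness of $\zeta_i$ alone, so I would additionally exploit $\Re\psi_2\le0$ together with the affine identity to bound $\mathbb E\int|H|^2\nu|\psi_2|^2$ by the variance of $H_T(z,w)$. That variance is at most $\mathbb E[S_T^2]$ uniformly in $z$ via It\^o isometry, and this gives the required uniform bound.
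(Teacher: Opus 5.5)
Your proposal is correct and follows essentially the same route as the paper. As you ultimately do, the elementary dynamics $\mathrm dH_t(s,w)=H_t(s,w)\sqrt{\nu_t}\,(\beta_t(s,w)\,\mathrm dW^S_t+\sqrt{1-\rho^2}\,\sigma\psi_2^{s,w}(T-t)\,\mathrm dW^I_t)$ are read off from the stochastic-exponential form in Theorem~\ref{thm:expafftransf}, and the elementary GKW decompositions are then superposed over $\zeta_0,\zeta_1$ using your uniform-in-$z$ bound from It\^o's isometry (the paper phrases this as $L^2$-contractivity of the GKW projection), after which GKW uniqueness and It\^o's isometry give \eqref{eq:theta-star-general-g}--\eqref{eq:hedging-error-general}. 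One small adjustment: state the Fubini integrability for the actual integrands $H_t\sqrt{\nu_t}\,\beta_t$ and $\sqrt{1-\rho^2}\,\sigma H_t\sqrt{\nu_t}\,\psi_2$, which the isometry controls directly (also when $|\rho|=1$), rather than for $|\psi_1|^2+|\psi_2|^2$; the condition $\Re\psi_2\le 0$ is not needed.
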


The four standard geometric Asian options are obtained from Theorem~\ref{thm:optimal-strategy-geo} by the choices given in the table of Section~\ref{sec:semi-closed-pricing}. Formula~\eqref{eq:theta-star-general-g} shows that the monitoring scheme enters through the remaining monitoring mass $g((t,T])$. For a discrete monitoring measure this quantity is piecewise constant. Roughness enters through both the conditional transforms $H_t(s,w)$ and the Volterra--Riccati solutions $\psi_2^{s,w}$, while leverage contributes the correction $\rho\sigma\psi_2^{s,w}(T-t)$.
The proof of Theorem~\ref{thm:optimal-strategy-geo} is given in Appendix~\ref{sec:proof-optimal-strategy-geo}.

\subsection{Hedging in finite-factor Markovian lifts}
\label{subsec:markapprox}

For numerical implementations, we approximate the Volterra kernel by a finite sum of exponentials. Throughout this subsection, the superscript $n\geq1$ denotes the number of factors in the Markovian lift and is kept distinct from the monitoring-grid index used in Section~\ref{secconvtocont}. Suppose that the kernel $\mathcal K$ is completely monotone. By the Hausdorff--Bernstein--Widder  theorem, it admits the representation
\[
    \mathcal K(t) = \int_{[0,\infty)} \mathrm e^{-xt}\, k(\mathrm dx), \qquad t>0,
\]
for some nonnegative Borel measure $k$. Following the finite-factor lifting approach, see, e.g., \cite{abi2019lifting,AK21, Bayer2023}, we approximate $k$ by a discrete measure. 

\begin{assumption}\label{assumption:nodes-and-weights}
For each $n\geq1$, choose weights and nodes $(w_i^n)_{i=1,\dots, n}, (x_i^n)_{i=1,\dots, n} \subset \mathbb{R}_+$ with $\sum_{i=1}^n w_i^n > 0$ such that the approximation 
\begin{align}\label{eq:finite-factor-kernel}
    \mathcal K^n(t) := \sum_{i=1}^n w_i^n\mathrm e^{-x_i^n t}
\end{align}
satisfies 
\[
    \int_0^T |\mathcal{K}(t) - \mathcal{K}^n(t)|^2\, \mathrm{d}t \longrightarrow 0, \qquad n \to \infty.
\]
\end{assumption}

Remark that $\mathcal{K}^n$ is completely monotone and smooth. In particular it satisfies the standing assumption \ref{assump:K}. We next describe the corresponding Markovian lift of the Volterra--Heston model. Recall that $W_t^\nu = \rho W_t^S + \sqrt{1-\rho^2}\,W_t^I$, where $W^S$ and $W^I$ are independent Brownian motions. For fixed $n\geq1$, let $(S^n,\nu^n)$ solve
\begin{align}
    \mathrm dS_t^n &= S_t^n\sqrt{\nu_t^n}\, \mathrm dW_t^S, \label{eq:lifted-S}
    \\ \nu_t^n &= \nu_0 + \int_0^t \mathcal K^n(t-s) \kappa(\theta-\nu_s^n)\, \mathrm ds + \int_0^t \mathcal K^n(t-s) \sigma\sqrt{\nu_s^n}\, \mathrm dW_s^\nu. \label{eq:lifted-volterra-variance}
\end{align}

Let $g$ be an arbitrary Borel probability measure on $[0,T]$, and set
\begin{align}\label{eq:lifted-general-geometric-average}
    G_{t,T}^n = \exp\left( \int_{(t,T]} \log(S_u^n)\, g(\mathrm{d}u) \right) \ \text{ and } \ G_T^n = \exp\left( \int_{[0,T]} \log(S_u^n)\,g(\mathrm du) \right).
\end{align}
For $(s,w)\in\mathcal D$, recall that $\psi_1^{s,w}(u) := w+s\,g([T-u,T])$ and that $R(p,q) := \frac12(p^2-p) + (\rho\sigma p-\kappa)q + \frac12\sigma^2q^2$ was defined in \eqref{eq:R}. Let $\psi_2^{n;s,w}$ be the unique solution of 
\begin{align}\label{eq:lifted-Volterra-Riccati-equivalence}
    \psi_2^{n;s,w}(u) = \int_0^u \mathcal K^n(u-v) R\left( \psi_1^{s,w}(v), \psi_2^{n;s,w}(v) \right) \mathrm dv.
\end{align}
Then, according to Corollary \ref{thm:condcfG_nS_Texpression}, we find for 
\begin{align}\label{eq:lifted-Hsw-definition}
    \Psi_t^n(s,w) &:= \mathbb E\left[ (G_{t,T}^n)^s(S_T^n)^w \,\middle|\, \mathcal F_t \right]
\end{align}
the affine representation
\begin{align*}
    \Psi_t^n(s,w) &= \exp\Bigg( \log(S^n_t) \psi_{1,-}^{s,w}(T-t) - \frac{1}{2} \int_t^T \psi_{1,-}^{s,w}(T-x)\xi^n_t(x)\, \mathrm{d}x
    \\ &\qquad + \frac{1}{2}\int_t^T \left( \psi_1^{s,w}(T-y)^2 + 2\rho \sigma \psi_1^{s,w}(T-y)\psi_2^{n;s,w}(T-y) + \sigma^2 \psi_2^{n;s,w}(T-y)^2\right)\xi_t^n(y)\, \mathrm{d}y\Bigg),
\end{align*}
where we have set $\psi_{1,-}^{s,w}(t) = w + s g((T-t,T])$ and $\xi^n_t(x) = \mathbb{E}[ \nu_x^n \ | \ \mathcal{F}_t]$. 

The above representation appears to be infinite-dimensional due to the presence of the forward-variance curve $\xi_t^n$. By expressing this curve in terms of finitely many factors, we provide an alternative representation that is more suitable for numerical implementations. Let $Y^1,\dots, Y^n$ be the solution of
\begin{align}\label{eq:lifted-factors-definition}
    Y_t^{n,i} := \int_0^t \mathrm e^{-x_i^n(t-s)} \kappa(\theta-\nu_s^n)\, \mathrm ds + \int_0^t \mathrm e^{-x_i^n(t-s)} \sigma\sqrt{\nu_s^n}\, \mathrm dW_s^\nu.
\end{align}
Then
\begin{align}\label{eq:variance-factor-representation}
    \nu_t^n = \nu_0 + \sum_{i=1}^n w_i^nY_t^{n,i},
\end{align}
and the factors satisfy
\begin{align}\label{eq:factor-SDE}
    \mathrm dY_t^{n,i} &= \left( - x_i^nY_t^{n,i} + \kappa(\theta-\nu_t^n) \right)\mathrm dt + \sigma\sqrt{\nu_t^n}\, \mathrm dW_t^\nu, \qquad Y_0^{n,i}=0.
\end{align}
In particular, setting $X^n = \log S^n$,  $( X_t^n, Y_t^{n,1}, \ldots, Y_t^{n,n}\bigr)_{t\in[0,T]}$ is an $(n+1)$-dimensional affine Markovian realization of the finite-factor Volterra--Heston model. Existence of an admissible lifted solution and nonnegativity of $\nu^n$ follow under the standard finite-factor admissibility conditions, see e.g.~\cite[Appendix~A.1]{abi2019lifting}.
    
\begin{proposition}\label{prop:finite-dimensional-Riccati-lift}
    Fix $n \geq 1$ and $(s,w) \in \mathcal{D}$. Then there exists a unique absolutely continuous solution of
    \begin{align}\label{eq:lifted-Riccati-eta}
        \frac{\mathrm d}{\mathrm du} \eta_i^{n;s,w}(u) &= - x_i^n\eta_i^{n;s,w}(u) + R\left( \psi_1^{s,w}(u), \psi_2^{n;s,w}(u) \right), \qquad \eta_i^{n;s,w}(0) = 0, \ \ \text{a.e.} 
    \end{align}
    where $i = 1,\dots, n$, and
    \begin{align}\label{eq:lifted-psi2-factor-sum}
        \psi_2^{n;s,w}(u) = \sum_{i=1}^n w_i^n\eta_i^{n;s,w}(u).
    \end{align}
    Moreover, let $\phi^{n;s,w}$ be the unique absolutely continuous solution of
    \begin{align}\label{eq:lifted-Riccati-phi}
        \frac{\mathrm d}{\mathrm du} \phi^{n;s,w}(u) &= \kappa\theta\, \psi_2^{n;s,w}(u) + \nu_0 R\left(\psi_1^{s,w}(u), \psi_2^{n;s,w}(u) \right), \qquad \phi^{n;s,w}(0)=0.
    \end{align}
    Then the conditional transform of $H_t^n(s,w) := \mathbb E\left[ (G_{T}^n)^s(S_T^n)^w \,\middle|\, \mathcal F_t \right]$ is given by
    \begin{align}\label{eq:lifted-affine-transform}
        H_t^n(s,w) &= \exp\Bigg( sA_t^{g,n} + \psi_{1,-}^{s,w}(T-t)X_t^n + \phi^{n;s,w}(T-t) + \sum_{i=1} ^n w_i^n \eta_i^{n;s,w}(T-t) Y_t^{n,i} \Bigg)
    \end{align}
    where $A_t^{g,n} = \int_{[0,t]}X_u^n\, g(\mathrm{d}u)$
\end{proposition}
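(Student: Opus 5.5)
We plan to reduce everything to the lifted affine representation of $\Psi_t^n$ displayed just before the proposition, which follows from Corollary~\ref{thm:condcfG_nS_Texpression} applied with kernel $\mathcal K^n$. That corollary applies because $\mathcal K^n$ satisfies Assumption~\ref{assump:K}. Since $(s,w)\in\mathcal D$, the measure $\mu=(s\,g(T-\cdot)+w\delta_0,0)$ lies in $\mathcal M_{\mathrm{ad}}$. Theorem~\ref{thm:solRicVolmeasure} then yields a unique solution $\psi_2^{n;s,w}$ of \eqref{eq:lifted-Volterra-Riccati-equivalence}.

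\textbf{Step 1: the factors $\eta_i$.} We define
\[
\eta_i^{n;s,w}(u):=\int_0^u \mathrm e^{-x_i^n(u-v)}R\bigl(\psi_1^{s,w}(v),\psi_2^{n;s,w}(v)\bigr)\,\mathrm dv .
\]
Since $\psi_1^{s,w}$ is bounded and $\psi_2^{n;s,w}\in L^2_{\mathrm{loc}}$, the integrand $R(\psi_1,\psi_2)$ lies in $L^1_{\mathrm{loc}}$. Hence each $\eta_i$ is absolutely continuous and solves \eqref{eq:lifted-Riccati-eta} a.e., by the variation-of-constants formula. Summing with the weights $w_i^n$ and using \eqref{eq:finite-factor-kernel} recovers \eqref{eq:lifted-psi2-factor-sum} from \eqref{eq:lifted-Volterra-Riccati-equivalence}.

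For uniqueness, suppose $(\tilde\eta_i)$ solves the coupled system \eqref{eq:lifted-Riccati-eta}--\eqref{eq:lifted-psi2-factor-sum}, in which $\psi_2$ is replaced by $\sum_i w_i^n\tilde\eta_i$. Variation of constants then shows that $\sum_i w_i^n\tilde\eta_i$ solves \eqref{eq:lifted-Volterra-Riccati-equivalence}. By the uniqueness in Theorem~\ref{thm:solRicVolmeasure}, it equals $\psi_2^{n;s,w}$, and each $\tilde\eta_i$ is therefore determined. The function $\phi^{n;s,w}$ is then simply an integral of an $L^1_{\mathrm{loc}}$ function.

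\textbf{Step 2: the forward variance in factor form.} From \eqref{eq:factor-SDE}, the factor $Y^{n,i}$ is affine in the state. Taking conditional expectations in \eqref{eq:lifted-volterra-variance} gives, for $x\ge t$,
\[
\xi_t^n(x)=\nu_0+\sum_i w_i^n\Bigl(\mathrm e^{-x_i^n(x-t)}Y_t^{n,i}+\int_t^x \mathrm e^{-x_i^n(x-v)}\kappa(\theta-\xi_t^n(v))\,\mathrm dv\Bigr).
\]
The stochastic integral drops out in expectation, using square-integrability of $\sqrt{\nu^n}$. Thus $\xi_t^n$ solves a linear Volterra equation with kernel $\mathcal K^n$. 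Its solution is affine in $(Y_t^{n,1},\dots,Y_t^{n,n})$, with coefficients given by the resolvent of $\kappa\mathcal K^n$.

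\textbf{Step 3: identifying the exponent.} We substitute this affine form into the exponent of $\Psi_t^n$ and multiply by $\exp(sA_t^{g,n})$, using $G_T^n=G_t^nG^n_{t,T}$. The exponent becomes
\[
sA_t^{g,n}+\psi_{1,-}^{s,w}(T-t)X_t^n+c_0(T-t)+\sum_i c_i(T-t)Y_t^{n,i}.
\]
Here $c_i(\tau)$ is a linear functional of the integrand $\tfrac12(\psi_1^2+2\rho\sigma\psi_1\psi_2+\sigma^2\psi_2^2)-\tfrac12\psi_{1,-}$, pulled back through the resolvent.

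The remaining task is to show $c_i=w_i^n\eta_i^{n;s,w}$ and $c_0=\phi^{n;s,w}$. The key identity is that this integrand equals $R(\psi_1,\psi_2)+\kappa\psi_2$ a.e.; this uses $\psi_1=\psi_{1,-}$ off the countably many atoms of $g$. The resolvent of $\kappa\mathcal K^n$ then exactly absorbs the $-\kappa\psi_2$ drift, via the Riccati equation \eqref{eq:lifted-Volterra-Riccati-equivalence}, a computation as in \cite[Section~7]{abi2019affine}. The $\nu_0$- and $\kappa\theta$-terms assemble into $\int_0^\tau(\kappa\theta\psi_2+\nu_0R)$, which is $\phi^{n;s,w}$.

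\textbf{Main obstacle.} This bookkeeping is the main obstacle. If it becomes messy, we would instead verify the claim directly. We would set $M_t:=\exp(\text{RHS of }\eqref{eq:lifted-affine-transform})$ and apply Itô's formula with \eqref{eq:lifted-Riccati-eta} and \eqref{eq:lifted-Riccati-phi}. At the monitoring atoms, the increment of $sA^{g,n}$ exactly cancels the jump of $\psi_{1,-}(T-\cdot)X^n$, so $M$ is continuous. The drift vanishes by the Riccati equations, so $M$ is a local martingale. It is bounded by $\mathbb E[G_T^{\Re s}(S_T^n)^{\Re w}\mid\mathcal F_t]$-type quantities, as in the proof of Theorem~\ref{cor:CorsolRicVolmeasure}, and is therefore a true martingale. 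Its terminal value is $M_T=(G_T^n)^s(S_T^n)^w$, which identifies $M_t=H_t^n(s,w)$.
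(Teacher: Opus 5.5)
Your proposal is correct. Step~1 and the a.e.\ identification of the integrand with $R(\psi_1^{s,w},\psi_2^{n;s,w})+\kappa\psi_2^{n;s,w}$ coincide with the paper. The two routes differ only in how the exponent is matched with the factors.

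\textbf{The paper's route.} The paper never solves for the forward variance. It sets $m_i(r):=\mathbb E[Y^{n,i}_{t+r}\mid\mathcal F_t]$, which satisfy linear ODEs, and differentiates the pairing $F(r)=\sum_i w_i^n\eta_i^{n;s,w}(\tau-r)m_i(r)$. The $x_i^n$-terms cancel between \eqref{eq:lifted-Riccati-eta} and the $m_i$-equation. Integrating $-F'$ over $[0,\tau]$ then yields $\phi^{n;s,w}(\tau)+F(0)$.

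\textbf{Your route.} Your resolvent version closes in two lines, so the ``bookkeeping'' is not a real obstacle. Put $\chi:=R(\psi_1^{s,w},\psi_2^{n;s,w})$, $a:=\chi+\kappa\psi_2^{n;s,w}$ and $\tau:=T-t$. Your Step~2 says $\xi_t^n(t+\cdot)=f_t-\kappa\mathcal K^n*\xi_t^n(t+\cdot)$, with forcing
\[
f_t(r)=\nu_0+\sum_i w_i^n\mathrm e^{-x_i^nr}Y_t^{n,i}+\kappa\theta\int_0^r\mathcal K^n(v)\,\mathrm dv .
\]
Equation \eqref{eq:lifted-Volterra-Riccati-equivalence} says $a=\chi+\kappa\mathcal K^n*\chi$. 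Hence, for the resolvent $\mathcal R_\kappa$ of $\kappa\mathcal K^n$, both $\chi=a-\mathcal R_\kappa*a$ and $\xi_t^n(t+\cdot)=f_t-\mathcal R_\kappa*f_t$ hold. Commutativity of convolution then gives
\[
\int_0^\tau a(\tau-r)\,\xi_t^n(t+r)\,\mathrm dr=(\chi*f_t)(\tau)=\nu_0\int_0^\tau\chi+\kappa\theta\int_0^\tau\psi_2^{n;s,w}+\sum_{i=1}^n w_i^n\eta_i^{n;s,w}(\tau)Y_t^{n,i},
\]
using $\chi*\mathrm e^{-x_i^n\cdot}=\eta_i^{n;s,w}$ and $\chi*(1*\mathcal K^n)=1*\psi_2^{n;s,w}$.

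\textbf{Comparison.} Your route buys generality: the identity $a*\xi_t^n(t+\cdot)=\chi*f_t$ holds for any kernel, and the finite-factor structure enters only through the explicit forcing $f_t$. The paper's route buys elementarity: it uses only the product rule for absolutely continuous functions and needs no resolvent.

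\textbf{Two minor points.}

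First, the It\^o fallback is unnecessary, and its last step is not justified as written. Bounding $|M_t|$ by conditional expectations of the payoff presupposes $M_t=H^n_t(s,w)$, which is what you are trying to prove. The stochastic-exponential bound in the proof of Theorem~\ref{cor:CorsolRicVolmeasure} rests on the $\pi_h$-representation, not on the factor coordinates.

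Second, your coupled uniqueness claim appeals to Theorem~\ref{thm:solRicVolmeasure}, whose uniqueness is formulated for $\mathbb C_-$-valued solutions. This is harmless here: $\mathcal K^n$ is bounded and both candidates are continuous, so Gronwall with the local Lipschitz property of $R$ gives uniqueness without any sign constraint. The paper itself only claims uniqueness of $\eta_i^{n;s,w}$ with $\psi_2^{n;s,w}$ fixed.
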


The proof is given in Appendix~\ref{sec:proof-finite-dimensional-Riccati-lift}.
Define 
\begin{align}\label{eq:lifted-beta}
    \beta_t^n(s,w) := \psi_{1,-}^{s,w}(T-t) + \rho\sigma \psi_2^{n;s,w}(T-t) 
    = w+s\,g((t,T]) + \rho\sigma \psi_2^{n;s,w}(T-t).
\end{align}
The following is a particular case of Theorem \ref{thm:optimal-strategy-geo} applied for $\mathcal{K}^n$.

\begin{theorem}[Variance-optimal hedge for multi-factor approximation]\label{thm:lifted-general-hedge}
Let $g$ be a Borel probability measure on $[0,T]$, let $R\in(0,1)$, and let $h$ admit the representation \eqref{eq:bromwich-general-payoff}. If $S^n$ is a square-integrable martingale, then the variance-optimal strategy for the payoff $h(G^n_T,S^n_T)$ is
\begin{align}
    \vartheta_t^{*,h,n} &= c + b\, \frac{H^n_t(1,0)}{S^n_t} \beta^n_t(1,0) + \int_{\mathcal S_R} \frac{H^n_t(z,0)}{S^n_t} \beta^n_t(z,0)\, \zeta_0(\mathrm dz) + \int_{\mathcal S_R} \frac{H^n_t(z,1-z)}{S^n_t} \beta^n_t(z,1-z)\, \zeta_1(\mathrm dz). 
\end{align}
The optimal initial capital is
\begin{align*}
    v^{*,h,n} = a + bH^n_0(1,0) + cS^n_0 + \int_{\mathcal S_R} H^n_0(z,0)\, \zeta_0(\mathrm dz) + \int_{\mathcal S_R} H^n_0(z,1-z)\, \zeta_1(\mathrm dz).
\end{align*}
Moreover, the minimal mean-squared hedging error is given by
\begin{align*}
    \epsilon_{h,n} = (1-\rho^2)\sigma^2 \int_0^T \mathbb E\left[ \nu^n_t |Q_t^{h,n}|^2 \right] \mathrm dt,
\end{align*}
with 
\begin{align}\label{eq:Qh-general-g-lifted}
    Q_t^{h,n} &:= bH^n_t(1,0)\psi_2^{n;1,0}(T-t) + \int_{\mathcal S_R} H^n_t(z,0) \psi_2^{n;z,0}(T-t)\, \zeta_0(\mathrm dz) 
    \\ \nonumber &\qquad \qquad + \int_{\mathcal S_R} H^n_t(z,1-z) \psi_2^{n;z,1-z}(T-t)\, \zeta_1(\mathrm dz).
\end{align}
\end{theorem}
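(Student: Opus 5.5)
The plan is to obtain Theorem~\ref{thm:lifted-general-hedge} by applying Theorem~\ref{thm:optimal-strategy-geo} with $\mathcal K$ replaced by $\mathcal K^n$. We are then working with the model \eqref{eq:lifted-S}--\eqref{eq:lifted-volterra-variance} on the same filtered probability space. This reduces the proof to two tasks: checking that the lifted model satisfies every standing hypothesis used in Section~\ref{sec:hedging}, and identifying the objects in Theorem~\ref{thm:optimal-strategy-geo} with their lifted counterparts $H^n_t$, $\beta^n_t$, $\psi_2^{n;s,w}$ and $\nu^n$.

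\textbf{Hypotheses.} The kernel $\mathcal K^n(t)=\sum_i w_i^n\mathrm e^{-x_i^n t}$ is a nonnegative combination of exponentials with $\sum_i w_i^n>0$. It is therefore completely monotone, locally Lipschitz and bounded on $[0,T]$, so by the discussion after Assumption~\ref{assump:K} it satisfies that assumption with $\gamma=1$. If all $x_i^n$ vanish, $\mathcal K^n$ is constant and the classical Heston case applies, so (a)--(b) still hold.
\begin{itemize}
\item Existence and uniqueness of a weak solution $(S^n,\nu^n)$ with $\nu^n\ge 0$ follow from \cite[Thm.~7.1]{abi2019affine} applied to $\mathcal K^n$.
\item Theorem~\ref{thm:solRicVolmeasure} with $\mu_2=0$ yields the unique solution $\psi_2^{n;s,w}$ of \eqref{eq:lifted-Volterra-Riccati-equivalence} for all $(s,w)\in\mathcal D$.
\item Corollary~\ref{thm:condcfG_nS_Texpression} yields the affine representation of $\Psi^n_t(s,w)$ stated before Proposition~\ref{prop:finite-dimensional-Riccati-lift}.
\end{itemize}
Square-integrability of $S^n$ is assumed in the statement. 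Together with $r=0$ and strict positivity and continuity of $S^n$, this places us exactly in the setting of Section~\ref{subsec:volterra-block-hedge}. The payoff $h$ and the measures $\zeta_0,\zeta_1$ do not depend on the kernel, so $h(G_T^n,S_T^n)\in L^2(\Omega)$ follows by the same AM--GM argument as before.

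\textbf{Identification of objects.} Define
\[
H_t^n(s,w)=\mathbb E\big[(G_T^n)^s(S_T^n)^w\mid\mathcal F_t\big]=(G_t^n)^s\,\Psi_t^n(s,w),
\]
which is the lifted analogue of \eqref{eq:Hsw-hedging}. Replacing $\psi_2^{s,w}$ by $\psi_2^{n;s,w}$ turns the quantity $\beta_t(s,w)$ of \eqref{eq:beta-sw-hedge} into $\beta_t^n(s,w)$ of \eqref{eq:lifted-beta}, and $Q_t^h$ into $Q_t^{h,n}$. Substituting these into \eqref{eq:theta-star-general-g}, \eqref{eq:optimal-capital-general-g} and \eqref{eq:hedging-error-general}, with $\nu$ replaced by $\nu^n$, gives the three claimed formulas verbatim.

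\textbf{Main obstacle.} The only genuine point is that the proof of Theorem~\ref{thm:optimal-strategy-geo} uses no property of $\mathcal K$ beyond Assumption~\ref{assump:K} and the model-level facts (i)--(iii) below. I would reread that proof to confirm it relies only on:
\begin{itemize}
\item[(i)] the affine formula from Corollary~\ref{thm:condcfG_nS_Texpression};
\item[(ii)] the forward-variance dynamics $\mathrm d\xi_t(u)=\mathcal K(u-t)\sigma\sqrt{\nu_t}\,\mathrm dW^\nu_t$;
\item[(iii)] moment bounds on $\nu$.
\end{itemize}
Each of these holds for $\mathcal K^n$ by the same references. Should the original proof use a feature specific to singular kernels, one could instead argue directly in the finite-dimensional Markov coordinates of Proposition~\ref{prop:finite-dimensional-Riccati-lift}. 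By It\^o's formula on \eqref{eq:lifted-affine-transform},
\[
\mathrm dH^n_t(s,w)=H^n_t(s,w)\Big(\psi_{1,-}^{s,w}(T-t)\sqrt{\nu^n_t}\,\mathrm dW^S_t+\psi_2^{n;s,w}(T-t)\,\sigma\sqrt{\nu^n_t}\,\mathrm dW^\nu_t\Big),
\]
because $\sum_i w_i^n\eta_i^{n;s,w}=\psi_2^{n;s,w}$ and the drift vanishes by the martingale property. Using $\mathrm dW^\nu=\rho\,\mathrm dW^S+\sqrt{1-\rho^2}\,\mathrm dW^I$, one reads off
\[
\frac{\mathrm d\langle H^n(s,w),S^n\rangle_t}{\mathrm d\langle S^n\rangle_t}=\frac{H^n_t(s,w)}{S^n_t}\,\beta^n_t(s,w),
\]
and the orthogonal residual $H^n_t(s,w)\,\psi_2^{n;s,w}(T-t)\,\sigma\sqrt{1-\rho^2}\sqrt{\nu^n_t}\,\mathrm dW^I_t$.

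Integrating these against $\zeta_0,\zeta_1$ then requires a stochastic Fubini argument to interchange $\int_{\mathcal S_R}$ with the stochastic integral. This uses finiteness of $|\zeta_i|$ together with the bound $|H^n_t(z,w)|\le \mathbb E[G^n_T+S^n_T+1\mid\mathcal F_t]$, which holds since $\Re z,\Re w\in[0,1]$ and $\Re z+\Re w\le 1$, and the $L^p$ bounds on $\psi_2^{n;z,w}$ from Theorem~\ref{thm:solRicVolmeasure}, uniform along $\mathcal S_R$ via the total mass of $\mu$. The error formula then follows from the It\^o isometry applied to the $W^I$-part of the decomposition. This integrability check is the step I expect to require the most care.
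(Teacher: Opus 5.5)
Your main route is the paper's: the theorem is stated there as the special case of Theorem~\ref{thm:optimal-strategy-geo} for the kernel $\mathcal K^n$ (which satisfies Assumption~\ref{assump:K}, with $r=0$ and $S^n$ a square-integrable martingale by hypothesis), so your hypothesis check and your identification of $H^n_t$, $\beta^n_t$, $Q^{h,n}_t$ and $\nu^n$ are all that is needed. One correction to your optional fallback: the $L^p$ bounds on $\psi_2^{n;z,w}$ from Theorem~\ref{thm:solRicVolmeasure} are not uniform along $\mathcal S_R$, because they scale with $|\mu|([0,T])=|z|+|w|$ and $\psi_2^{n;z,w}$ genuinely grows linearly in $|\operatorname{Im} z|$, so your product bound is not integrable against the $|\operatorname{Im} z|^{-2}$ decay of $\zeta_K$ (and it would also need moments beyond $S^n\in L^2$); the uniform control comes instead, as in the paper, from the contraction property of the GKW projection, $\|\vartheta^*(z,0)\|_{\mathcal G^2_{\mathbb C}(S)}\le\|G_T^z\|_{L^2(\Omega)}=\mathbb E[G_T^{2R}]^{1/2}$, and likewise for $(z,1-z)$ and the residuals.
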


\subsection{Convergence of the finite-factor hedge}
\label{subsec:optimality}

We next study the convergence of the variance-optimal hedge in the
finite-factor models to its Volterra counterpart. Throughout this
subsection, the factor index is denoted by $n$, and the lifted and
limiting models are constructed on the same filtered probability space
and driven by the same Brownian motions $(W^S,W^I)$. As in the preceding
hedging sections, the short rate is zero.

As before, let $g$ be an arbitrary Borel probability measure on $[0,T]$ and let $G_T^n, G_{t,T}^n$, $G_T, G_{t,T}$ be the corresponding geometric $g$-weighted averages. Let $R\in(0,1)$ and let $h$ be the real-valued function given by \eqref{eq:bromwich-general-payoff}. Denote by 
\[
    H_t^{h,n} := \mathbb E[ h(G_T^n, S_T^n) \mid\mathcal F_t],
    \qquad
    H_t^h := \mathbb E[ h(G_T, S_T) \mid\mathcal F_t]
\]
the corresponding pricing martingales.  

\begin{assumption}\label{assump:strong-lift-convergence}
The lifted models and the Volterra model are defined on the same
Brownian probability space and satisfy the following conditions.
\begin{enumerate}
    \item[(i)] The variance processes satisfy
    \begin{align}\label{eq:strong-variance-convergence}
        \int_0^T \mathbb E[ |\nu_t^n-\nu_t|^2 ]\,\mathrm dt \longrightarrow0.
    \end{align}

    \item[(ii)] $S^n$ and $S$ are martingales, and there exists $\delta>0$ such that
    \begin{align}\label{eq:uniform-stock-moments-lift}
        \sup_{n\geq1} \mathbb E[ (S_T^n)^{2+\delta} ] + \mathbb E[ S_T^{2+\delta} ] <\infty.
    \end{align}
\end{enumerate}
\end{assumption}

The next proposition establishes a basic convergence result for the processes under consideration.

\begin{proposition}\label{prop:lifted-payoff-L2-convergence}
Under Assumption~\ref{assump:strong-lift-convergence}, it holds that
\begin{align}\label{eq:price-process-lifted-convergence}
    \left\| \sup_{t\in[0,T]} |H_t^{h,n}-H_t^h| \right\|_{L^2(\Omega)} \leq 2\|h(G_T^n, S_T^n) - h(G_T, S_T)\|_{L^2(\Omega)} \longrightarrow 0.
\end{align}
\end{proposition}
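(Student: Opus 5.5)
The proof has two parts: an inequality that is pure martingale theory, and a convergence statement that reduces to strong convergence of the stock paths.

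\emph{The inequality.} The process $H^{h,n}_t-H^h_t=\mathbb E[h(G_T^n,S_T^n)-h(G_T,S_T)\mid\mathcal F_t]$ is a martingale. It is square integrable because $h(G_T^n,S_T^n), h(G_T,S_T)\in L^2(\Omega)$, by the argument given before \eqref{eq:vo-problem} together with Assumption~\ref{assump:strong-lift-convergence}(ii). Doob's $L^2$ maximal inequality then gives
\[
\bigl\|\sup_{t\in[0,T]}|H^{h,n}_t-H^h_t|\bigr\|_{L^2}\le 2\|h(G_T^n,S_T^n)-h(G_T,S_T)\|_{L^2}.
\]
We work with a càdlàg version, which exists since the filtration satisfies the usual conditions.

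\emph{Reduction to the payoff.} For the convergence, we first show $S^n\to S$ in probability uniformly on $[0,T]$. Write $\log S^n_t-\log S_t=-\frac12\int_0^t(\nu^n_u-\nu_u)\,\mathrm du+\int_0^t(\sqrt{\nu^n_u}-\sqrt{\nu_u})\,\mathrm dW^S_u$. Since $|\sqrt a-\sqrt b|^2\le|a-b|$, Doob's inequality and the Itô isometry give
\[
\mathbb E\Bigl[\sup_{t\le T}\Bigl|\int_0^t(\sqrt{\nu^n_u}-\sqrt{\nu_u})\,\mathrm dW^S_u\Bigr|^2\Bigr]\le 4\int_0^T\mathbb E|\nu^n_u-\nu_u|\,\mathrm du,
\]
which tends to $0$ by \eqref{eq:strong-variance-convergence}. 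The drift term is handled by Cauchy--Schwarz. Hence $\sup_t|\log S^n_t-\log S_t|\to0$ in $L^1$, and in particular in probability. Because $g$ is a probability measure, $|\log G^n_T-\log G_T|\le\sup_t|\log S^n_t-\log S_t|$, so $(G^n_T,S^n_T)\to(G_T,S_T)$ in probability.

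\emph{Continuity of $h$ and uniform integrability.} From \eqref{eq:bromwich-general-payoff}, $h$ is continuous on $(0,\infty)^2$, by dominated convergence in $z$. The integrand is bounded by $x^R$ and $x^Ry^{1-R}$ times finite measures, and these bounds are locally uniform in $(x,y)$. It follows that $h(G_T^n,S_T^n)\to h(G_T,S_T)$ in probability. The same bounds give the growth estimate
\[
|h(x,y)|\le C(1+x+y+x^R+x^Ry^{1-R})\le C'(1+x+y).
\]
Combined with $G^n_T\le\int S^n_u\,g(\mathrm du)\le\sup_u S^n_u$ and Doob's $L^{2+\delta}$ inequality for the martingale $S^n$, this yields
\[
\sup_n\mathbb E\bigl|h(G_T^n,S_T^n)\bigr|^{2+\delta}\le C\Bigl(1+\sup_n\mathbb E[(S^n_T)^{2+\delta}]\Bigr)<\infty
\]
by \eqref{eq:uniform-stock-moments-lift}. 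Thus $|h(G^n_T,S^n_T)-h(G_T,S_T)|^2$ is uniformly integrable, and Vitali's theorem gives $L^2$ convergence.

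The main obstacle is this last step. It requires the growth bound on $h$, which uses that $\zeta_0,\zeta_1$ are finite measures and $|x^z|=x^R$ on $\mathcal S_R$, and it requires turning the terminal $(2+\delta)$-moments into a bound on $G^n_T$ via Doob's inequality. The other steps are routine.
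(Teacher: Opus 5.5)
Your proposal is correct and follows the paper's proof essentially step by step. You use Doob's $L^2$ inequality for the bound, and strong convergence of $\log S^n$ via $|\sqrt a-\sqrt b|^2\le|a-b|$, Doob, and Cauchy--Schwarz. You then pass to the geometric average through $|\log G_T^n-\log G_T|\le\sup_t|\log S_t^n-\log S_t|$. Finally you combine continuity and linear growth of $h$ with Doob's $L^{2+\delta}$ inequality and Vitali's theorem. The only difference is cosmetic: you take the log-price supremum in $L^1$ rather than $L^2$, which suffices for the convergence in probability you need.
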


To shorten the notation, let us write $\delta_n := \|h(G_T^n, S_T^n) - h(G_T, S_T)\|_{L^2(\Omega)}$. The previous proposition shows that, under Assumption \ref{assump:strong-lift-convergence}, $\delta_n \to 0$. Its proof is given in Appendix~\ref{sec:proof-lifted-payoff-L2-convergence}. The next theorem addresses the convergence of the variance-optimal hedge. 

\begin{theorem}[Stability of the variance-optimal hedge]
\label{thm:lifted-hedge-stability}
Under Assumption~\ref{assump:strong-lift-convergence}, write
\begin{align*}
    H_t^{h,n} &= v^{*,h,n} + (\vartheta^{h,n} \cdot S^n)_t + L_t^{h,n},
    \\ H_t^{h} &= v^{*,h} + (\vartheta^{h} \cdot S)_t + L_t^{h}
\end{align*}
for the GKW representations of the corresponding pricing martingales, where $v^{*,h,n}, v^{*,h}$ denote the optimal initial capital, $\vartheta^{h,n}, \vartheta^h$ the variance-optimal hedging portfolio, and $L^{h,n}, L^h$ the non-hedgable residuals. Then 
\begin{align}\label{eq:gains-convergence-lift}
    \left\| \sup_{t \leq T}|(\vartheta^{*,h,n}\cdot S^n)_t - (\vartheta^{*,h}\cdot S)_t| \right\|_{L^2(\Omega)} + \ \|\sup_{t \leq T}|L_t^{h,n}-L_t^h|\|_{L^2(\Omega)} &\leq 2\delta_n,
\end{align}
Moreover, if $\epsilon_{h,n} := \mathbb E[ (L_T^{h,n})^2 ]$ and $\epsilon_h := \mathbb E[ (L_T^h)^2 ]$ denote the minimal mean-squared hedging errors, then
\begin{align}\label{eq:minimal-error-bound}
    |\epsilon_{h,n}-\epsilon_h| \leq \delta_n \left( 2\|h(G_T, S_T)\|_{L^2(\Omega)} + \delta_n \right).
\end{align}
\end{theorem}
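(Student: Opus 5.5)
The plan is to rewrite both gains processes and both residuals as stochastic integrals against the two independent drivers $W^S$ and $W^I$. Orthogonality of these drivers then reduces the pathwise statement to a Doob inequality for a single two-dimensional martingale. Since both models live on the same Brownian space with filtration generated by $(W^S,W^I)$, I take the filtration $\mathbb F$ to be the augmented Brownian filtration. This is an assumption, needed for martingale representation: it holds if the weak solutions are strong or are realised on the canonical Brownian space. The price difference is then
\[
M_t:=H^{h,n}_t-H^h_t .
\]

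\textbf{Step 1 (gains and residuals as $W^S$- and $W^I$-integrals).} From $\mathrm dS_t=S_t\sqrt{\nu_t}\,\mathrm dW^S_t$ and $\mathrm dS^n_t=S^n_t\sqrt{\nu^n_t}\,\mathrm dW^S_t$ we get
\[
(\vartheta^{*,h}\cdot S)_t=\int_0^t \vartheta^{*,h}_u S_u\sqrt{\nu_u}\,\mathrm dW^S_u ,
\]
and analogously for the lifted model. The residual $L^h$ is a square-integrable martingale, so by martingale representation
\[
L^h=\int a\,\mathrm dW^S+\int b\,\mathrm dW^I .
\]
The condition $\langle L^h,S\rangle\equiv0$ gives $a\,S\sqrt{\nu}=0$ $\mathrm d t\otimes\mathbb P$-a.e. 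The coefficient $S\sqrt\nu$ is positive $\mathrm dt\otimes\mathbb P$-a.e., because $S>0$ and $\nu>0$ a.e.; I would argue the latter from $\int_0^T\mathbb P(\nu_t=0)\,\mathrm dt=0$ via the affine transform. Hence $a=0$ and $L^h=\int b\,\mathrm dW^I$. The same reasoning applies to $L^{h,n}$ with $\nu^n$. Consequently
\[
G_t:=(\vartheta^{*,h,n}\cdot S^n)_t-(\vartheta^{*,h}\cdot S)_t \quad\text{is a }W^S\text{-integral},
\qquad
\Lambda_t:=L^{h,n}_t-L^h_t \quad\text{is a }W^I\text{-integral}.
\]
Moreover $M_t-M_0=G_t+\Lambda_t$, with $M_0=v^{*,h,n}-v^{*,h}$.

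\textbf{Step 2 (orthogonal energy bound).} Independence of $W^S$ and $W^I$ gives $\langle G,\Lambda\rangle=0$. By Itô's isometry,
\[
\mathbb E[G_T^2]+\mathbb E[\Lambda_T^2]=\mathbb E[(M_T-M_0)^2]=\operatorname{Var}(M_T)\le \mathbb E[M_T^2]=\delta_n^2 .
\]
Here $M_T=h(G^n_T,S^n_T)-h(G_T,S_T)$, and $\delta_n\to0$ by Proposition~\ref{prop:lifted-payoff-L2-convergence}.

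\textbf{Step 3 (maximal inequality).} I apply Doob's $L^2$ inequality to the nonnegative submartingale $\bigl(G_t^2+\Lambda_t^2\bigr)^{1/2}$, the Euclidean norm of the $\mathbb R^2$-valued martingale $(G,\Lambda)$. This yields
\[
\Bigl\|\sup_{t\le T}\bigl(G_t^2+\Lambda_t^2\bigr)^{1/2}\Bigr\|_{L^2}\le 2\bigl(\mathbb E[G_T^2]+\mathbb E[\Lambda_T^2]\bigr)^{1/2}\le 2\delta_n .
\]
This simultaneously controls $\sup_t|G_t|$ and $\sup_t|\Lambda_t|$, and it is the form in which I would establish \eqref{eq:gains-convergence-lift}. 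Applying Doob to $G$ and $\Lambda$ separately only gives $2\|G_T\|+2\|\Lambda_T\|\le 2\sqrt2\,\delta_n$. I therefore expect the main obstacle to be reconciling the literal sum of two $L^2$-norms in \eqref{eq:gains-convergence-lift} with the constant $2$; the joint Euclidean maximal bound is what the orthogonal structure delivers with constant $2$. I would first check whether the intended quantity is the joint one above.

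\textbf{Step 4 (error bound).} Since $L^h_0=L^{h,n}_0=0$ and $L^{h,n}_T=L^h_T+\Lambda_T$, Cauchy--Schwarz gives
\[
|\epsilon_{h,n}-\epsilon_h|=\bigl|2\mathbb E[L^h_T\Lambda_T]+\mathbb E[\Lambda_T^2]\bigr|\le 2\|L^h_T\|\,\|\Lambda_T\|+\|\Lambda_T\|^2 .
\]
Using $\|\Lambda_T\|\le\delta_n$ from Step 2, and $\|L^h_T\|\le\|h(G_T,S_T)-v^{*,h}\|\le\|h(G_T,S_T)\|_{L^2}$ by orthogonality of the GKW decomposition, this yields \eqref{eq:minimal-error-bound}.
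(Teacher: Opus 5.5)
Your argument is in substance the paper's. The paper also writes both pricing martingales as $v+\int\Gamma\,\mathrm dW^S+\int\Lambda\,\mathrm dW^I$ and uses strong orthogonality to get $\delta_n^2=|v^{*,h,n}-v^{*,h}|^2+\mathbb E\int_0^T|\Gamma^{h,n}_t-\Gamma^h_t|^2\,\mathrm dt+\mathbb E\int_0^T|\Lambda^{h,n}_t-\Lambda^h_t|^2\,\mathrm dt$. It then applies Doob, and proves \eqref{eq:minimal-error-bound} with the same difference-of-squares estimate as your Step 4.

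The only real divergence is your Step 1. The paper needs no martingale representation, no Brownian filtration and no positivity of $\nu$. Theorems~\ref{thm:optimal-strategy-geo} and~\ref{thm:lifted-general-hedge} already give the residuals explicitly as
\[
L^h=\sqrt{1-\rho^2}\,\sigma\int\sqrt{\nu}\,Q^h\,\mathrm dW^I,\qquad L^{h,n}=\sqrt{1-\rho^2}\,\sigma\int\sqrt{\nu^n}\,Q^{h,n}\,\mathrm dW^I,
\]
and uniqueness of the GKW decomposition identifies these with the residuals in the statement. Your route instead adds a hypothesis on $\mathbb F$ that is not in the statement. It also rests on the claim $\nu>0$ $\mathrm dt\otimes\mathbb P$-a.e., which you only sketch. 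Replace Step 1 by this citation and both issues disappear.

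Your diagnosis of the constant is correct, and the paper's proof does not resolve it. The paper bounds each integrand difference separately by $\delta_n$ and applies Doob termwise, which literally gives $4\delta_n$ for the left-hand side of \eqref{eq:gains-convergence-lift}. Using the energy identity as in your Step 2 improves this to $2\sqrt2\,\delta_n$. The constant $2$ is obtained only for each supremum separately, or for the joint Euclidean supremum in your Step 3.

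So \eqref{eq:gains-convergence-lift} should be read either with constant $2\sqrt2$ or with your joint quantity on the left. The convergence to zero is unaffected. So is \eqref{eq:minimal-error-bound}, which only uses $\|L^{h,n}_T-L^h_T\|_{L^2(\Omega)}\le\delta_n$.
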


The proof of Theorem~\ref{thm:lifted-hedge-stability} is given in Appendix~\ref{sec:proof-lifted-hedge-stability}.
The uniform moment condition \eqref{eq:uniform-stock-moments-lift} can be verified from the results obtained in \cite{gerhold2019moment, MR4099324}; see also \eqref{eq: moment}. Theorem \ref{thm:strong-variance-convergence-regular} in Appendix~\ref{sec:strong-stability} shows that Assumption \ref{assump:strong-lift-convergence}(i) is typically satisfied for completely monotone Volterra kernels with $\mathcal K\in C^1(\mathbb{R}_+)$.

 \subsection{Numerical hedging experiments}\label{sec:numericshedging}

We first check the transform hedge in the Markovian benchmark of
Section~\ref{sec:numericspricing}.  Table~\ref{tab:contour_validation} varies the real part of the
safe-strip contour while keeping the accepted quadrature and ODE controls fixed.  The displayed
variation is below \(2.0\times10^{-6}\) for the price and \(1.2\times10^{-6}\) for the initial
hedge. 

 \begin{table}[htbp]
\centering
\small
\begin{tabular}{rrr}
\toprule
\(R\) & price & initial hedge \\
\midrule
0.25 & 4.24737598 & 0.45299052 \\
0.50 & 4.24737699 & 0.45298997 \\
0.75 & 4.24737793 & 0.45298938 \\
\bottomrule
\end{tabular}
\caption{Contour check for the Heston benchmark.  The price includes the \(z=1\)
linear term and the contour term representing \(-\min(G_T,K)\); the hedge uses the same
decomposition.}
\label{tab:contour_validation}
\end{table}

For comparison with discretely rebalanced gains, we also report the deterministic frozen-state
proxy obtained from the quadratic-error integrand in Theorem~\ref{thm:optimal-strategy-geo}:
\begin{equation}\label{eq:eps-tilde-def}
 \widetilde\varepsilon
  =(1-\rho^2)\sigma^2\nu_0\int_0^T|A(\tau)|^2\,\mathrm d\tau,
 \qquad
 A(\tau)=H_0(1,0)\psi_2^{1,0}(\tau)
 +\int_{\mathcal S_R}H_0(z,0)\psi_2^{z,0}(\tau)\,\zeta_K(\mathrm dz).
\end{equation}
The first summand corresponds to the linear term $G_T$ in the pricing formula. Freezing the stochastic state at time zero makes \eqref{eq:eps-tilde-def} a leading-order proxy, not the exact continuous-time minimal error.

 \subsubsection*{A regular non-Markovian kernel.}
To assess the finite-factor theory without the singular boundary of a fractional kernel, we use
the completely monotone kernel
\[
 \mathcal K(t)= \int_0^1\mathrm e^{-xt}\,\mathrm dx = \frac{1-\mathrm e^{-t}}{t},\qquad \mathcal K(0)=1.
\]
Midpoint quadrature of the representing measure gives
\[
 x_i=\frac{i-\frac{1}2}{N_{\rm fac}},\qquad
 w_i=\frac{1}{N_{\rm fac}},\qquad
 \mathcal K_{N_{\rm fac}}(t)=\sum_{i=1}^{N_{\rm fac}}w_i\mathrm e^{-x_it}.
\]
We set \(\kappa=2\) and retain
\(\theta=\nu_0=0.04\), \(\sigma=0.3\), \(\rho=-0.7\),
\(S_0=K=100\), \(T=1\), and 21 monitoring dates.  This example satisfies the approximation
condition in Assumption~\ref{assumption:nodes-and-weights} explicitly.  Table~ \ref{tab:regular-kernel-deterministic}
separates factor error from transform-solver error.  The deterministic calculation uses 1,001
Fourier nodes on \(|\operatorname{Im}z|\leq80\) and 500 aligned Runge--Kutta steps.  At 256
factors, changing from 801 to 1,001 Fourier nodes moves the  price by \(9.83\times10^{-6}\) and the
hedge by \(4.70\times10^{-8}\); changing from 500 to 800 ODE steps moves the price by less than
\(10^{-12}\).

\begin{table}[htbp]
\centering
 \scriptsize
\begin{tabular}{rrrrrr}
\toprule
 \(N_{\rm fac}\) &  \(\|\mathcal K_{N_{\rm fac}}-\mathcal K\|_2\) &  price &  \(|\Delta C|\) & hedge & \(|\Delta\vartheta_0|\) \\
\midrule
2 & 3.16e-03 & 4.28638844 & 4.06e-05 & 0.46172385 & 2.03e-05 \\
4 & 7.93e-04 & 4.28635802 & 1.01e-05 & 0.46170861 & 5.09e-06 \\
8 & 1.98e-04 & 4.28635042 & 2.53e-06 & 0.46170479 & 1.27e-06 \\
16 & 4.96e-05 & 4.28634851 & 6.31e-07 & 0.46170384 & 3.17e-07 \\
32 & 1.24e-05 & 4.28634804 & 1.56e-07 & 0.46170360 & 7.83e-08 \\
64 & 3.10e-06 & 4.28634792 & 3.71e-08 & 0.46170354 & 1.86e-08 \\
128 & 7.75e-07 & 4.28634789 & 7.42e-09 & 0.46170353 & 3.73e-09 \\
256 & 1.94e-07 & 4.28634788 & -- & 0.46170352 & -- \\
\bottomrule
\end{tabular}
 \caption{Kernel, price, and time-zero hedge convergence for
\(\mathcal K(t)=(1-\mathrm e^{-t})/t\).  Price and hedge errors are measured against the
256-factor deterministic reference.}
\label{tab:regular-kernel-deterministic}
\end{table}

The common-noise experiment below compares every factor count with a 128-factor reference on the same
Brownian paths.  It uses 2,000 paths, a fixed 1,000-step full-truncation grid, and 20 synchronised
batches.  Besides the log-average and payoff, we report the 20-interval gain
\[
 \Gamma_T^{(20)}=\sum_{k=0}^{19}\vartheta_{t_k}
                 (S_{t_{k+1}}-S_{t_k}).
\]
Theorem~\ref{thm:lifted-hedge-stability} concerns the continuous-time gain under its stated strong
convergence assumption; the last column of Table~\ref{tab:regular-kernel-pathwise} is the
corresponding fixed-grid numerical diagnostic.

\begin{table}[htbp]
\centering
\small
 \begin{tabular}{rrrr}
\toprule
 \(N_{\rm fac}\) &  RMSE\((\log G_T)\) &  RMSE\((h_G)\) &  RMSE\((\Gamma_T^{(20)})\) \\
\midrule
4 & 6.33e-06 (3.3e-07) & 4.87e-04 (3.8e-05) & 4.71e-04 (3.7e-05) \\
8 & 1.58e-06 (8.2e-08) & 1.22e-04 (9.4e-06) & 1.17e-04 (9.1e-06) \\
16 & 3.90e-07 (2.0e-08) & 3.00e-05 (2.3e-06) & 2.90e-05 (2.3e-06) \\
32 & 9.29e-08 (4.8e-09) & 7.15e-06 (5.5e-07) & 6.91e-06 (5.4e-07) \\
64 & 1.86e-08 (9.6e-10) & 1.43e-06 (1.1e-07) & 1.38e-06 (1.1e-07) \\
\bottomrule
\end{tabular}
\caption{ Common-noise factor convergence for the  regular non-Markovian kernel.  Entries are mean
batch RMSEs,  with Student-\(t\) 95\% half-widths across 20 synchronised batches in parentheses. }
 \label{tab:regular-kernel-pathwise}
\end{table}

 We also ran the prespecified positive-exponential factor/rate-range sweep for fractional kernels.
At \(\alpha=0.6\), the smallest relative \(L^2(0,T)\) kernel error was 19.1\%, above the 5\%
admission threshold; at \(\alpha=0.75\) it was 1.44\%.  Because the singular kernel has
\(\mathcal K(0)=\infty\), whereas every finite positive exponential sum has a finite value at zero,
the regular-kernel evidence cannot be transferred to the rough case merely by increasing the
factor count.  We therefore exclude fractional-kernel pathwise hedge results from the paper rather
than presenting an unconverged lift as evidence for the limiting Volterra hedge.

\subsubsection*{Rebalancing frequency.}
For the Heston benchmark, the dynamic strategy is recomputed on each trading date from the current
state and the already observed fixings.  Table~\ref{tab:rebalancing_frequency} and Figure~\ref{fig:rebalancing_frequency} use 20 independent
replications of 5,000 paths each. The simulation grid has 1,000 steps; the preceding equivalence
test validates that choice against 2,000 steps for the hedging-MSE estimand.

\begin{table}[htbp]
\centering
\small
 \begin{tabular}{rcc}
\toprule
 rebalancing intervals &  MSE &  paired contrast versus 5 intervals \\
\midrule
5 & 5.538034 (0.078115) & 0.000000 (0.000000) \\
10 & 2.716779 (0.041592) & -2.821256 (0.072467) \\
20 & 1.543345 (0.019965) & -3.994689 (0.077175) \\
50 & 0.979941 (0.011020) & -4.558093 (0.075709) \\
100 & 0.749233 (0.009791) & -4.788801 (0.076793) \\
200 & 0.645082 (0.007471) & -4.892952 (0.075860) \\
\bottomrule
\end{tabular}
 \caption{Out-of-sample hedging MSE as the trading grid is refined.  Parentheses contain
Student-\(t\) 95\% half-widths across independent replications.  Contrasts are formed within each
replication.}
\label{tab:rebalancing_frequency}
\end{table}

 \begin{figure}[htbp]
 \centering
  \includegraphics[width=0.68\linewidth]{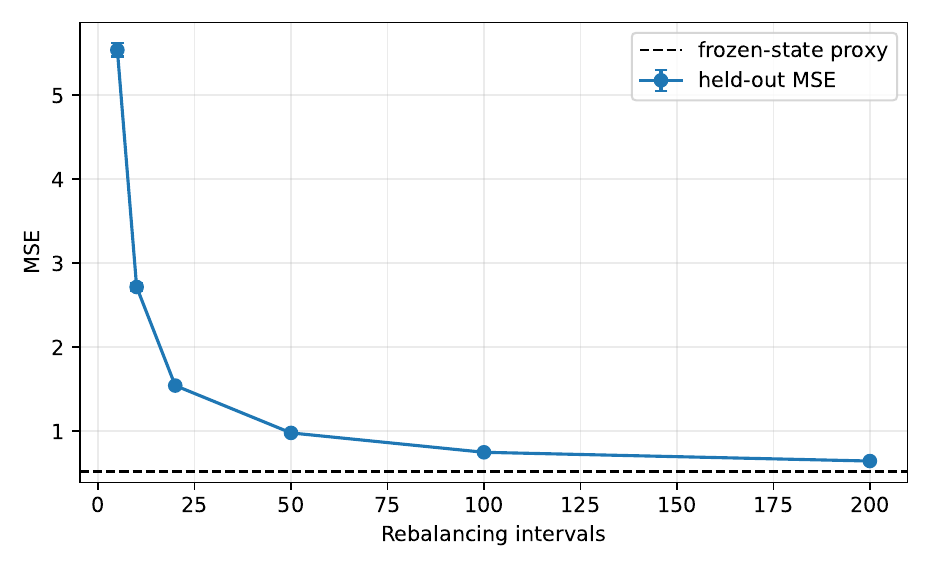}
 \caption{ Hedging MSE against the  number of rebalancing intervals.   Error bars are Student-\(t\)
 95\% intervals across 20 independent replications; the  dashed line is  the  frozen-state proxy
  \(\widetilde\varepsilon=0.518308\) from \eqref{eq:eps-tilde-def}. }
  \label{fig:rebalancing_frequency}
\end{figure}

 The MSE decreases monotonically from 5.538 to 0.645 as the grid is refined from 5 to 200 intervals.
The remaining difference from the frozen-state proxy reflects both discrete trading and the fact
that \eqref{eq:eps-tilde-def} freezes a genuinely stochastic integrand.  All transform controls,
seeds, simulation schemes, factor nodes, and full replication-level outputs are recorded in the
computational release.

\section{Pricing and Hedging of Arithmetic Asian Options}\label{sec:arithmetic-Asian-options}

In this section, we use a control variate approach to accurately and efficiently price and hedge discretely monitored arithmetic Asian options. Recall that both the price and variance-optimal hedge for the geometric Asian payoff $(\bar{G}-K)^+$ are available in semi-closed form from Theorem~\ref{thm:general-geometric-payoff-pricing} and Theorem~\ref{thm:optimal-strategy-geo} respectively. For the arithmetic Asian payoff $h_A(\bar{A}) = (\bar{A}-K)^+$, no comparable affine Fourier representation is available in the present framework, which motivates numerical approaches to the pricing and hedging of such arithmetic Asian payoffs.

\subsection{Pricing of arithmetic Asian options using control variates}\label{sec:CV-pricing-arithmetic}

 Let the arithmetic and geometric averages over \(t_0,\ldots,t_N\) be
\[
 \bar A=\frac{1}{N+1}\sum_{j=0}^N S_{t_j},
 \qquad
 \bar G=\exp\!\left(\frac{1}{N+1}\sum_{j=0}^N\log S_{t_j}\right),
\]
with fixed-strike call payoffs \(h_A(\bar A)=(\bar A-K)^+\) and
 \(h_G(\bar G)=(\bar G-K)^+\).  The arithmetic payoff is not affine, but the geometric payoff has
the semi-analytic value \(C_G\) from Theorem~\ref{thm:general-geometric-payoff-pricing}.  It is
therefore a natural control variate.

For a coefficient \(\lambda\), the  held-out estimator is
\begin{equation}\label{eq:arith-cv-estimator}
 \widehat C_A^{\rm CV}
 =\frac{1}{m}\sum_{i=1}^m h_A(\bar A^{(i)})
 -\widehat\lambda^{\rm tr}
 \left(\frac{1}{m}\sum_{i=1}^m h_G(\bar G^{(i)})-C_G\right),
\end{equation}
where
 \[
 \widehat\lambda^{\rm tr}
 =\frac{\widehat{\operatorname{Cov}}_{\rm tr}(h_A,h_G)}{\widehat{\operatorname{Var}}_{\rm tr}(h_G)}
\]
 is estimated on an independent training sample.  Conditional on the training sample,
\eqref{eq:arith-cv-estimator} is unbiased when \(C_G\) is exact.  The same coefficient is obtained
from discounted or undiscounted payoffs because the common deterministic discount factor cancels.

 The following experiment uses the Heston parameters and 21-date contract from
Section~\ref{sec:numericspricing}.  Each of the 20 independent replications uses 3,000 training paths
and 5,000 held-out paths, simulated with the exact-CIR/trapezoidal scheme and 1,000 time steps.
Direct and control-variate estimators share the held-out paths within a replication.  The geometric
control value is computed with the accepted safe-strip transform controls.  Table~\ref{tab:arithmetic-pricing}
reports replication means and Student-\(t\) 95\% half-widths.

 \begin{table}[htbp]
\centering
 \small
\begin{tabular}{lrr}
\toprule
quantity &  replication mean &  95\% half-width \\
 \midrule
Direct & 4.366431 & 0.043844 \\
Control variate & 4.368634 & 0.001186 \\
CV minus Direct & 0.002203 & 0.043892 \\
Variance reduction (fraction) & 0.999363 & 0.000030 \\
\bottomrule
\end{tabular}
\caption{ Held-out arithmetic-Asian pricing experiment.  The last row is one minus the  ratio of
the  adjusted-payoff variance to the  direct-payoff variance.   Intervals use the  20 independent
replications as  sampling units.}
 \label{tab:arithmetic-pricing}
\end{table}

 The direct and control-variate estimates agree within Monte Carlo uncertainty.  The geometric
control reduces payoff variance by 99.936\%, and the half-width for the mean price falls from
0.0438 to 0.00119.  This comparison includes the cost of estimating \(\lambda\) on a separate
sample and does not rely on the same-sample approximation that can introduce a finite-sample
control-variate bias.

\subsection{Approximate hedging of arithmetic Asian options via control-variate regression}\label{sec:CV-hedging-arithmetic}

We now use the semi-analytic geometric hedge as a control signal in a \emph{finite-basis backward-regression} scheme for arithmetic Asian hedging. The strategy approximates the stock-only GKW hedge on the chosen rebalancing grid; it is not a closed-form expression or a consistency theorem for the continuous-time arithmetic-Asian hedge. Throughout this subsection we work in discounted units. The experiments use $r=0$, so the stock is a $\mathbb P$-martingale; for non-zero rates the same construction applies to discounted stock prices and payoffs.

\subsubsection*{GKW decompositions.}
Write $H_t^G:=\mathbb{E}[h_G(\bar G)\mid\mathcal{F}_t]$ and $H_t^A:=\mathbb{E}[h_A(\bar A)\mid\mathcal{F}_t]$ for the pricing martingales. Assume $h_A(\bar A),h_G(\bar G)\in L^2$ and that the discounted stock $S$ is a square-integrable martingale. Each admits a GKW decomposition with respect to~$S$:
\begin{align}\label{eq:GKW-A-G}
  H_T^G = H_0^G + (\vartheta^{*,G}\cdot S)_T + L_T^G,\qquad
  H_T^A = H_0^A + (\vartheta^{*,A}\cdot S)_T + L_T^A,
\end{align}
where $\vartheta^{*,G}\in \mathcal{G}^2(S)$ and $L^G\perp S$ are given explicitly by Theorem~\ref{thm:optimal-strategy-geo}. The GKW integrand $\vartheta^{*,A}$ and the residual $L^A$ have no closed-form expression.

\subsubsection*{Control variate decomposition.}
For a scalar $\lambda\in\mathbb{R}$, define the \emph{residual payoff}
\begin{align}\label{eq:Delta-payoff}
  \Delta_\lambda := h_A(\bar A) - \lambda\, h_G(\bar G).
\end{align}
For fixed $\lambda$, the linearity of the GKW projection gives
\begin{align}\label{eq:CV-hedge-decomposition}
  \vartheta^{*,A}_t = \lambda\,\vartheta^{*,G}_t + \vartheta^{\Delta}_t,
\end{align}
where $\vartheta^{\Delta}\in \mathcal{G}^2(S)$ is the exact GKW integrand of $\Delta_\lambda$ with respect to~$S$ (to be distinguished from its regression estimate $\hat\vartheta^{\Delta}$ below). The key observation is that, when $\lambda$ is well chosen, the basis-risk payoff $\Delta_\lambda$ has substantially lower variance in the parameter regimes considered below; estimating $\vartheta^{\Delta}$ by regression is therefore empirically more stable than a direct regression for $\vartheta^{*,A}$.

\begin{proposition}[Optimal control variate coefficient]\label{prop:optimal-lambda}
Assume $\mathrm{Var}(h_G(\bar G))>0$. The coefficient $\lambda^*$ minimizing $\mathrm{Var}(\Delta_\lambda)$ is
\begin{align}\label{eq:lambda-star}
  \lambda^* = \frac{\mathrm{Cov}\big(h_A(\bar A),\, h_G(\bar G)\big)}{\mathrm{Var}\big(h_G(\bar G)\big)},
\end{align}
and the minimal residual variance satisfies
\begin{align}\label{eq:residual-var}
  \mathrm{Var}(\Delta_{\lambda^*}) = (1-\varrho_{AG}^2)\,\mathrm{Var}\big(h_A(\bar A)\big),
\end{align}
where $\varrho_{AG}:=\mathrm{Corr}(h_A(\bar A),h_G(\bar G))$.
\end{proposition}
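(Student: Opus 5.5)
The proof is the elementary minimisation of a quadratic in one real variable, so the plan is a direct computation. Write $X := h_A(\bar A)$ and $Y := h_G(\bar G)$. Both lie in $L^2$ by the standing assumption of this subsection, so all variances and covariances below are finite. By bilinearity of covariance,
\[
  f(\lambda) := \mathrm{Var}(\Delta_\lambda) = \mathrm{Var}(X) - 2\lambda\,\mathrm{Cov}(X,Y) + \lambda^2\,\mathrm{Var}(Y).
\]
This is a real quadratic polynomial in $\lambda$ with leading coefficient $\mathrm{Var}(Y)>0$, by hypothesis.

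First, $f$ is strictly convex, so its unique minimiser is the root of $f'(\lambda) = -2\,\mathrm{Cov}(X,Y) + 2\lambda\,\mathrm{Var}(Y)$. This gives exactly $\lambda^*$ as in \eqref{eq:lambda-star}. Alternatively, completing the square gives
\[
  f(\lambda) = \mathrm{Var}(Y)(\lambda-\lambda^*)^2 + \mathrm{Var}(X) - \frac{\mathrm{Cov}(X,Y)^2}{\mathrm{Var}(Y)},
\]
which shows both minimality and uniqueness at once.

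Second, substituting $\lambda^*$ yields
\[
  \mathrm{Var}(\Delta_{\lambda^*}) = \mathrm{Var}(X) - \frac{\mathrm{Cov}(X,Y)^2}{\mathrm{Var}(Y)}.
\]
Now split into cases on $\mathrm{Var}(X)$. If $\mathrm{Var}(X)>0$, then $\varrho_{AG}$ is well defined with $\varrho_{AG}^2 = \mathrm{Cov}(X,Y)^2/(\mathrm{Var}(X)\mathrm{Var}(Y))$. Factoring out $\mathrm{Var}(X)$ then gives \eqref{eq:residual-var}.

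The only point needing care is this degenerate case, where $\mathrm{Var}(X)=0$ and the correlation is undefined. Here Cauchy--Schwarz gives $\mathrm{Cov}(X,Y)=0$, so $\lambda^*=0$ and both sides of \eqref{eq:residual-var} vanish under any convention for $\varrho_{AG}$. Beyond this, no real obstacle is expected.
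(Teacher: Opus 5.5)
Your proposal is correct and follows the same route as the paper: expand $\mathrm{Var}(\Delta_\lambda)$ as a quadratic in $\lambda$, solve the first-order condition, and substitute back to obtain $\mathrm{Var}(h_A)-\mathrm{Cov}(h_A,h_G)^2/\mathrm{Var}(h_G)=(1-\varrho_{AG}^2)\,\mathrm{Var}(h_A)$. The completing-the-square form, which gives uniqueness of the minimiser, and your treatment of the degenerate case $\mathrm{Var}(h_A(\bar A))=0$, where $\varrho_{AG}$ is undefined, are small gains in rigour that the paper's proof leaves implicit.
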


\begin{proof}
This is a standard $L^2$-projection; see also~\cite[Section~4.1]{glasserman2004monte} for the pricing analogue. Minimizing $\mathbb{E}[\Delta_\lambda^2]-(\mathbb{E}[\Delta_\lambda])^2$ over $\lambda$ gives \eqref{eq:lambda-star} by first-order conditions. Substituting back yields $\mathrm{Var}(\Delta_{\lambda^*})=\mathrm{Var}(h_A)-\frac{\mathrm{Cov}(h_A,h_G)^2}{\mathrm{Var}(h_G)}=\mathrm{Var}(h_A)-(\lambda^*)^2\mathrm{Var}(h_G)=\mathrm{Var}(h_A)(1-\varrho_{AG}^2)$.
\end{proof}

Since the arithmetic and geometric averages satisfy $\bar G\leq \bar A$ (AM--GM inequality) and the two averages are close when path dispersion is moderate, the two call payoffs are often highly correlated. This high correlation is an empirical property of the parameter regimes considered below.

\begin{remark}[What $\lambda^*$ optimises]\label{rem:lambda-scope}
Proposition~\ref{prop:optimal-lambda} is an $L^2$ statement about the payoff residual $\Delta_\lambda$: $\lambda^*$ minimizes $\mathrm{Var}(\Delta_\lambda)$. It is not claimed here that the same coefficient minimizes the full hedging error of $\hat\vartheta^A=\lambda\vartheta^{*,G}+\hat\vartheta^\Delta$ over all choices of $\lambda$ and $\hat\vartheta^\Delta$.
\end{remark}

\subsubsection*{Hedging error decomposition.}
We quantify the quality of any strategy of the form $\hat\vartheta^A = \lambda\,\vartheta^{*,G}+\hat\vartheta^{\Delta}$ where $\hat\vartheta^{\Delta}$ is an estimate of the residual hedge.

\begin{proposition}[Hedging error decomposition]\label{prop:hedge-error-decomp}
Let $\hat\vartheta^{\Delta}\in \mathcal{G}^2(S)$ be any predictable process. Define
$$
  \hat\vartheta_t^A := \lambda\,\vartheta_t^{*,G} + \hat\vartheta_t^{\Delta}.
$$
Then the mean-squared hedging error satisfies
\begin{align}\label{eq:mse-decomposition}
  \mathbb{E}\big[\big(h_A(\bar A) - H_0^A - (\hat\vartheta^A\cdot S)_T\big)^2\big]
  = \lambda^2\,\epsilon_G + \mathbb{E}\big[\big(e^\Delta\big)^2\big] + 2\lambda\,\mathbb{E}[L_T^G\, e^{\Delta}],
\end{align}
where $e^\Delta := \Delta_\lambda - \mathbb{E}[\Delta_\lambda] - (\hat\vartheta^{\Delta}\cdot S)_T$ is the residual hedging error, and $\epsilon_G:=\mathbb{E}[(L_T^G)^2]$ is the irreducible hedging error for the geometric Asian, given semi-analytically by Theorem~\ref{thm:optimal-strategy-geo}.
\end{proposition}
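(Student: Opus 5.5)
The approach is to express the hedging error as the sum of the geometric GKW residual and the residual hedging error, and then expand the square. At the terminal time the GKW decompositions \eqref{eq:GKW-A-G} give $h_G(\bar G)=H_T^G = H_0^G+(\vartheta^{*,G}\cdot S)_T+L_T^G$. We also have $H_0^A=\mathbb{E}[h_A(\bar A)]$ and $H_0^G=\mathbb{E}[h_G(\bar G)]$, since $\mathcal F_0$ is trivial or at least the pricing martingales start at the expectations. From the definition \eqref{eq:Delta-payoff} of $\Delta_\lambda$ we get $\mathbb{E}[\Delta_\lambda]=H_0^A-\lambda H_0^G$. Writing $h_A(\bar A)=\Delta_\lambda+\lambda h_G(\bar G)$ and using linearity of the stochastic integral, $(\hat\vartheta^A\cdot S)_T=\lambda(\vartheta^{*,G}\cdot S)_T+(\hat\vartheta^\Delta\cdot S)_T$, we obtain
\begin{align*}
 h_A(\bar A)-H_0^A-(\hat\vartheta^A\cdot S)_T
 &= \lambda\bigl(h_G(\bar G)-H_0^G-(\vartheta^{*,G}\cdot S)_T\bigr) + \bigl(\Delta_\lambda-\mathbb{E}[\Delta_\lambda]-(\hat\vartheta^\Delta\cdot S)_T\bigr)
 \\ &= \lambda L_T^G + e^\Delta .
\end{align*}

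Next, we square this identity and take expectations. This gives $\lambda^2\mathbb{E}[(L_T^G)^2]+\mathbb{E}[(e^\Delta)^2]+2\lambda\mathbb{E}[L_T^G e^\Delta]$, which is \eqref{eq:mse-decomposition} with $\epsilon_G=\mathbb{E}[(L_T^G)^2]$, the quantity identified in Theorem~\ref{thm:optimal-strategy-geo}.

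The only point needing care is integrability, so that every term is finite and the expansion is legitimate. We have $h_A(\bar A),h_G(\bar G)\in L^2$ by assumption, hence $\Delta_\lambda\in L^2$. Because $\hat\vartheta^\Delta,\vartheta^{*,G}\in\mathcal G^2(S)$ and $S$ is a square-integrable martingale, the It\^o isometry puts both stochastic integrals in $L^2$. The residual $L_T^G$ lies in $L^2$ because it is a GKW residual. Therefore $\lambda L_T^G$ and $e^\Delta$ are in $L^2$, and the cross term is finite by Cauchy--Schwarz.

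I expect no real obstacle here. One could optionally remark that if $\hat\vartheta^\Delta$ were the exact GKW integrand $\vartheta^\Delta$, then $e^\Delta=L_T^\Delta$ is strongly orthogonal to $S$; however, the cross term with $L^G$ need not vanish, which is why it is retained in the statement.
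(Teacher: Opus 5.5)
Your proposal is correct and follows the paper's proof essentially verbatim: it establishes the pathwise identity $h_A(\bar A)-H_0^A-(\hat\vartheta^A\cdot S)_T=\lambda L_T^G+e^\Delta$ from $h_A=\Delta_\lambda+\lambda h_G$, the GKW identity for $h_G$, and $\mathbb{E}[\Delta_\lambda]=H_0^A-\lambda H_0^G$, and then squares and takes expectations. Your $L^2$-integrability check, and your remark that $H_0^A=\mathbb{E}[h_A(\bar A)]$ requires a trivial $\mathcal F_0$ (in $r=0$ units), make explicit what the paper leaves implicit.
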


\begin{proof}
Write $e^A:= h_A -H_0^A -(\hat\vartheta^A\cdot S)_T$. From the decomposition $h_A=\Delta_\lambda+\lambda h_G$, the GKW identity $h_G=H_0^G+(\vartheta^{*,G}\cdot S)_T+L_T^G$, and the definition $\hat\vartheta^A=\lambda\vartheta^{*,G}+\hat\vartheta^\Delta$, we obtain
\begin{align*}
  e^A &= \Delta_\lambda + \lambda h_G - H_0^A - \lambda(\vartheta^{*,G}\cdot S)_T - (\hat\vartheta^\Delta\cdot S)_T \\
  &= \lambda\bigl[h_G-H_0^G-(\vartheta^{*,G}\cdot S)_T\bigr] + \bigl[\Delta_\lambda-\mathbb{E}[\Delta_\lambda]-(\hat\vartheta^\Delta\cdot S)_T\bigr] + \bigl[\lambda H_0^G+\mathbb{E}[\Delta_\lambda]-H_0^A\bigr] \\
  &= \lambda L_T^G + e^\Delta,
\end{align*}
since the third bracket vanishes by $\mathbb{E}[\Delta_\lambda]=H_0^A-\lambda H_0^G$ (in the discounted, $r=0$ units of this subsection, where $H_0^A=\mathbb{E}[h_A]$ and $H_0^G=\mathbb{E}[h_G]$). Squaring path-by-path and taking expectations gives
$$
    \mathbb{E}[(e^A)^2] = \lambda^2\,\mathbb{E}[(L_T^G)^2] + \mathbb{E}[(e^\Delta)^2] + 2\lambda\,\mathbb{E}[L_T^G\,e^\Delta] = \lambda^2\,\epsilon_G + \mathbb{E}[(e^\Delta)^2] + 2\lambda\,\mathbb{E}[L_T^G\,e^\Delta],
$$
which is~\eqref{eq:mse-decomposition}.
\end{proof}

\begin{remark}[The cross-term]\label{rem:mse-bound}
The cross-term $\mathbb{E}[L_T^G\,e^\Delta]$ in~\eqref{eq:mse-decomposition} does not vanish in general. Orthogonality of $L^G$ to $S$-stochastic integrals gives $\mathbb{E}[L_T^G\,(\hat\vartheta^\Delta\cdot S)_T]=0$, but $\Delta_\lambda-\mathbb{E}[\Delta_\lambda]$ and $L^G$ may both contain volatility-driven components. Even for the exact residual GKW hedge, $L_T^\Delta$ is orthogonal to $S$ but need not be orthogonal to $L_T^G$. Thus \eqref{eq:mse-decomposition} is used only to motivate the low-variance residual regression; the experiments report realised out-of-sample mean-squared errors directly.
\end{remark}

 \subsubsection*{Backward regression scheme for \(\vartheta^\Delta\).}
We use the rebalancing grid \(0=t_0<\cdots<t_N=T\), chosen equal to the monitoring grid.  The
reduced state at date \(t_k\) is
\[
 \mathbf x_k=(\log S_{t_k},\nu_{t_k}^+,A_k,\bar X_k,\vartheta_{t_k}^{*,G}),
 \qquad \nu_{t_k}^+=\max(\nu_{t_k},0),
\]
where
 \[
 A_k=\frac{1}{N+1}\sum_{j=0}^kS_{t_j},
 \qquad
 \bar X_k=\frac{1}{N+1}\sum_{j=0}^k\log S_{t_j}.
\]
 The current fixing is included before the  hedge at  that date is chosen.  The partial averages are
both payoff-bookkeeping variables and regression covariates.

 \begin{definition}[Training-centred control-variate regression hedge]
\label{def:CV-backward-hedge}
Let  \(I_{\rm tr}\) be the  training paths and let
\(\phi=(\phi_1,\ldots,\phi_d)^\top\) be a fixed finite basis.  Estimate
\[
 \widehat\lambda^{\rm tr}
 =\frac{\widehat{\operatorname{Cov}}_{\rm tr}(h_A,h_G)}{\widehat{\operatorname{Var}}_{\rm tr}(h_G)}
\]
 whenever the training variance of \(h_G\) is positive, and define
\[
 \bar h_A^{\rm tr}=\frac{1}{|I_{\rm tr}|}\sum_{i\in I_{\rm tr}}h_A(\bar A^{(i)}),
 \qquad
 \bar\Delta^{\rm tr}=\frac{1}{|I_{\rm tr}|}\sum_{i\in I_{\rm tr}}
 \left(h_A(\bar A^{(i)})-\widehat\lambda^{\rm tr}h_G(\bar G^{(i)})\right).
\]
 For the residual regression, initialise the training target by
\[
 \widehat V_N^{(i)}=h_A(\bar A^{(i)})-
 \widehat\lambda^{\rm tr}h_G(\bar G^{(i)})-\bar\Delta^{\rm tr}.
\]
 For the direct regression use \(h_A(\bar A^{(i)})-\bar h_A^{\rm tr}\) instead.  For
\(k=N-1,\ldots,0\), solve
\begin{align}\label{eq:hedge-ratio-regression}
 \widehat\gamma_k
 :=\argmin_{\gamma\in\mathbb R^d}\frac{1}{|I_{\rm tr}|}
 \sum_{i\in I_{\rm tr}}
 \left(\widehat V_{k+1}^{(i)}-
 \gamma^\top\phi(\mathbf x_k^{(i)})
 (S_{t_{k+1}}^{(i)}-S_{t_k}^{(i)})\right)^2,
\end{align}
 with the stated ridge penalty when used numerically.  Set
\[
 \widehat\vartheta_{t_k}^{\Delta,(i)}
 =\widehat\gamma_k^\top\phi(\mathbf x_k^{(i)}),
 \qquad
 \widehat V_k^{(i)}=\widehat V_{k+1}^{(i)}-
 \widehat\vartheta_{t_k}^{\Delta,(i)}
 (S_{t_{k+1}}^{(i)}-S_{t_k}^{(i)}).
\]
The full control-variate hedge on training and held-out paths is
\begin{align}\label{eq:full-CV-hedge}
 \widehat\vartheta^A_{t_k}
 =\widehat\lambda^{\rm tr}\vartheta^{*,G}_{t_k}
 +\widehat\vartheta^\Delta_{t_k}.
\end{align}
All held-out errors use the training-only arithmetic initial capital
\(\bar h_A^{\rm tr}\).
\end{definition}

 This recursion estimates local gain coefficients rather than continuation values.  It is a
finite-basis, discrete-time local-risk-minimisation procedure, not a closed-form arithmetic-Asian
hedge.

\begin{remark}[Variance reduction in the regression step]
\label{rem:regression-variance-reduction}
The  residual target in \eqref{eq:hedge-ratio-regression}  can have substantially smaller variance
than the arithmetic payoff.  This improves finite-sample stability, but the realised hedging-error
comparison still depends on the fitted strategy and on the cross term in
Proposition~\ref{prop:hedge-error-decomp}.
\end{remark}

\begin{remark}[Choice of basis functions]\label{rem:basis-functions}
The  principal comparison uses the same basis for the direct and control-variate  fits:
\[
 B_{\rm path}=\{1,\log S_{t_k},\nu_{t_k}^+,
 \log S_{t_k}\nu_{t_k}^+,A_k,\bar X_k,\vartheta_{t_k}^{*,G}\}.
\]
Nonconstant columns are standardised using training data only. Because
\(\vartheta_{t_k}^{*,G}\) is included in both fits, the unpenalised methods search the same
finite-dimensional class of discretely rebalanced terminal gains. With ridge regularisation, the
penalty is not invariant under recentering the strategy around the geometric hedge. The common
ridge-sensitivity experiment therefore checks that the comparison is not driven by this
parametrisation effect. Subject to that qualification, the numerical difference measures the
finite-sample effect of changing the regression target rather than a basis-class advantage.
\end{remark}

\begin{proposition}\label{prop:CV-hedge-convergence}
Fix a date $t_k$ in the rebalancing grid and the basis functions $\{\phi_q\}_{q=1}^d$.  Condition on the previously constructed target $\hat V_{k+1}$ and treat it as fixed during the OLS step at date $t_k$ (no quantity already entering $\hat V_{k+1}$ is re-estimated at $t_k$). Suppose:
\begin{enumerate}
\item[(i)] the moment matrix $\mathbb{E}\bigl[\phi(\mathbf{x}_k)\phi(\mathbf{x}_k)^\top(\Delta S_k)^2\bigr]$ is finite and nonsingular, where $\phi=(\phi_1,\dots,\phi_d)^\top$ and $\Delta S_k:=S_{t_{k+1}}-S_{t_k}$;
\item[(ii)] $\mathbb{E}[\hat V_{k+1}^{\,2}]<\infty$.
\end{enumerate}
Then the least-squares estimator $\hat\gamma_k$ defined by~\eqref{eq:hedge-ratio-regression} converges in probability as the sample size $m\to\infty$ to the population projection coefficient
$$
  \gamma_k^* := \arg\min_{\gamma\in\mathbb{R}^d}\mathbb{E}\Big[\big(\hat V_{k+1}-\sum_{q=1}^d\gamma_q\phi_q(\mathbf{x}_k)\Delta S_k\big)^2\Big].
$$
\end{proposition}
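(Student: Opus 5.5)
The plan is to treat the step at date $t_k$ as an ordinary least-squares problem with regressor vector $Z:=\phi(\mathbf x_k)\Delta S_k\in\mathbb R^d$ and response $Y:=\hat V_{k+1}$. The sample consists of i.i.d.\ copies $(Z^{(i)},Y^{(i)})_{i=1}^m$, using the conditioning convention in the statement: $\hat V_{k+1}$ is a fixed measurable function of the path, so different paths give i.i.d.\ pairs. Writing $M:=\mathbb E[ZZ^\top]$ and $b:=\mathbb E[ZY]$, the first-order conditions for the population problem give the unique minimiser
\[
\gamma_k^*=M^{-1}b.
\]
This uses assumption (i): $M$ is finite and nonsingular, hence positive definite, so the quadratic objective is strictly convex.

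First, the sample problem \eqref{eq:hedge-ratio-regression} has normal equations $\hat M_m\hat\gamma_k=\hat b_m$, where
\[
\hat M_m:=\frac1m\sum_i Z^{(i)}Z^{(i)\top},\qquad \hat b_m:=\frac1m\sum_i Z^{(i)}Y^{(i)}.
\]
On the event that $\hat M_m$ is invertible, $\hat\gamma_k=\hat M_m^{-1}\hat b_m$; off this event any minimiser may be chosen.

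Second, integrability. Each entry of $ZZ^\top$ is integrable by (i). For $ZY$, Cauchy--Schwarz gives
\[
\mathbb E|Z_qY|\le \mathbb E[Z_q^2]^{1/2}\,\mathbb E[Y^2]^{1/2}<\infty,
\]
where $\mathbb E[Z_q^2]=M_{qq}<\infty$ by (i) and $\mathbb E[Y^2]<\infty$ by (ii). Kolmogorov's strong law of large numbers, applied entrywise, then yields $\hat M_m\to M$ and $\hat b_m\to b$ almost surely, and hence in probability.

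Third, continuity. Matrix inversion is continuous on the open set of invertible matrices and $M$ lies in that set. Therefore $\mathbb P(\hat M_m \text{ invertible})\to1$, and on this event the continuous mapping theorem gives $\hat M_m^{-1}\hat b_m\to M^{-1}b=\gamma_k^*$ in probability. The arbitrary choice off the invertibility event does not affect the limit, because that event has vanishing probability. If the ridge penalty is used, with $\lambda_m\to0$, the same argument applies to $(\hat M_m+\lambda_m I)^{-1}\hat b_m$.

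The main obstacle is justifying the i.i.d.\ structure, since in the actual scheme $\hat V_{k+1}$ depends on coefficients estimated from the same training sample. I would handle this exactly as the statement allows: condition on the earlier estimates, or equivalently treat $\hat V_{k+1}$ as a fixed function. Under that conditioning the pairs are i.i.d.\ given the later-date quantities, and the law of large numbers applies conditionally; convergence in probability then follows unconditionally by dominated convergence of the conditional probabilities. The remaining steps are routine.
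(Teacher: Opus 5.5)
Your proposal is correct and follows the paper's argument. The paper observes that, with $\hat V_{k+1}$ held fixed, \eqref{eq:hedge-ratio-regression} is an OLS problem with regressors $\phi(\mathbf{x}_k)\Delta S_k$ and target $\hat V_{k+1}$, and then cites standard least-squares consistency; you make this explicit via the normal equations, the strong law (with Cauchy--Schwarz giving integrability of $ZY$), and continuity of inversion at the nonsingular $M$.

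One caution: only the reading of $\hat V_{k+1}$ as a fixed measurable function of the path gives i.i.d.\ pairs. Conditioning on later-date coefficients estimated from the \emph{same} training paths does not leave those paths conditionally i.i.d. Your ``equivalently'' and the conditional-LLN remark therefore fail for the actual scheme and should be dropped. This is precisely why the paper calls the result a per-step statement and leaves error propagation through the backward recursion open.
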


\begin{proof}
With $\hat V_{k+1}$ held fixed, \eqref{eq:hedge-ratio-regression} is an ordinary least-squares problem with regressors $\phi(\mathbf{x}_k)\Delta S_k$ and target $\hat V_{k+1}$. Assumptions~(i)--(ii) yield consistency of the OLS estimator for $\gamma_k^*$ by standard least-squares theory; see, e.g., \cite[Chapter 1]{Gyorfi2002}.
\end{proof}

This is only a per-step OLS consistency statement with fixed target. It does not prove consistency of the full backward recursion, because $\hat V_{k+1}$ is itself built from later estimated coefficients on the same sample; propagation of error across dates is left open.

\subsubsection*{Numerical experiments.}
 The arithmetic hedging experiment is restricted to the classical Heston benchmark
\(\alpha=1\).  The fractional-kernel accuracy gate reported in
Section~\ref{sec:numericshedging}  rules out using the available rough lift as evidence for the
limiting Volterra hedge.  The contract and model parameters are otherwise unchanged: 21 monitoring
dates, \(K=100\), \(T=1\), and \(r=0\).  The rebalancing grid coincides with the monitoring grid.

 Each of 20 independent replications uses 3,000 training paths and 5,000 held-out paths.  Paths are
simulated with the exact-CIR/trapezoidal scheme on 1,000 time steps.  The training sample alone
determines the payoff coefficient \(\widehat\lambda^{\rm tr}\), initial-capital estimate, feature
centres and scales, active columns, and regression coefficients.  Nonconstant features are
standardised using training moments, a ridge parameter \(10^{-6}\) is applied after scaling, and a
rank-revealing singular-value decomposition removes numerically constant columns.  At time zero
only the intercept remains active.

Both the direct and control-variate regressions use the same path-aware basis
\[
 B_{\rm path}=\{1,\log S_{t_k},\nu_{t_k}^+,\log S_{t_k}\nu_{t_k}^+,
                  A_k,\bar X_k,\vartheta_{t_k}^{*,G}\}.
\]
Thus the comparison does not give the control-variate method an enlarged strategy class.  In Table~\ref{tab:cv_hedge_comparison}, we
compare:
\begin{enumerate}[label=\emph{(\roman*)}, leftmargin=*]
  \item \textbf{Geometric-only}: \(\widehat\lambda^{\rm tr}\vartheta^{*,G}\), with no
  residual regression;
  \item \textbf{Direct-\(B_{\rm path}\)}: backward local-gain regression on the centred
  arithmetic payoff;
  \item \textbf{CV-\(B_{\rm path}\)}: the geometric component plus backward regression
  on the centred payoff residual.
\end{enumerate}
 All methods use the same training-only arithmetic initial capital.  The reported residual is
\(h_A(\bar A)-\widehat H_0^{A,\rm tr}-(\widehat\vartheta\cdot S)_T\), so the MSE includes any
out-of-sample mean error rather than silently recentering the test sample.

 \begin{table}[htbp]
\centering
 \scriptsize
\begin{tabular}{lrrr}
\toprule
 Method &  MSE &  RMSE &  mean residual \\
\midrule
Geometric-only & 1.703314 (0.021421) & 1.304998 (0.008233) & 0.013306 (0.049354) \\
Direct-\(B_{\rm path}\) & 3.548026 (0.154026) & 1.881735 (0.040459) & 0.007559 (0.050256) \\
CV-\(B_{\rm path}\) & 1.695464 (0.022119) & 1.301979 (0.008518) & 0.014082 (0.049135) \\
CV minus Direct & -1.852562 (0.150718) & -- & -- \\
CV minus Geometric & -0.007850 (0.002740) & -- & -- \\
\bottomrule
\end{tabular}
\caption{ Held-out arithmetic-Asian hedging  results.   Entries are  replication means with
Student-\(t\) 95\% half-widths in parentheses.  Contrasts are  formed within each independent
 replication. }
\label{tab:cv_hedge_comparison}
\end{table}

 The control-variate regression lowers held-out MSE by 52.2\% relative to the direct regression.
The two unpenalised regressions have the same attainable terminal gains. With ridge regularisation,
their parametrisations differ, but the common ridge grid preserves the sign and approximate
magnitude of the paired contrast. The result therefore supports a finite-sample benefit from the
lower-variance target rather than a basis-class advantage. The improvement over
the geometric-only strategy is only 0.46\%, although the paired MSE contrast is negative throughout
its 95\% interval.  Hence the semi-analytic  geometric hedge accounts for nearly all of the benefit
in this benchmark, while the learned correction is small but statistically resolved.  The mean
residual intervals contain zero for all three methods.

 \begin{figure}[htbp]
 \centering
 \includegraphics[width=0.70\linewidth]{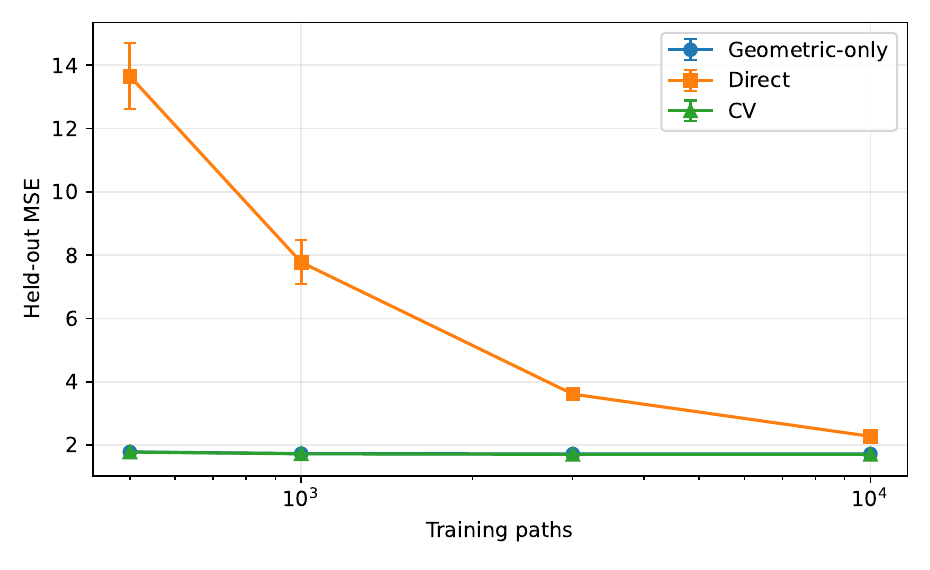}
 \caption{Held-out hedging MSE against the number of training paths.  Direct and control-variate
 regressions use the same basis and held-out paths.  Points are means over 20 independent
 replications and error bars are Student-\(t\) 95\% intervals.}
 \label{fig:arithmetic-learning-curve}
\end{figure}

 Figure~\ref{fig:arithmetic-learning-curve} confirms the finite-sample interpretation.  When the
training size grows from 500 to 10,000, direct-regression MSE falls from 13.66 to 2.27, whereas the
control-variate  MSE remains close to 1.70.  Over the common ridge grid
\(\{10^{-8},10^{-6},10^{-4}\}\), the  paired CV-minus-direct contrast keeps the same sign and its
variation is smaller than two standard errors.  These checks support robustness of the reported
comparison; they do not turn the finite-basis, discrete-time strategy into an exact formula for the
continuous-time arithmetic-Asian GKW hedge.

\medskip
\noindent\textbf{ Reproducibility. }
The full numerical release records the configuration, random seeds, package versions, transform
and simulation controls, all replication-level observations, generated tables and figures, and
SHA-256 hashes.  Quick-mode calculations are execution checks only and are excluded from the
evidentiary chain. The source code and the full release directory are supplied as ancillary files
with the preprint; the archive is regenerated from the documented command in
\texttt{Code/README.md}.

\section{Conclusion}\label{sec:conclusion}

We derived affine-transform pricing formulas and variance-optimal stock hedges for discretely
monitored geometric Asian claims in the Volterra--Heston model. The representation keeps all complex arguments inside the stated affine domain. The hedge formulas are valid under natural moment assumptions stated in the corresponding theorems.

The numerical evidence distinguishes three settings. In the classical Heston benchmark, transform
prices agree with Monte Carlo estimates and geometric Asian options provide highly effective
controls for arithmetic-Asian valuation and regression hedging. For the regular completely
monotone kernel, deterministic and common-noise experiments show stable factor convergence of
prices, initial hedges, and discretely rebalanced gains. By contrast, the prescribed approximation
gate rejects the available finite-factor approximation for the more singular fractional kernel at
\(\alpha=0.6\); no pathwise rough-kernel hedge claim is based on that approximation. Extending the
computational evidence to singular kernels will require factor constructions that resolve the
short-time boundary layer while retaining a stable simulation and hedge implementation.

\appendix
\section{Proofs from Section~\ref{sec:volterra-heston}}\label{sec:proofs-from-section}

\subsection{Fractional Sobolev inequality}\label{sec:1}

\begin{lemma}\label{lemma: fractional Sobolev}
    Let $\mathcal{K} \in L_{\mathrm{loc}}^2(\R_+)$ satisfy 
    $$
        \int_0^h |\mathcal{K}(t)|^2\, \mathrm{d}t + \int_0^T |\mathcal{K}(t+h) - \mathcal{K}(t)|^2\, \mathrm{d}t \leq C_T h^{\gamma}, \qquad h > 0
    $$
    for some constant $\gamma \in (0,2]$. Then for each $\eta \in (0,\gamma/2)$, we find
    $$
        \int_0^T t^{-2\eta}|\mathcal{K}(t)|^2\, \mathrm{d}t + \int_0^T \int_0^T \frac{|\mathcal{K}(t) - \mathcal{K}(s)|^2}{|t-s|^{1+2\eta}}\, \mathrm{d}s\mathrm{d}t < \infty
    $$
\end{lemma}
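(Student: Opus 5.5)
The plan is to treat the two terms separately, each by a dyadic decomposition combined with the assumed $O(h^\gamma)$ bounds. The condition $\eta<\gamma/2$ is exactly what makes the resulting geometric series converge.

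\emph{First term.} Split $(0,T]$ into dyadic shells $I_k=(T2^{-k-1},T2^{-k}]$, $k\ge 0$. On $I_k$ we have $t^{-2\eta}\le (T2^{-k-1})^{-2\eta}$. Moreover $\int_{I_k}|\mathcal K|^2\,\mathrm dt\le\int_0^{T2^{-k}}|\mathcal K|^2\,\mathrm dt\le C_T(T2^{-k})^\gamma$. Summing over $k$ gives a bound of the form $C\sum_k 2^{-k(\gamma-2\eta)}<\infty$.

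\emph{Second term.} Use symmetry and the substitution $s=t+h$ to rewrite it as
\[
2\int_0^T h^{-1-2\eta}\int_0^{T-h}|\mathcal K(t+h)-\mathcal K(t)|^2\,\mathrm dt\,\mathrm dh .
\]
The inner integral is bounded by $\int_0^T|\mathcal K(t+h)-\mathcal K(t)|^2\,\mathrm dt\le C_Th^\gamma$. The $h$-integral is then at most $2C_T\int_0^T h^{\gamma-1-2\eta}\,\mathrm dh$, which is finite since $\gamma-2\eta>0$. One point needs care: the hypothesis is stated for all $h>0$ with a single constant $C_T$. If it were only known as $h\downarrow0$, as in Assumption~\ref{assump:K}, I would split $h\in(0,h_0]$ from $h\in[h_0,T]$. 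On $[h_0,T]$ the integrand is bounded by $h_0^{-1-2\eta}\cdot 4\|\mathcal K\|^2_{L^2([0,2T])}$, which is finite because $\mathcal K\in L^2_{\mathrm{loc}}$.

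\emph{Main obstacle.} The key step is the change of variables together with the measurability and Tonelli justification, which is routine because all integrands are nonnegative. The only genuine issue is making sure the exponent of $h$ stays integrable at $0$, and this is exactly the strict inequality $\eta<\gamma/2$. I expect no real difficulty beyond the bookkeeping of constants.
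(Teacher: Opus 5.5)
Your argument is correct. The second term is handled exactly as in the paper: symmetrise, substitute $h=|t-s|$, apply Tonelli, bound the inner integral by $C_T h^\gamma$, and integrate $h^{\gamma-1-2\eta}$ near $0$.

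For the first term the paper does not decompose dyadically. It sets $F(t)=\int_0^t|\mathcal K(s)|^2\,\mathrm ds$ and integrates by parts,
$\int_0^T t^{-2\eta}\,\mathrm dF(t)=T^{-2\eta}F(T)+2\eta\int_0^T t^{-2\eta-1}F(t)\,\mathrm dt$.
It uses $F(t)\le C_T t^\gamma$ twice: to kill the boundary term at $0$, and to bound the remaining integral. This gives the explicit bound $C_T\frac{\gamma}{\gamma-2\eta}T^{\gamma-2\eta}$.

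Your shell summation is a discretised version of the same estimate and relies on the same single input, the growth bound on $F$. It gives a somewhat cruder constant. In exchange it needs nothing beyond countable additivity and monotone convergence. The Stieltjes integration by parts against a weight singular at $0$ implicitly requires a cut-off at $\epsilon$ and a limit $\epsilon\downarrow 0$.

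Your side remark about bounds holding only as $h\downarrow 0$ is also relevant. The paper applies this lemma under Assumption~\ref{assump:K}, where the bounds are only asymptotic. Your splitting works for the first term as well. Only finitely many shells have $T2^{-k}>h_0$, and each of those is controlled by $(T2^{-k-1})^{-2\eta}\|\mathcal K\|_{L^2([0,T])}^2$. So the asymptotic form of the hypothesis suffices for both terms.
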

\begin{proof}
    We first bound the weighted $L^2$-norm. Let $F(t) = \int_0^t \mathcal{K}(s)^2 \, \mathrm{d}s$, then $F(t) \leq C_T t^{\gamma}$ for $t \in [0,T]$ and some constant $C_T > 0$ by assumption. Using integration by parts, we obtain
    \begin{align*}
        \int_0^T t^{-2\eta} |\mathcal{K}(t)|^2 \, \mathrm{d}t 
        &= \int_0^T t^{-2\eta} \, \mathrm{d}F(t) = \left[ t^{-2\eta} F(t) \right]_0^T + 2\eta \int_0^T t^{-2\eta-1} F(t) \, \mathrm{d}t
        \\ &\leq T^{-2\eta}F(T) + 2\eta C_T \int_0^T t^{\gamma - 2\eta - 1} \, \mathrm{d}t
        = T^{-2\eta}F(T) + C_T \frac{2\eta}{\gamma - 2\eta} T^{\gamma - 2\eta} < \infty
    \end{align*}
    where we have used $F(t) \leq C_T t^{\gamma}$ and $\gamma/2 > \eta$ so that the boundary term at zero vanishes. For the second term, we use symmetry to restrict the integral onto $\{h < t\}$, and then the substitution $h = t - s$ to find
    \begin{align*}
        \int_0^T \int_0^T \frac{|\mathcal{K}(t) - \mathcal{K}(s)|^2}{|t-s|^{1+2\eta}} \, \mathrm{d}s \mathrm{d}t 
        &= 2 \int_0^T \int_0^t \frac{|\mathcal{K}(t) - \mathcal{K}(t-h)|^2}{h^{1+2\eta}} \, \mathrm{d}h \mathrm{d}t
        \\ &= 2 \int_0^T \frac{1}{h^{1+2\eta}} \left( \int_h^T |\mathcal{K}(t) - \mathcal{K}(t-h)|^2 \, \mathrm{d}t \right) \mathrm{d}h
        \\ &= 2 \int_0^T \frac{1}{h^{1+2\eta}} \left( \int_0^{T-h} |\mathcal{K}(u+h) - \mathcal{K}(u)|^2 \, \mathrm{d}u \right)\, \mathrm{d}h
        \\ &\leq C_T\int_0^T h^{\gamma - 1 - 2\eta}\, \mathrm{d}h.
    \end{align*}
    The last integral is finite since $\gamma/2 > \eta$, which proves the assertion.
\end{proof}

\subsection{Proof of Theorem~\ref{thm:solRicVolmeasure}}\label{sec:proof-Theorem2.3}

Firstly, let us remark that by Lemma \ref{lemma: fractional Sobolev} 
$$
    [\mathcal{K}]_{\eta, 2, T} = \left( \int_0^T t^{-2\eta}|\mathcal{K}(t)|^2\, \mathrm{d}t + \int_0^T \int_0^T \frac{|\mathcal{K}(t) - \mathcal{K}(s)|^2}{|t-s|^{1+2\eta}}\, \mathrm{d}s\mathrm{d}t \right)^{1/2} < \infty
$$
is finite for each $\eta \in (0,\gamma/2)$ and $T > 0$. If $\mu_i = u_i \delta_0 + f_i \mathrm{d}s$, then \eqref{eq:VolRic} coincides with the equations studied in \cite[Thm. 7.1(ii)]{abi2019affine}. In particular, if $\Re(\psi_1) \in [0,1]$, $\Re(u_2) \leq 0$, and $\Re(f_2) \leq 0$, then $\psi_1(\cdot;u,f) = \mu_1([0,t]) = u_1 + \int_0^t f_1(s)\, \mathrm{d}s$, and there exists a unique solution $\psi_2(\cdot; u,f) \in L_{\mathrm{loc}}^2(\R_+; \C)$ such that $\mathrm{Re}(\psi_2) \leq 0$. Below, we establish the following a-priori $L^p$-bound for the solution.

\begin{lemma}\label{lem:Lpest}
    Let $u\in\mathbb{C}^2$ and $f\in L^2_{\mathrm{loc}}(\mathbb{R}_+,\mathbb{C}^2)$ be such that $\Re(\psi_1)\in[0,1]$, $\Re(u_2)\leq 0$, and $\Re(f_2)\leq 0$. Let $\psi_1, \psi_2$ be the unique solutions of \eqref{eq:phi1} and \eqref{eq:VolRic} with input $\mu_i = u_i \delta_0 + f_i \mathrm{d}s$. Then, for each $T>0$,
    $$
        \|\psi(\cdot, u,f)\|_{L^p([0,T])} 
        \leq C\Bigl(1 + |u| + \|f\|_{L^1([0,T])}\Bigr),
    $$
    where the constant $C\in(0,\infty)$ depends only on the model parameters $(\rho,\sigma,\kappa)$, $p, \gamma$ and $T$. Moreover, it holds that 
    \begin{align*}
    \|\psi_2(\cdot,u,f)\|_{W^{\eta,2}([0,T])} 
    &\leq \|\psi_2(\cdot, u,f)\|_{L^2([0,T])}+ C\big(1 + [\mathcal{K}]_{\eta,2,T}\big) \\
    &\quad \times
        \Big(1 + |u| + \int_0^T |f(t)|\, \mathrm{d}t + \|\psi(\cdot,u,f)\|_{L^1([0,T])} 
              + \|\psi(\cdot,u,f)\|_{L^2([0,T])}^2\Big).
    \end{align*}
\end{lemma}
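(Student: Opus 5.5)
Since $\psi_1(t)=u_1+\int_0^t f_1(s)\,\mathrm ds$ is explicit, $\|\psi_1\|_{L^\infty([0,T])}\le |u|+\|f\|_{L^1([0,T])}$, and all the work is in bounding $\psi_2$. We split $\psi_2=\psi_2^{\mathrm{in}}+\psi_2^{\mathrm{R}}$ with
\[
\psi_2^{\mathrm{in}}(t)=\mathcal K(t)u_2+(\mathcal K*f_2)(t),\qquad \psi_2^{\mathrm{R}}(t)=\int_0^t\mathcal K(t-s)R(\psi(s))\,\mathrm ds.
\]
The first term is immediate from Young's inequality: $\|\psi_2^{\mathrm{in}}\|_{L^p}\le \|\mathcal K\|_{L^p}(|u_2|+\|f_2\|_{L^1})$.

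For $\psi_2^{\mathrm R}$ we use the sign information $\Re\psi_2\le 0$ from \cite[Thm.~7.1]{abi2019affine}. Writing $\psi_2=a+\mathrm ib$ with $a\le0$, the quadratic part of $R$ is $\tfrac{\sigma^2}{2}\psi_2^2=\tfrac{\sigma^2}{2}(a^2-b^2+2\mathrm iab)$. The plan is to derive a linear Volterra inequality for $|\psi_2|$ of the form
\[
|\psi_2(t)|\le |\psi_2^{\mathrm{in}}(t)|+\int_0^t\mathcal K(t-s)\bigl(c_1(1+|\psi_1(s)|^2)+c_2|\psi_2(s)|\bigr)\,\mathrm ds,
\]
and then apply a Volterra--Gronwall lemma (the resolvent of $c_2\mathcal K$ is nonnegative with $L^1$-norm controlled by $T$ and $\|\mathcal K\|_{L^1}$) followed by Young's inequality to get the $L^p$ bound. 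The natural route is the one in \cite[Lemma~6.3/Thm.~6.1]{abi2019affine}: consider $\Re\psi_2$ and $\Im\psi_2$ separately. The imaginary part solves a linear Volterra equation whose coefficient involves $\Re\psi_2\le0$, so it has a dissipative sign. The real part is sandwiched between $\Re\psi_2^{\mathrm{in}}+\mathcal K*(\text{terms linear in }|\Im\psi_2|^2\ldots)$ and $0$.

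\textbf{Main obstacle.} The $-b^2$ term in $\Re R$ is nonpositive, which helps $a\le0$ from above but gives no a priori lower bound. For this step we would first try the comparison argument for Riccati--Volterra equations with nonincreasing kernels, using the resolvent property in Assumption~\ref{assump:K}(b), to bound $|b|$ by the solution of a linear equation driven by $\Im\psi_1$ and $\Im\psi_2^{\mathrm{in}}$. Feeding this back bounds $a$. Tracking $|u|$ and $\|f\|_{L^1}$ linearly through these estimates, and checking that the dependence on $\mathcal K$ enters only via $\|\mathcal K\|_{L^p}$ and $\|\mathcal K\|_{L^1}$, is the delicate point.

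For the Sobolev bound we write $\psi_2=\mathcal K*\nu$ with the finite measure $\nu=u_2\delta_0+(f_2+R(\psi))\,\mathrm ds$, whose total variation is at most
\[
|u|+\|f\|_{L^1}+C\bigl(T+\|\psi\|_{L^1}+\|\psi\|_{L^2}^2\bigr).
\]
The standard estimate $[\mathcal K*\nu]_{W^{\eta,2}}\le C[\mathcal K]_{\eta,2,T}\,|\nu|([0,T])$ then gives the claim. To prove that estimate, apply Minkowski's integral inequality to
\[
\iint\frac{\bigl|\int(\mathcal K(t-s)-\mathcal K(r-s))\,\nu(\mathrm ds)\bigr|^2}{|t-r|^{1+2\eta}}\,\mathrm dt\,\mathrm dr,
\]
bounding it by $|\nu|([0,T])^2$ times the supremum over $s$ of the shifted seminorm. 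The shifted seminorm is controlled by the double-integral part of $[\mathcal K]_{\eta,2,T}$ together with the weighted part $\int_0^T t^{-2\eta}|\mathcal K(t)|^2\,\mathrm dt$, which handles the boundary where one argument of $\mathcal K$ is negative (there $\mathcal K$ is extended by zero).

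Adding the $L^2$-norm of $\psi_2$ gives the full $W^{\eta,2}$ norm in the stated form.
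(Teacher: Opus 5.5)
Your $W^{\eta,2}$ estimate is fine, and it is the argument the paper imports from \cite[Thm.~3.7]{friesen2024volterra}: write $\psi_2=\mathcal K*\nu$, bound $|\nu|([0,T])$ using $|R(x)|\le C(|x|+|x|^2)$, then apply Minkowski and the shifted-kernel estimate, whose boundary piece is the weighted term of $[\mathcal K]_{\eta,2,T}$. The gap is in the $L^p$ part, in the imaginary-part step.

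Your displayed inequality for $|\psi_2|$ with a term $c_2|\psi_2|$ cannot be obtained by estimating $|R(\psi)|$, because $\tfrac{\sigma^2}{2}\psi_2^2$ is not dominated by $c_2|\psi_2|$ without an a priori $L^\infty$ bound. The Re/Im route gives a two-sided sandwich instead. More seriously, write $a=\Re\psi_2\le0$ and $b=\Im\psi_2$. Then
\[
\Im R(\psi)=\tfrac12\Im\psi_1\,(2\Re\psi_1-1)+(\rho\sigma\Re\psi_1-\kappa+\sigma^2 a)\,b+\rho\sigma\,a\,\Im\psi_1 .
\]
The $\sigma^2 a\,b$ part is harmless since $a\le0$. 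The last term, however, is a forcing proportional to $a$, which has no a priori lower bound. So $b$ is not controlled by a linear equation driven only by $\Im\psi_1$ and $\Im\psi_2^{\mathrm{in}}$. Bounding $b$ needs a lower bound on $a$, and your lower bound on $a$ (through $-\tfrac{\sigma^2}{2}b^2-\rho\sigma\Im\psi_1\,b$ in $\Re R$) needs a bound on $b$. For $\rho\neq0$ the plan is circular.

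The missing idea is to shift the imaginary part by $\tfrac{\rho}{\sigma}\Im\psi_1$, which is what \cite[Lem.~7.4]{abi2019affine} does and what the paper's function $h$ encodes. Equivalently, complete the square: $\tfrac{\sigma^2}{2}\psi_2^2+\rho\sigma\psi_1\psi_2=\tfrac{\sigma^2}{2}(\psi_2+\tfrac{\rho}{\sigma}\psi_1)^2-\tfrac{\rho^2}{2}\psi_1^2$. For $\chi:=b+\tfrac{\rho}{\sigma}\Im\psi_1$ one gets
\[
\Im R(\psi)=(\sigma^2a+\rho\sigma\Re\psi_1-\kappa)\,\chi+\tfrac12\Im\psi_1\Bigl(2(1-\rho^2)\Re\psi_1-1+\tfrac{2\kappa\rho}{\sigma}\Bigr),
\]
so the $a$-dependence is absorbed entirely into the dissipative coefficient. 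Use $\mathcal K*L=1$ to write $\tfrac{\rho}{\sigma}\Im\psi_1=\tfrac{\rho}{\sigma}\Im u_1+\mathcal K*(\tfrac{\rho}{\sigma}L*\Im f_1)$; this puts $\chi$ in Volterra form. The comparison principle, which is where Assumption~\ref{assump:K}(b) enters, then gives $|\chi|\le h$, hence $|\Im\psi_2|\le h+|\tfrac{\rho}{\sigma}\Im\psi_1|\le 2h$. Only after this does the real part give $l\le\Re\psi_2\le0$ and $\|\psi_2\|_{L^p}\le\|l\|_{L^p}+2\|h\|_{L^p}$. Note that $L$ now enters the constants through $L*\Im f_1$.

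Your worry about linear dependence on the data is justified, and this route does not resolve it either. Take $u_1=0$, $f=0$, $u_2=\mathrm i\lambda$. Then $h=\lambda h_1$, and since $l$ is driven by $-\tfrac{\sigma^2}{2}h^2$, $l=\lambda^2 l_1$ with $l_1\not\equiv0$ independent of $\lambda$. So $\|l\|_{L^p}+2\|h\|_{L^p}$ grows quadratically in $\lambda$. The same happens with the $|\psi_1|^2$ forcing in your inequality. The paper defers this point to \cite[Lem.~3.1]{friesen2024volterra} and \cite[Lem.~1]{gerhold2019moment}. A polynomial bound is enough for the compactness argument in Theorem~\ref{thm:solRicVolmeasure}, but the stated linear form of the estimate needs an additional argument.
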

\begin{proof}
    We first prove the $L^p$-bound. It follows from the proof of \cite[Lem. 7.4]{abi2019affine} that $l\leq\Re(\psi_2(\cdot,u,f))\leq0$ and $|\Im(\psi_2(\cdot,u,f))|\leq h+|\frac{\rho}{\sigma}\Im(\psi_1(\cdot,u,f))|$, with $l$ and $h$ the unique solutions to 
    \begin{align*}
        h(t) &= |\Im(u_2)|\mathcal{K}(t) + |\frac{\rho}{\sigma}\Im(u_1)| + \int_0^t\mathcal{K}(t - s)\Bigg(\Bigg|\frac{\rho}{\sigma}(L \ast \Im(f_1))(s) + \Im(f_2(s)) \\
        &\quad + \frac{1}{2}\Im(\psi_1(s))\left(2(1 - \rho^2)\Re(\psi_1(s)) - 1 + \frac{2\kappa\rho}{\sigma}\right)\Bigg| + (\rho\sigma\Re(\psi_1(s)) - \kappa)h(s)\Bigg)\mathrm{d} s, \\
        l(t) &= \Re(u_2)\mathcal{K}(t) + \int_0^t\mathcal{K}(t - s)\Bigg(\Re(f_2(s)) + \frac{1}{2}\left(\Re(\psi_1(s))^2 - \Re(\psi_1(s)) - \Im(\psi_1(s))^2\right) \\
        &\quad - |\rho\sigma\Im(\psi_1(s))|\left(h(s) + |\frac{\rho}{\sigma}\Im(\psi_1(s))|\right) - \frac{\sigma^2}{2}\left(h(s) + |\frac{\rho}{\sigma}\Im(\psi_1(s))|\right)^2 \\
        &\quad + (\rho\sigma\Re(\psi_1(s)) - \kappa)l(s)\Bigg)\mathrm{d} s,
    \end{align*}
    where $L$ denotes the resolvent of the first kind of $\mathcal{K}$. It is also stated in the proof of \cite[Lem. 7.4]{abi2019affine} that $|\frac{\rho}{\sigma}\Im(\psi_1)|\leq h$. Independently of the value of $\rho\in[-1,1]$, the pointwise bound $|\psi_1(t,u,f)|=|\mu_1([0,t])|\le |u_1|+\int_0^t|f_1(s)|\,\mathrm{d}s$ gives the $\rho$-free estimate $\|\psi_1(\cdot,u,f)\|_{L^p([0,T])} \le T^{1/p}\bigl(|u_1|+\|f_1\|_{L^1([0,T])}\bigr)$, which subsumes and replaces the $\rho\ne 0$ bound $\|\psi_1\|_{L^p}\le T^{1/p}+(\sigma/|\rho|)\|h\|_{L^p}$ of \cite[Lem. 7.4]{abi2019affine}. In particular, no case distinction between $\rho\ne 0$ and $\rho=0$ is required. Similarly, $\|\psi_2(\cdot,u,f)\|_{L^p([0,T])} \leq \|l\|_{L^p([0,T])} + 2\|h\|_{L^p([0,T])}$. The result now follows using the same approach as in \cite[Lem. 3.1]{friesen2024volterra}, noting that one needs to use a comparison principle like the one in \cite[Lem. 1]{gerhold2019moment} to estimate $\|l\|_{L^p([0,T])}$ and $\|h\|_{L^p([0,T])}$, since they cannot be solved explicitly. This proves the $L^p$-bound.

    For the bound in the fractional Sobolev norm, we follow the proof of \cite[Thm. 3.7]{friesen2024volterra}. Firstly, there exists a constant $C > 0$ such that $R$ given by \eqref{eq:R} satisfies $|R(x)| \leq C(|x| + |x|^2)$. The assertion can now be deduced from a modification of \cite[Thm. 3.7]{friesen2024volterra}, using the first part of Lemma~\ref{lem:Lpest} instead of \cite[Lem. 3.1]{friesen2024volterra}.
\end{proof}

\begin{proof}[Proof of Theorem~\ref{thm:solRicVolmeasure}]
     We follow the proof of \cite[Thm. 3.10]{friesen2024volterra}, adjusted to the Volterra--Heston Riccati equation. Fix $T>0$. By \cite[Lem. 3.8]{friesen2024volterra}, there exist sequences $(g_n)_{n\ge1}\subset L^1_{\mathrm{loc}}(\mathbb{R}_+,\mathbb{C})$ and $(f_n)_{n\ge1}\subset L^1_{\mathrm{loc}}(\mathbb{R}_+,\mathbb{C}_-)$ such that $\|g_n\|_{L^1([0,T])}\le |\mu_1|([0,T])$ and $\|f_n\|_{L^1([0,T])}\le |\mu_2|([0,T])$ holds for all $n \geq 1$, and for every $h\in L^q([0,T];\mathbb{C})$, $q\in[1,\infty)$,
     $$
        \int_0^\cdot h(\cdot-s)g_n(s)\,\mathrm{d}s \longrightarrow \int_{[0,\cdot]} h(\cdot-s)\,\mu_1(\mathrm{d}s),
        \qquad
        \int_0^\cdot h(\cdot-s)f_n(s)\,\mathrm{d}s \longrightarrow \int_{[0,\cdot]} h(\cdot-s)\,\mu_2(\mathrm{d}s)
     $$
     in $L^q([0,T];\mathbb{C})$, and pointwise in $t\in[0,T]$ whenever $h\in C([0,T])$ and $h(0)=0$. Moreover, by the construction therein, it follows that $0 \leq \mathrm{Re}(\mu_1^{(n)}([0,t])) \leq 1$. 

     Let $\psi^n=(\psi_1^n,\psi_2^n)$ be the unique solution corresponding to the absolutely continuous input $\mu^{(n)} = \bigl(g_n(s)\,\mathrm{d}s,\ f_n(s)\,\mathrm{d}s\bigr)$, whose existence follows from \cite[Thm. 7.1(ii)]{abi2019affine}. Then $\psi_1^n = 1\ast g_n \longrightarrow 1\ast\mu_1 = \psi_1$ in $L^2([0,T])$. By Lemma~\ref{lem:Lpest}, $\|\psi_2^n\|_{L^2([0,T])}\le C\bigl(1+|\mu|([0,T])\bigr)$, and
     \begin{align*}
         \|\psi_2^n\|_{W^{\eta,2}([0,T])}
         \le \|\psi_2^n\|_{L^2([0,T])} + C\bigl(1+[\mathcal{K}]_{\eta,2,T}\bigr)
             \Bigl(1+|\mu|([0,T])+\|\psi^n\|_{L^1([0,T])}
             +\|\psi^n\|_{L^2([0,T])}^2\Bigr).
     \end{align*}
     Hence $\sup_{n\geq1}\|\psi_2^n\|_{W^{\eta,2}([0,T])}<\infty$. By the compact embedding
     $W^{\eta,2}([0,T];\mathbb{C})\hookrightarrow L^2([0,T];\mathbb{C})$, there exists a subsequence,
     still denoted by $(\psi_2^n)_{n\geq1}$ for simplicity, and a limit $\psi_2\in L^2([0,T];\mathbb{C})$ such that $\psi_2^n\to\psi_2$ in $L^2([0,T];\mathbb{C})$.

     Set $\psi=(\psi_1,\psi_2)$. Since $\psi_2^n = \mathcal{K}\ast f_n + \mathcal{K}\ast R(\psi^n)$, and $\mathcal{K}\ast f_n\to \mathcal{K}\ast\mu_2$ in $L^2([0,T];\mathbb{C})$, it remains to show that
     $\mathcal{K}\ast R(\psi^n)\to \mathcal{K}\ast R(\psi)$ in $L^2([0,T];\mathbb{C})$. Using that
     $|R(u)-R(v)|\le C(1+|u|+|v|)(|u_1-v_1|+|u_2-v_2|)$ for some $C>0$, Young's convolution inequality, and Cauchy--Schwarz, we obtain
     \begin{align*}
         \|\mathcal{K}\ast(R(\psi^n)-R(\psi))\|_{L^2([0,T])}
         &\le \|\mathcal{K}\|_{L^2([0,T])}\|R(\psi^n)-R(\psi)\|_{L^1([0,T])} \\
         &\le C\|\mathcal{K}\|_{L^2([0,T])}
             \bigl(1+\|\psi^n\|_{L^2([0,T])}+\|\psi\|_{L^2([0,T])}\bigr) \\
         &\qquad \times \bigl(\|\psi_1^n-\psi_1\|_{L^2([0,T])}
             +\|\psi_2^n-\psi_2\|_{L^2([0,T])}\bigr).
     \end{align*}
     The right-hand side converges to zero because $\|\mathcal{K}\|_{L^2([0,T])}<\infty$, $\psi_1^n\to\psi_1$ and $\psi_2^n\to\psi_2$ in $L^2([0,T])$, and hence $\|\psi_1^n\|_{L^2([0,T])}$ and $\|\psi_2^n\|_{L^2([0,T])}$ are uniformly bounded in $n$. Therefore $\psi_2$ solves \eqref{eq:VolRic} on $[0,T]$. Uniqueness follows from the local Lipschitz continuity of $R$.

     Finally, the asserted a-priori bounds follow from the corresponding bounds for $\psi^n$, the convergence $\psi^n\to\psi$, and Fatou's lemma.
 \end{proof}

\subsection{Proof of Theorem \ref{thm:stability-Lq}}\label{sec:proofthm24}
\begin{proof}[Proof of Theorem~\ref{thm:stability-Lq}]
    Since $\psi_1^{(n)}(t) = \mu_1^{(n)}([0,t])$ and $\psi_1(t) = \mu_1([0,t])$ by \eqref{eq:phi1}, the pointwise convergence $\psi_1^{(n)} \longrightarrow \psi_1$ is evident. For the convergence of $\psi_2^{(n)}$, let us first note that $\|\mathcal{K}^{(n)}\|_{L^2([0,T])} \leq T^{\eta}[\mathcal{K}^{(n)}]_{\eta,2,T}$, and hence the $L^2([0,T])$-norm of $\mathcal{K}^{(n)}$ is uniformly bounded in $n$. An application of the a-priori bounds from Theorem \ref{thm:solRicVolmeasure} yields
    $$
        \| \psi_2^{(n)}\|_{W^{\eta,2}([0,T])} \leq C\left( 1 + \| \mathcal{K}\|_{L^2([0,T])} + \sup_{n \geq 1}\| \mathcal{K}^{(n)}\|_{L^2([0,T])}\right)\left( 1 + |\mu|([0,T]) + \sup_{n \geq 1}|\mu^{(n)}|([0,T])\right)
    $$
    with a finite right-hand side for all $n$. By the compact embedding $W^{\eta, 2}([0,T]; \C) \hookrightarrow L^2([0,T]; \C)$, we can find for each subsequence $\psi^{(n_k)} = (\psi_1^{(n_k)}, \psi_2^{(n_k)})$ another subsequence, again denoted by $\psi^{(n_k)}$, such that $\psi_2^{(n_k)} \longrightarrow \widetilde{\psi}_2$ in $L^2([0,T]; \C)$ for some limit $\widetilde{\psi}_2$. Let us show that $\widetilde{\psi} = (\psi_1, \widetilde{\psi}_2)$ is a solution to \eqref{eq:VolRic}. 
    
    Since $\psi_2^{(n_k)} = \mathcal{K}^{(n_k)} \ast \mu_2^{(n_k)} + \mathcal{K}^{(n_k)} \ast R(\psi^{(n_k)})$, it suffices to prove that $\mathcal{K}^{(n_k)} \ast \mu_2^{(n_k)} \longrightarrow \mathcal{K} \ast \mu_2$, and $\mathcal{K}^{(n_k)} \ast R(\psi^{(n_k)}) \longrightarrow \mathcal{K} \ast R(\widetilde{\psi})$ in $L^2([0,T]; \C)$. The first claim follows from
    \begin{align*}
        \| \mathcal{K}^{(n_k)} \ast \mu_2^{(n_k)} - \mathcal{K} \ast \mu_2 \|_2
        &\leq \| \mathcal{K}^{(n_k)} \ast (\mu_2^{(n_k)} - \mu_2) \|_2 + \| (\mathcal{K}^{(n_k)} - \mathcal{K})\ast \mu_2\|_2
        \\ &\leq \| \mathcal{K}^{(n_k)}\|_2 |\mu_2^{(n_k)} - \mu_2|([0,T]) + |\mu_2|([0,T])\| \mathcal{K}^{(n_k)} - \mathcal{K}\|_2,
    \end{align*}
    where we write $\| \cdot \|_2 = \| \cdot \|_{L^2([0,T])}$. For the second assertion, let us estimate 
    \begin{align*}
        \| \mathcal{K}^{(n_k)} \ast R(\psi^{(n_k)})  - \mathcal{K} \ast R(\widetilde{\psi})\|_2 
        &\leq \| \mathcal{K}\ast (R(\widetilde{\psi}) - R(\psi^{(n_k)}))\|_2 + \| (\mathcal{K} - \mathcal{K}^{(n_k)}) \ast R(\psi^{(n_k)})\|_2
        \\ &\leq C\| \mathcal{K} \ast (1 + |\widetilde{\psi}| + |\psi^{(n_k)}|)|\widetilde{\psi} - \psi^{(n_k)}| \|_2
        \\ &\quad+ C\| (\mathcal{K} - \mathcal{K}^{(n_k)}) \ast (|\psi^{(n_k)}| + |\psi^{(n_k)}|^2) \|_2,
    \end{align*}
    where we have used $|R(x)| \leq C(|x| + |x|^2)$ and $|R(x) - R(y)| \leq C(1 + |x| + |y|) |x-y|$ for some constant $C > 0$. Using Young's first inequality and then Cauchy--Schwarz, we obtain
    \begin{align*}
        \| \mathcal{K} \ast (1 + |\widetilde{\psi}| + |\psi^{(n_k)}|)|\psi - \psi^{(n_k)}| \|_2
        &\leq \| \mathcal{K}\|_2 \| (1 + |\widetilde{\psi}| + |\psi^{(n_k)}|)|\psi - \psi^{(n_k)}| \|_1
        \\ &\leq \| \mathcal{K}\|_2 \| 1 + |\widetilde{\psi}| + |\psi^{(n_k)}|\|_2 \| \widetilde{\psi} - \psi^{(n_k)} \|_2
        \\ &\leq \| \mathcal{K}\|_2  \left(1 + \|\widetilde{\psi} \|_2 + \sup_{n \geq 1}\|\psi^{(n)}\|_2 \right) \| \widetilde{\psi} - \psi^{(n_k)} \|_2.
    \end{align*}
    Similarly, we show that
    $$
    \| (\mathcal{K} - \mathcal{K}^{(n_k)}) \ast (|\widetilde{\psi}| + |\psi^{(n_k)}|)^2 \|_2
    \leq \| \mathcal{K} - \mathcal{K}^{(n_k)}\|_2 \left( \sup_{n \geq 1} \|\psi^{(n)} \|_1 + \sup_{n \geq 1}\|\psi^{(n)}\|_2^2 \right).
    $$
    This proves that $\mathcal{K}^{(n_k)} \ast R(\psi^{(n_k)}) \longrightarrow \mathcal{K} \ast R(\widetilde{\psi})$ in $L^2([0,T]; \C)$. Hence $\widetilde{\psi}_2$ solves \eqref{eq:VolRic}. By uniqueness, we then conclude that $\psi_2 = \widetilde{\psi}_2$. Since the subsequence was arbitrary, we have shown that any subsequence of $\psi_2^{(n)}$ converges to the same limit $\psi_2$. This implies that $\psi_2^{(n)} \longrightarrow \psi_2$ in $L^2([0,T]; \C)$. 
\end{proof}

\subsection{Proof of Theorem \ref{thm:stability-uniform}}\label{sec:proofthm25}
\begin{proof}[Proof of Theorem~\ref{thm:stability-uniform}]
    Let $q' = \frac{q}{q-1}$ for $q<\infty$ and $q'=1$ for $q=\infty$. Then $\frac{1}{q} + \frac{1}{q'} = 1$ and $2q' \leq q$. The a-priori bounds from Theorem \ref{thm:solRicVolmeasure} yield
    \begin{align}\label{eq:uniform-stability-bound}
        \| \psi \|_{2q'} + \sup_{n \geq 1}\|\psi^{(n)}\|_{2q'} \leq C\left(1 + \| \mathcal{K}\|_{2q'}  + \sup_{n \geq 1}\|\mathcal{K}^{(n)}\|_{2q'} \right)\left( 1 + |\mu|([0,T]) + \sup_{n \geq 1}|\mu^{(n)}|([0,T])\right) < \infty.
    \end{align}
     Write $\psi_2 = \mathcal{K} \ast f_2 + \mathcal{K} \ast R(\psi)$.  Since $f_2 \in L^{q'}([0,T]; \C)$ and $\mathcal{K} \in L^q([0,T])$, Young's convolution inequality shows that $\mathcal{K}\ast f_2 \in C([0,T]; \C)$. Similarly, using $|R(\psi)| \leq C(|\psi| + |\psi|^2) \in L^{q'}([0,T]; \C)$ since $\psi \in L^{2q'}([0,T]; \C)$, we also conclude that $\mathcal{K} \ast R(\psi) \in C([0,T]; \C)$. Thus $\psi_2$ is continuous. The same argument also shows that $\psi_2^{(n)}$ is continuous for each $n \geq 1$. For the uniform convergence of $\psi_2^{(n)} \longrightarrow \psi_2$ on $[0,T]$, arguing as above, we can estimate
    \begin{align*}
        \| \psi_2^{(n)} - \psi_2\|_{\infty} &\leq \| \mathcal{K} \ast (f_2^{(n)} - f_2) \|_{\infty} + \| (\mathcal{K}^{(n)} - \mathcal{K})\ast f_2\|_{\infty}
        \\ &\qquad + C\| \mathcal{K} \ast (1 + |\psi| + |\psi^{(n)}|)|\psi - \psi^{(n)}| \|_{\infty}
        + C\| (\mathcal{K} - \mathcal{K}^{(n)}) \ast (|\psi^{(n)}| + |\psi^{(n)}|^2) \|_{\infty}
        \\ &\leq \| \mathcal{K}\|_{q} \| f_2^{(n)} - f_2\|_{q'} + \| \mathcal{K}^{(n)} - \mathcal{K}\|_q \| f_2\|_{q'}
        \\ &\qquad + C\| \mathcal{K} \|_q \left \| \left( 1 + |\psi| + |\psi^{(n)}|\right)|\psi - \psi^{(n)}| \right\|_{q'} + C \| \mathcal{K} - \mathcal{K}^{(n)}\|_q \left\| (|\psi^{(n)}| + |\psi^{(n)}|^2) \right \|_{q'}
        \\ &\leq \| \mathcal{K}\|_{q} \| f_2^{(n)} - f_2\|_{q'} + \| \mathcal{K}^{(n)} - \mathcal{K}\|_q \| f_2\|_{q'}
        \\ &\qquad + C\| \mathcal{K} \|_q \left( 1 + \| \psi \|_{2q'}^2 + \sup_{n \geq 1}\|\psi^{(n)}\|_{2q'}^2 \right) \left \| \psi - \psi^{(n)} \right\|_{2q'} 
        \\ &\qquad + C \| \mathcal{K} - \mathcal{K}^{(n)}\|_q \left( \sup_{n \geq 1}\|\psi^{(n)}\|_{q'} + \sup_{n \geq 1}\| \psi^{(n)}\|_{2q'}^2 \right)
    \end{align*}
    where the constants on the right-hand side are finite by $q' \leq 2q'$ combined with \eqref{eq:uniform-stability-bound}. The first, second, and last term converge to zero by assumption. The third term tend to zero due to
    \[
        \| \psi_2^{(n)} - \psi_2\|_{2q'} \leq \| \psi_2^{(n)} - \psi_2\|_{2}^{\frac{1}{q'}} \| \psi_2^{(n)} - \psi_2\|_{\infty}^{1 - \frac{1}{q'}},
    \]
    an application of Theorem~\ref{thm:stability-Lq} since $|\mu_2 - \mu_2^{(n)}|([0,T]) \leq \| f_2 - f_2^{(n)}\|_1 \leq T^{\frac{1}{q}}\| f_2 - f_2^{(n)}\|_{q'} \longrightarrow 0$, and the assumption $\sup_{n \geq 1}\| \psi^{(n)}\|_{\infty} < \infty$. Finally, the pointwise convergence $\psi_1^{(n)}(t) = \mu^{(n)}_1([0,t]) \longrightarrow \mu_1([0,t]) = \psi_1(t)$ holds by assumption, which proves the assertion.
\end{proof}

\subsection{Proof of the generalised affine transformation formula}\label{sec:proof-affine-formula}

In this section, we prove the generalised affine transformation formula. Recall that $X_t = (\log S_t, \nu_t)^{\top}$. Then $\mathrm{d}(\log S_t) = \Bigl(r - \tfrac{1}{2}\nu_t\Bigr)\,\mathrm{d}t + \sqrt{\nu_t}\,\mathrm{d}W^S_t$, and $X$ satisfies the bivariate stochastic Volterra equation
\[
  X_t = X_0 + \int_0^t \bm{\mathcal{K}}(t-s)b(X_s)\,\mathrm{d}s
        + \int_0^t \bm{\mathcal{K}}(t-s)\sigma(X_s)\,\mathrm{d}(W^S_s,W^I_s)^\top,
\]
with Volterra kernel $\bm{\mathcal{K}}$, drift, and diffusion coefficients given by
\begin{align}\label{eq:vector}
  \bm{\mathcal{K}} := 
  \begin{pmatrix}
    1 & 0 \\
    0 & \mathcal{K}
  \end{pmatrix}, \qquad b(X) = b^0 + \log S\,b^1 + \nu\,b^2,
  \qquad 
  a(X) = \sigma(X)\sigma(X)^\top = A^0 + \log S\,A^1 + \nu\,A^2,
\end{align}
where $b^0 = (r, \kappa\theta)^{\top}$, $b^1 = \bm{0}$, $b^2 = - (\tfrac{1}{2}, \kappa)^{\top}$, $A^0=A^1=\bm{0}$, and $A^2 = \begin{pmatrix}
    1 & \rho\sigma \\
    \rho\sigma & \sigma^2
  \end{pmatrix}$.
As a first step, we prove the following simple analogue of~\cite[Thm.~4.3]{abi2019affine} for measure-valued initial conditions.

\begin{lemma}\label{lem:VolRicequiv}
Let $B = (b^1,b^2)$ and $A(x) = (xA^1x^\top, xA^2x^\top)$, with $b^i$ and $A^i$ introduced in \eqref{eq:vector}. Then $\psi = (\psi_1, \psi_2)$ given by \eqref{eq:phi1}-\eqref{eq:VolRic} takes the form 
\begin{align}\label{eq:VolRic1}
    \psi = \bigl(\mu + \psi B + \tfrac{1}{2}A(\psi)\bigr)\ast \bm{\mathcal{K}},
\end{align}
where $\bm{\mathcal{K}}$ is defined in equation~\eqref{eq:vector}. Let $E_B = \bm{\mathcal{K}} - R_B\ast \bm{\mathcal{K}}$, where $R_B$ denotes the resolvent of the second kind of $-\bm{\mathcal{K}}B$. Then \eqref{eq:VolRic1} is equivalent to 
\begin{align}\label{eq:VolRic2}
    \psi = \bigl(\mu + \tfrac{1}{2}A(\psi)\bigr)\ast E_B.
\end{align}
\end{lemma}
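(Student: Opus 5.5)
Both claims are verified by direct algebra: \eqref{eq:VolRic1} is a componentwise rewriting of \eqref{eq:phi1}--\eqref{eq:VolRic}, and \eqref{eq:VolRic2} follows from the resolvent identity for $R_B$. The convention is that $\psi$ is a row vector. Convolution with the matrix kernel $\bm{\mathcal{K}}$ acts as $(\nu\ast\bm{\mathcal{K}})(t)=\int_{[0,t]}\nu(\mathrm{d}s)\bm{\mathcal{K}}(t-s)$. When the input is a function $f$, we read $f(s)\,\mathrm{d}s$ in place of $\nu(\mathrm{d}s)$.

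\textbf{Step 1: \eqref{eq:VolRic1}.} Since $b^1=\bm 0$, we have $\psi B=\psi_1 b^1+\psi_2 b^2=\psi_2 b^2=(-\tfrac12\psi_2,-\kappa\psi_2)$. Since $A^1=\bm 0$, we have $A(\psi)=(0,\psi A^2\psi^\top)$. Hence the integrand in \eqref{eq:VolRic1} has two components:
\[
\bigl(\mu_1-\tfrac12\psi_2\,\mathrm{d}s,\ \ \mu_2+(-\kappa\psi_2+\tfrac12\psi A^2\psi^\top)\,\mathrm{d}s\bigr).
\]
Because $\bm{\mathcal{K}}=\mathrm{diag}(1,\mathcal{K})$, the first component of \eqref{eq:VolRic1} reads $\psi_1(t)=\mu_1([0,t])-\tfrac12\int_0^t\psi_2\,\mathrm{d}s$.

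This does not match \eqref{eq:phi1}, so the coupling terms must be reconciled. I will check the convention of \cite[Thm.~4.3]{abi2019affine}: there $b(x)$ multiplies $x$ from the right, so $\psi B$ should be read as $(\psi b^1,\psi b^2)$, i.e.\ $B$ is the matrix with columns $b^1,b^2$. With this reading, $\psi B=(\psi\cdot b^1,\psi\cdot b^2)=(0,-\tfrac12\psi_1-\kappa\psi_2)$.

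The first component then gives $\psi_1=\mu_1\ast 1$, which is \eqref{eq:phi1}. The second component gives
\[
\psi_2=\mathcal{K}\ast\bigl(\mu_2+(-\tfrac12\psi_1-\kappa\psi_2+\tfrac12(\psi_1^2+2\rho\sigma\psi_1\psi_2+\sigma^2\psi_2^2))\,\mathrm{d}s\bigr).
\]
The bracket in the integrand is exactly $R(\psi)$ from \eqref{eq:R}, so this is \eqref{eq:VolRic}. Fixing this index convention consistently is the only delicate point of the lemma.

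\textbf{Step 2: \eqref{eq:VolRic1} $\Leftrightarrow$ \eqref{eq:VolRic2}.} Write \eqref{eq:VolRic1} as $\psi-(\psi B)\ast\bm{\mathcal{K}}=\nu\ast\bm{\mathcal{K}}$ with $\nu:=\mu+\tfrac12A(\psi)$. This is a linear Volterra equation for $\psi$ with matrix kernel $\bm{\mathcal{K}}B$ acting from the right. The resolvent of the second kind satisfies $R_B=-\bm{\mathcal{K}}B+(-\bm{\mathcal{K}}B)\ast R_B$, and also the same identity with the convolution order reversed; it exists in $L^2_{\mathrm{loc}}$ because $\bm{\mathcal{K}}B\in L^2_{\mathrm{loc}}$.

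Convolving the equation with $R_B$ and using associativity of convolution of measures with $L^1_{\mathrm{loc}}$ kernels, I expect to obtain
\[
\psi=\nu\ast(\bm{\mathcal{K}}-R_B\ast\bm{\mathcal{K}})=\nu\ast E_B,
\]
which is \eqref{eq:VolRic2}. I will check the left/right placement of $B$ against the definition of $R_B$ at this point, since $B$ multiplies from the right here.

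For the converse, substitute $\psi=\nu\ast E_B$ into $\psi-(\psi B)\ast\bm{\mathcal{K}}$ and use the resolvent identity $E_B-E_B B\ast\bm{\mathcal{K}}=\bm{\mathcal{K}}$. This identity is verified as in \cite[Lem.~2.5]{abi2019affine}. Both directions only require that $\mu$ is locally finite and $\psi\in L^2_{\mathrm{loc}}$, so all convolutions are well defined, with $A(\psi)\in L^1_{\mathrm{loc}}$.
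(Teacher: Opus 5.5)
Your Step~1 is correct once you adopt the convention of \cite{abi2019affine} that $B=(b^1\ b^2)$ has the $b^i$ as columns. Then $\psi B=(0,-\tfrac12\psi_1-\kappa\psi_2)$ and the second component of the integrand is exactly $R(\psi)$; delete the first attempt ($\psi B=\psi_1b^1+\psi_2b^2$). The half of Step~2 where you substitute $\psi=\nu\ast E_B$ is the paper's argument: $E_BB=\bm{\mathcal K}B-R_B\ast\bm{\mathcal K}B=-R_B$, hence $E_B-(E_BB)\ast\bm{\mathcal K}=\bm{\mathcal K}$. This requires the convention $R_B+(-\bm{\mathcal K}B)\ast R_B=-\bm{\mathcal K}B$, i.e.\ $R_B\ast\bm{\mathcal K}B=\bm{\mathcal K}B\ast R_B=\bm{\mathcal K}B+R_B$. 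The identity you wrote, $R_B=-\bm{\mathcal K}B+(-\bm{\mathcal K}B)\ast R_B$, has the opposite sign. With it one gets $E_BB=2\bm{\mathcal K}B+R_B$, and the identity you rely on fails.

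The genuine gap is the direction \eqref{eq:VolRic1}$\Rightarrow$\eqref{eq:VolRic2}. Since $(\psi B)\ast\bm{\mathcal K}=\psi\ast(B\bm{\mathcal K})$, the linear equation for $\psi$ has kernel $B\bm{\mathcal K}$, not $\bm{\mathcal K}B$. Here the two differ: their first rows are $(0,-\tfrac12\mathcal K)$ and $(0,-\tfrac12)$. Because $R_B$ resolves $-\bm{\mathcal K}B$, convolving $\psi-\psi\ast(B\bm{\mathcal K})=\nu\ast\bm{\mathcal K}$ with $R_B$ leaves the term $\psi\ast B\bm{\mathcal K}\ast R_B$, which the resolvent identities of $R_B$ do not reduce. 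Your ``I expect to obtain'' is exactly where the argument is missing. The paper closes this by taking the resolvent $\widetilde R_B$ of $-B\bm{\mathcal K}$, which gives $\psi=\nu\ast\bm{\mathcal K}-\nu\ast\bm{\mathcal K}\ast\widetilde R_B$. It then invokes the commutation $\bm{\mathcal K}\ast\widetilde R_B=R_B\ast\bm{\mathcal K}$ from the proof of \cite[Lem.~4.4]{abi2019affine}.

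Within your own setup there is a shorter fix. Multiply \eqref{eq:VolRic1} on the right by $B$, so that $\phi:=\psi B$ solves $\phi=\nu\ast\bm{\mathcal K}B+\phi\ast\bm{\mathcal K}B$, whose kernel is exactly the one $R_B$ resolves. This yields $\psi B=-\nu\ast R_B$, and substituting into $\psi=\nu\ast\bm{\mathcal K}+(\psi B)\ast\bm{\mathcal K}$ gives $\psi=\nu\ast E_B$. Alternatively, freeze $\nu=\mu+\tfrac12A(\psi)$. Your substitution computation shows that $\nu\ast E_B$ solves the same linear Volterra equation as $\psi$, and uniqueness of $L^1_{\mathrm{loc}}$ solutions of linear Volterra equations then gives $\psi=\nu\ast E_B$.
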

\begin{proof}
    By definition of $\bm{\mathcal{K}}, B$, and $A(x)$, it is clear that~\eqref{eq:phi1}-\eqref{eq:VolRic} is equivalent to \eqref{eq:VolRic1}. Assume that $\psi$ satisfies \eqref{eq:VolRic2}. Since $E_B\ast(B\bm{\mathcal{K}}) = -R_B\ast\bm{\mathcal{K}}$, we obtain
    $$
        \psi - \psi\ast(B\bm{\mathcal{K}}) 
        = \bigl(\mu + \tfrac{1}{2}A(\psi)\bigr)\ast \bigl(E_B + R_B\ast\bm{\mathcal{K}}\bigr) = \bigl(\mu + \tfrac{1}{2}A(\psi)\bigr)\ast \bm{\mathcal{K}}.
    $$
    Conversely, assume that $\psi$ satisfies \eqref{eq:VolRic1}. Let $\widetilde{R}_B$ denote the resolvent of the second kind of $-B\bm{\mathcal{K}}$. Then
$$
    \psi - \psi\ast \widetilde{R}_B 
    = \bigl(\mu + \tfrac{1}{2}A(\psi)\bigr)\ast 
      \bigl(\bm{\mathcal{K}} - \bm{\mathcal{K}}\ast \widetilde{R}_B\bigr) - \psi\ast \widetilde{R}_B.
$$
Since $\bm{\mathcal{K}}\ast \widetilde{R}_B = R_B\ast \bm{\mathcal{K}}$ 
(see the proof of~\cite[Lem.~4.4]{abi2019affine}), the result follows.
\end{proof}

The following is an analogue of \cite[Thm. 4.3]{abi2019affine}, where function convolutions are replaced by convolutions against $\mu$.
  
\begin{theorem}\label{thm:expafftransf}
   Let $\mu = (\mu_1, \mu_2) \in\mathcal{M}_{\mathrm{ad}}$, and let $\psi = (\psi_1, \psi_2)$ be the unique solution to the Volterra--Riccati equations \eqref{eq:phi1}-\eqref{eq:VolRic}. Recall the definitions of $a$, $b$ from \eqref{eq:vector}. Define the process $(Y_t)_{t\in[0,T]}$ by
\begin{align*}
    Y_t &= Y_0 + \int_0^t \psi(T-s)\,\sigma(X_s)\,\mathrm{d}W_s 
            - \tfrac{1}{2}\int_0^t \psi(T-s)\,a(X_s)\,\psi(T-s)^\top \,\mathrm{d}s, \\
    Y_0 &= \langle X_0, \mu([0,T]) \rangle + \int_0^T \Bigl(\psi(s)b(X_0) 
              + \tfrac{1}{2}\psi(s)a(X_0)\psi(s)^\top\Bigr)\,\mathrm{d}s,
\end{align*}
where $W = (W^S,W^I)^\top$. Then, for all $t \in [0,T]$,
\begin{align}\label{eq:Y}
    Y_t = \mathbb{E}\left[ \int_{[0,T]} \langle X_{T-s}, \mu(\mathrm{d}s) \rangle\,\middle|\,\mathcal{F}_t\right] 
          + \tfrac{1}{2}\int_t^T 
              \psi(T-s)\,a\bigl(\mathbb{E}[X_s|\mathcal{F}_t]\bigr)\,
              \psi(T-s)^\top \,\mathrm{d}s.
\end{align}
Moreover, the process $(\exp(Y_t))_{t\in[0,T]}$ is a local martingale. If it is in fact a true martingale, the following exponential-affine transform formula holds:
\begin{align*}
    \mathbb{E}\left[\exp \left( \int_{[0,T]} \langle X_{T-s}, \mu(\mathrm{d}s) \rangle \right)\,\middle|\,\mathcal{F}_t\right] = \exp(Y_t), \qquad t\in[0,T].
\end{align*}
\end{theorem}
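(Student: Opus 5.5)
The plan follows the scheme of \cite[Thm.~4.3]{abi2019affine}, with the convolutions against functions replaced by convolutions against the measure $\mu$. The key input is the resolvent form \eqref{eq:VolRic2} from Lemma~\ref{lem:VolRicequiv}.

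\emph{Step 1: conditional mean.} We first obtain a representation of the conditional mean. Since $b$ is affine, taking conditional expectations in the Volterra equation for $X$ gives a linear Volterra equation for $s\mapsto \mathbb{E}[X_s\mid\mathcal{F}_t]$, $s\ge t$. Solving it with the resolvent $R_B$ yields
\[
\mathbb{E}[X_s\mid\mathcal{F}_t] = g_0(s) + \int_0^{t} E_B(s-u)\,\sigma(X_u)\,\mathrm{d}W_u ,
\]
where $g_0$ is deterministic, built from $X_0$, $b^0$ and $E_B$, as in \cite[Lem.~4.4]{abi2019affine}. Integrating against $\mu(\mathrm{d}r)$ with $s=T-r$ requires a stochastic Fubini interchange. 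This is justified because $|\mu|([0,T])<\infty$, $E_B\in L^2_{\mathrm{loc}}$, and $\sup_{u\le T}\mathbb{E}|\sigma(X_u)|^2<\infty$ by the moment bounds on $\nu$. The interchange gives
\[
\mathbb{E}\Big[\int\langle X_{T-r},\mu(\mathrm{d}r)\rangle\Big|\mathcal{F}_t\Big] = \text{deterministic} + \int_0^t (\mu\ast E_B)(T-u)\,\sigma(X_u)\,\mathrm{d}W_u .
\]
The same computation for $\tfrac12\int_t^T\psi(T-s)\,a(\mathbb{E}[X_s\mid\mathcal{F}_t])\,\psi(T-s)^\top\mathrm{d}s$ uses that $a$ is linear in $X$, here only through $\nu A^2$. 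It produces a deterministic part plus $\int_0^t (\tfrac12 A(\psi)\ast E_B)(T-u)\,\sigma(X_u)\,\mathrm{d}W_u$, minus the term $\tfrac12\int_0^t\psi(T-s)a(X_s)\psi(T-s)^\top\mathrm{d}s$ that arises from the moving lower limit $t$.

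\emph{Step 2: identify the martingale part.} Adding the two stochastic integrals and using \eqref{eq:VolRic2}, $(\mu+\tfrac12A(\psi))\ast E_B=\psi$, the martingale part of the right-hand side of \eqref{eq:Y} is $\int_0^t\psi(T-u)\sigma(X_u)\,\mathrm{d}W_u$. The finite-variation part is $-\tfrac12\int_0^t\psi a\psi^\top\mathrm{d}s$. Evaluating \eqref{eq:Y} at $t=0$ gives the constant. There $\mathbb{E}[X_s]$ is expressed through $X_0$, and \eqref{eq:VolRic1} converts it into $\langle X_0,\mu([0,T])\rangle+\int_0^T(\psi b(X_0)+\tfrac12\psi a(X_0)\psi^\top)\mathrm{d}s$. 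This matches $Y_0$, since by uniqueness of the Volterra--Riccati solution both sides solve the same equation. Hence \eqref{eq:Y} holds for all $t$.

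\emph{Step 3: local martingale property.} By construction, $Y$ is a continuous semimartingale with $\mathrm{d}Y=\psi\sigma\,\mathrm{d}W-\tfrac12\psi a\psi^\top\mathrm{d}t$ and $\mathrm{d}\langle Y\rangle=\psi a\psi^\top\mathrm{d}t$. Itô's formula then shows that $\exp(Y)$ is a local martingale; it is a stochastic exponential, with complex values handled componentwise.

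\emph{Step 4: terminal value.} At $t=T$ the second term of \eqref{eq:Y} vanishes and the conditional expectation is the variable itself, so $\exp(Y_T)=\exp(\int\langle X_{T-s},\mu(\mathrm{d}s)\rangle)$. If $\exp(Y)$ is a true martingale, the transform formula follows.

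The main obstacle is the Fubini/resolvent bookkeeping with a merely locally finite $\mu$, which may contain atoms such as Dirac masses at the monitoring dates. The issue is that $\mu\ast E_B$ is only an $L^2$ function and $\psi$ only lies in $W^{\eta,2}$. If the direct argument becomes delicate, the fallback is to approximate $\mu$ by absolutely continuous $\mu^{(n)}$ as in the proof of Theorem~\ref{thm:solRicVolmeasure}, where the identity is \cite[Thm.~4.3]{abi2019affine}. One then passes to the limit using Theorem~\ref{thm:stability-Lq} together with $L^2$ convergence of the stochastic integrals.
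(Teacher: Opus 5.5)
Your proposal is correct and takes the same route as the paper, which simply transfers the proof of \cite[Thm.~4.3]{abi2019affine} with Lemma~\ref{lem:VolRicequiv} playing the role of \cite[Lem.~4.4]{abi2019affine}; your account of where the Riccati equation enters is in fact sharper than the paper's sketch, since $\exp(Y)$ is a local martingale by construction and \eqref{eq:VolRic2} is needed only to identify $Y$ with the right-hand side of \eqref{eq:Y}. The one step to tighten is the matching of constants at $t=0$: no uniqueness argument is involved, only a direct Fubini computation. Substitute $\mathbb{E}[X_s]=X_0+(\bm{\mathcal{K}}\ast b(\mathbb{E}[X_\cdot]))(s)$ and $\mu\ast\bm{\mathcal{K}}=\psi-(\psi B+\tfrac12A(\psi))\ast\bm{\mathcal{K}}$ from \eqref{eq:VolRic1}; the terms involving $\mathbb{E}[X_s]$ then cancel and exactly $Y_0$ remains.
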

\begin{proof}
    The key steps parallel those in \cite[Thm. 4.3]{abi2019affine}, with function-valued inputs replaced by measure-valued inputs. Concretely, our Lemma~\ref{lem:VolRicequiv} provides the direct analogue of the supplementary \cite[Lem. 4.4]{abi2019affine}. In particular, given $Y_t$, one applies It\^{o}'s formula to $\exp(Y_t)$, notes that the resulting drift term cancels by the Volterra--Riccati equation~\eqref{eq:VolRic}, and hence proves that $\exp(Y_t)$ is a local martingale. The conditional-expectation formula then follows if $\exp(Y)$ is a true martingale.
\end{proof}

To apply Theorem~\ref{thm:expafftransf} for the Volterra--Heston model, we need to verify that $\exp(Y)$ is a true martingale. This is done by the following analogue of~\cite[Thm.~4.5(ii)]{abi2019affine} with respect to measures $\mu$. Recall that $L$ is the resolvent of the first kind of $\mathcal{K}$, and let $\bm{L}$ be the resolvent of the first kind for $\bm{\mathcal{K}}=\mathrm{diag}(1, \mathcal{K})$. Note that $\bm{L} = \mathrm{diag}(\delta_0, L)$, which allows us to further simplify the formulas below.

\begin{theorem}\label{thm:Yexpress}
 Let $\mu = (\mu_1, \mu_2) \in\mathcal{M}_{\mathrm{ad}}$. Then the scalar function $\pi_h = \Delta_h \psi_2 \ast L - \Delta_h(\psi_2 \ast L)$ is right-continuous and of bounded variation on $[0,T-h]$ for every $h\leq T-t$. Moreover, in the notation of Theorem~\ref{thm:expafftransf}, the process $Y$ admits the representation
\begin{align}\label{eq:Y-representation}
        Y_t = \int_{(T-t,T]} \langle X_{T-s}, \,\mu(\mathrm{d}s)\rangle + g(T-t) 
          + (\Delta_{T-t}\psi\ast \bm{L})(0)\,X_t 
          - \bm{\pi}_{T-t}(t)\,X_0 
          + (\mathrm{d}\bm{\pi}_{T-t}\ast X)(t),
\end{align}
where $\bm{\pi}_h = (0, \pi_h)$, and $g(t) = \int_0^t \Bigl(\psi(s)b^0 + \tfrac{1}{2}\psi(s)A^0\psi(s)^\top\Bigr)\,\mathrm{d}s$.
\end{theorem}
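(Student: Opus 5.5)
The plan is to follow the proof of \cite[Thm.~4.5]{abi2019affine}, replacing function-valued inputs by the measure $\mu$. The starting point is representation \eqref{eq:Y} from Theorem~\ref{thm:expafftransf}. We need the conditional expectation $\mathbb{E}[X_s\mid\mathcal{F}_t]$ for $s\ge t$ in terms of the past of $X$. Write the vector Volterra equation as $X = X_0 + \bm{\mathcal{K}}\ast(b(X)\,\mathrm{d}s) + \bm{\mathcal{K}}\ast(\sigma(X)\,\mathrm{d}W)$ and split every convolution at time $t$. Taking conditional expectations and solving the resulting linear Volterra equation with the resolvent $R_B$ expresses $\mathbb{E}[X_s\mid\mathcal{F}_t]$ through the shifted kernels $\Delta_{s-t}E_B$ applied to $(b(X)\,\mathrm{d}u + \sigma(X)\,\mathrm{d}W_u)$ on $[0,t]$. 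We then substitute this into \eqref{eq:Y}. The linear part in $\mu$ gives the observed piece $\int_{(T-t,T]}\langle X_{T-s},\mu(\mathrm{d}s)\rangle$ plus an unobserved piece. The Riccati part, after Lemma~\ref{lem:VolRicequiv} ($\psi=(\mu+\tfrac12 A(\psi))\ast E_B$), combines with the unobserved piece into $\Delta_{T-t}\psi$ paired against the stochastic inputs on $[0,t]$. The deterministic $b^0$ and $A^0$ contributions give $g(T-t)$.

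Next, we eliminate the stochastic integrals in favour of $X$ itself. Since $\bm{\mathcal{K}}\ast\bm{L}=\mathrm{id}$, we have $\int_0^t(b\,\mathrm{d}u+\sigma\,\mathrm{d}W)=(\bm{L}\ast(X-X_0))(t)$ in the sense of \cite[Lem.~2.6]{abi2019affine}. A term of the form $\int_0^t \Delta_{T-t}\psi(t-u)\,\mathrm{d}Z_u$ with $Z=\bm{L}\ast(X-X_0)$ is then rewritten by the identity
\[
\Delta_h\psi\ast\bm{L} = (\Delta_h\psi\ast\bm{L})(0)\,\delta_0 + \bm{\pi}_h + \Delta_h(\psi\ast\bm{L}),
\]
which is just the definition of $\pi_h$ in the second component. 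The first component is trivial because $\bm{L}_{11}=\delta_0$, so $\pi$ vanishes there and $\bm{\pi}_h=(0,\pi_h)$. This identity yields the terms $(\Delta_{T-t}\psi\ast\bm{L})(0)X_t$, $-\bm{\pi}_{T-t}(t)X_0$ and $(\mathrm{d}\bm{\pi}_{T-t}\ast X)(t)$. The leftover term $\Delta_h(\psi\ast\bm{L})$ has to be identified: from $\psi=(\mu+\psi B+\tfrac12 A(\psi))\ast\bm{\mathcal{K}}$ we get $\psi\ast\bm{L}=\mu+\psi B+\tfrac12A(\psi)$ as measures. Its shifted version is exactly the drift and measure contribution that cancels the remaining $b^1,b^2,A^2$ terms and the unobserved $\mu$-part. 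This is the same algebra as in \cite{abi2019affine}, except that a genuine measure $\mu$ now appears shifted. That measure is what produces the boundary integral over $(T-t,T]$ rather than a density.

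The main obstacle is the regularity claim: $\pi_h$ must be right-continuous and of bounded variation on $[0,T-h]$, so that $\mathrm{d}\pi_h\ast X$ is a well-defined Stieltjes convolution. In \cite{abi2019affine} this uses $\psi_2\in L^2$ together with Assumption~\ref{assump:K}(b): the resolvent $L$ is nonnegative and $s\mapsto L([s,s+t])$ is nonincreasing. I would write
\[
\pi_h(t)=\int_{[0,t]}\bigl(\psi_2(t-u+h)-\psi_2(t-u+h)\bigr)\ldots,
\]
more precisely express $\pi_h(t)=\int_0^h \psi_2(h-v)\,[\text{increment of }L \text{ over } [t+v-h,t]\ldots]$ as an integral of $\Delta_v\psi_2$ against increments of $L$. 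Bounded variation would then follow from monotonicity of $L$-masses plus $\psi_2\in L^1([0,T])$, via Fubini and Theorem~\ref{thm:solRicVolmeasure}. Right-continuity would follow from dominated convergence. The delicate point is that $\psi_2$ may carry the singular piece $\mathcal{K}\ast\mu_2$ and jumps inherited from $\mu_1$. To handle this, I would approximate $\mu$ by absolutely continuous inputs as in the proof of Theorem~\ref{thm:solRicVolmeasure}, show uniform total-variation bounds for the corresponding $\pi_h^n$, and pass to the limit using Theorem~\ref{thm:stability-Lq} and Helly's selection theorem. The same approximation can also be used to transfer \eqref{eq:Y-representation} itself from the smooth case, as an alternative to the direct computation.
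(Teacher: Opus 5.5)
Your derivation of \eqref{eq:Y-representation} itself is the same as the paper's. You split off the observed part $\int_{(T-t,T]}\langle X_{T-s},\mu(\mathrm{d}s)\rangle$ and then run the algebra of \cite[Thm.~4.5(ii)]{abi2019affine}. There are two notational slips, both harmless. By definition $\Delta_h\psi\ast\bm{L}=\bm{\pi}_h+\Delta_h(\psi\ast\bm{L})$, and the coefficient $(\Delta_{T-t}\psi\ast\bm{L})(0)$ of $X_t$ is the atom at $0$ of the Stieltjes measure of $\Delta_h\psi\ast\bm{L}$. Also, $\psi\ast\bm{L}$ is the distribution function of $\mu+\psi B+\tfrac12A(\psi)$, not that measure.

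\textbf{The gap is the regularity of $\pi_h$.} Written out, $\pi_h(t)=-\int_{(0,h]}\psi_2(h-v)\,L(t+\mathrm{d}v)$. Splitting $\psi_2$ into real/imaginary positive/negative parts bounds the variation on $[0,T-h]$ by a multiple of $\int_{(0,h]}|\psi_2(h-v)|\,L(\mathrm{d}v)$. This requires $\psi_2$ to be integrable against $L$, not against Lebesgue measure. Neither $\psi_2\in L^1$ nor any $L^p$ bound from Theorem~\ref{thm:solRicVolmeasure} provides this.
\begin{itemize}
\item Take $\mathcal{K}(t)=t^{\alpha-1}/\Gamma(\alpha)$ with $\alpha\in(\tfrac12,1)$. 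Then $L(\mathrm{d}v)=v^{-\alpha}\,\mathrm{d}v/\Gamma(1-\alpha)$.
\item Theorem~\ref{thm:solRicVolmeasure} gives $\psi\in L^p$ only for $p<1/(1-\alpha)$, while H\"older against $v^{-\alpha}$ needs $p>1/(1-\alpha)$.
\item The function $x\mapsto|x-h|^{\alpha-1}$ lies in all these $L^p$, yet $\int_0^h v^{\alpha-1}v^{-\alpha}\,\mathrm{d}v=\infty$.
\end{itemize}

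Your approximation fallback does not close this, for three reasons.
\begin{itemize}
\item Helly requires variation bounds on $\pi_h^{n}$ that are uniform in $n$, which is the same missing estimate. Sup-norm bounds cannot supply it, since $\|\psi_2^n\|_\infty$ blows up when $\mu_2$ has atoms and $\mathcal{K}(0)=\infty$.
\item The $L^2$-convergence from Theorem~\ref{thm:stability-Lq} does not pass through integration against $v^{-\alpha}$, so the Helly limit is not identified with $\pi_h$.
\item A pointwise Helly limit need not be right-continuous.
\end{itemize}

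\textbf{The missing idea is to use the Volterra structure instead of integrability.} Since $\mathcal{K}\ge0$, we have $|\psi_2|\le\mathcal{K}\ast|\mu_2|+\mathcal{K}\ast|R(\psi)|$. Since $L\ge0$ and $\mathcal{K}\ast L\equiv1$, Tonelli gives
\[
  \int_{(0,h]}|\psi_2(h-v)|\,L(\mathrm{d}v)\le|\mu_2|([0,h])+\int_0^h|R(\psi(s))|\,\mathrm{d}s<\infty .
\]
With this bound your direct argument works for every $\mu\in\mathcal{M}_{\mathrm{ad}}$, with no approximation step.
\begin{itemize}
\item Assumption~\ref{assump:K}(b) forces $L=\gamma\delta_0+\ell(v)\,\mathrm{d}v$ with $\ell\ge0$ nonincreasing: there are no atoms on $(0,\infty)$, and the distribution function is concave there.
\item This density form, rather than mere monotonicity of interval masses, is what makes $t\mapsto\int f(v)\,\ell(t+v)\,\mathrm{d}v$ nonincreasing for every $f\ge0$.
\item Right-continuity then follows by dominated convergence with majorant $|\psi_2(h-v)|\,\ell(v)$.
\end{itemize}

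This is a different route from the paper's. The paper writes $\bm{\pi}_{T-t}(x)=[(\mu+\tfrac12A(\psi))\ast\Pi_\cdot(x)](T-t)$ with the kernel-level function $\Pi_h=\Delta_hE_B\ast\bm{L}-\Delta_h(E_B\ast\bm{L})$. It then imports right-continuity and bounded variation of $\Pi_h$ from \cite[Thm.~4.5(i)]{abi2019affine}. There $\mu$ and $A(\psi)$ enter only through their total variation, so integrability of $\psi_2$ against $L$ never arises. Your repaired route is self-contained and gives an explicit variation bound in terms of $|\mu_2|$ and $\|R(\psi)\|_{L^1}$.
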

\begin{proof}
Remark that $\Delta_h(f\ast \mu)(t) = (\Delta_h f\ast \mu)(t) + (f\ast\Delta_t \mu)(h)$, where $(f \ast \Delta_t \mu)(h) = \int_{[t,t+h]} f_{t+h -r}\, \mu(\mathrm{d}r)$. Hence, using that $\psi = \left( \mu + \frac{1}{2}A(\psi)\right)\ast E_B$ by Lemma \ref{lem:VolRicequiv}, we obtain
     \begin{align*}
         \Delta_{T-t}\psi(x) &= \Delta_{T-t}(\mu\ast E_B)(x) + \Delta_{T-t}\left(\frac{1}{2}A(\psi)\ast E_B\right)(x) \\
         &= \int_0^{T-t+x}E_B(T-t+x-s)\mu(\mathrm{d} s)  + \left(\frac{1}{2}A(\Delta_{T-t}\psi)\ast E_B\right)(x) + \left(\frac{1}{2}A(\psi)\ast\Delta_xE_B\right)(T-t).
     \end{align*}
     Convolving with $\bm{L}$ and applying the Fubini theorem yields
     \begin{align*}
         (\Delta_{T-t}\psi\ast \bm{L})(x) &= \left(\frac{1}{2}A(\Delta_{T-t}\psi)\ast E_B\ast \bm{L}\right)(x) + \left[\left(\mu + \frac{1}{2}A(\psi)\right)\ast(\Delta_\cdot E_B\ast \bm{L})(x)\right](T-t).
     \end{align*}
     Similarly,
     \begin{align*}
         \Delta_{T-t}(\psi\ast \bm{L})(x) &= \left(\frac{1}{2}A(\Delta_{T-t}\psi)\ast E_B\ast \bm{L}\right)(x) + \left[\left(\mu + \frac{1}{2}A(\psi)\right)\ast\Delta_\cdot(E_B\ast \bm{L})(x)\right](T-t).
     \end{align*}
     Computing the difference between these two expressions gives
    \begin{equation*}
         \bm{\pi}_{T-t}(x) = \left[\left(\mu + \frac{1}{2}A(\psi)\right)\ast\Pi_\cdot(x)\right](T-t),
     \end{equation*}
     where $\Pi_h(t) = \Delta_h E_B \ast \bm{L} - \Delta_h (E_B \ast \bm{L})$. By \cite[Thm. 4.5(i)]{abi2019affine}, this function is right-continuous and each term is of locally bounded variation. Hence, also $\bm{\pi}_h$ is right-continuous and of locally bounded variation. Now, by the Fubini theorem,
     \begin{equation*}
         \mathbb{E}[(\mu\ast X)_T|\mathcal{F}_t] = \int_{(T-t,T]} \langle X_{T-s}, \mu(\mathrm{d} s)\rangle + \int_{[0,T-t]} \langle \mathbb{E}[X_{T-s}|\mathcal{F}_t],\, \mu(\mathrm{d} s) \rangle.
     \end{equation*}
     By this, \eqref{eq:Y}, and using \cite[Thm. 4.5(i) and eq. (4.16)]{abi2019affine}, the assertion now follows along the lines of \cite[Thm. 4.5.(ii)]{abi2019affine}. 
\end{proof}    

\begin{proof}[Proof of Theorem~\ref{cor:CorsolRicVolmeasure}]
    We would like to apply the exponential-affine transformation formula from Theorem~\ref{thm:expafftransf}. For this we need to show that $\exp(Y)$ with $Y$ as in (\ref{eq:Y}) is a true martingale, which is done following the proof of \cite[Thm. 7.1]{abi2019affine}. Note that $\exp(Y)$ is a local martingale by Theorem~\ref{thm:expafftransf}, and Theorem~\ref{thm:Yexpress} is applicable, so
    \begin{align*}
        \Re Y_t &= \int_{(T-t,T]}\log S_{T-s}\Re\mu_1(\mathrm{d} s) + \int_{(T-t,T]}\nu_{T-s}\Re\mu_2(\mathrm{d} s) + \Re g(T-t) 
        \\ &\qquad + (\Delta_{T-t}\Re\psi_1\ast\delta_0)(0)\log S_t + (\Delta_{T-t}\Re\psi_2\ast L)(0)\nu_t - \Re\bm\pi_{T-t,1}(t)\log S_0 \\
        &\qquad - \Re\bm\pi_{T-t,2}(t)\nu_0 + (\mathrm{d}\Re\bm\pi_{T-t,1}\ast\log S)_t + (\mathrm{d}\Re\bm\pi_{T-t,2}\ast\nu)(t) \\
        &= \int_{(T - t,T]}\log S_{T - s}\Re\mu_1(\mathrm{d} s) + \int_{(T - t,T]}\nu_{T - s}\Re\mu_2(\mathrm{d} s) + \Re g(T - t) + \Re\psi_1(T - t)\log S_t \\
        &\qquad + (\Delta_{T-t}\Re\psi_2\ast L)(0)\nu_t - \Re\pi_{T-t}(t)\nu_0 + (\mathrm{d}\Re\pi_{T-t}\ast\nu)(t),
    \end{align*}
    since $\bm{\mathcal{K}}_{11} =1$, so its resolvent of the first kind is $\delta_0$. Since $\Re\psi_1\in[0,1]$ integration by parts for measures, which can be applied since $\mu\in\mathcal{M}_{lf}$, yields
    \begin{align*}
        \Re\psi_1&(T-t)\log S_t + \int_{(T-t,T]}\log S_{T-s}\Re\mu_1(\mathrm{d} s) \\
        &= \Re\psi_1(T-t)\log S_t + \Re\mu_1([0,T])\log S_0 - \Re\mu_1([0,T-t])\log S_t - \int_{T-t}^T\Re\mu_1([0,s])\mathrm{d}\log S_{T-s} \\
        &= \Re\psi_1(T)\log S_0 + \int_{0}^t\Re\psi_1(T-s)\mathrm{d}\log S_s \\
        &\leq \Re\psi_1(T)\log S_0 + r T + U_t - \frac{1}{2}\langle U\rangle_t,
    \end{align*}
    with $U_t = \int_0^t\Re\psi_1(T-s)\sqrt{\nu_s}\mathrm{d} B^S_s$. Recall that $L$ is nonnegative and non-increasing by assumption and that $\Re\mu_2\leq0$, $\Re\psi_2\leq0$ and $\nu_t\geq0$, so $\int_{(T-t,T]}\nu_{T-s}\Re\mu_2(\mathrm{d} s)\leq0$, $\Re g(T-t)\leq rT$ and $(\Delta_{T-t}\Re\psi_2\ast L)(0)\nu_t\leq0$. Note also that $\Re\pi_{h}(t) = -\int_0^h\Re\psi_2(h-s)L(t+\mathrm{d} s)$, so $\Re\pi_{h}$ is nonnegative and non-increasing. Hence both $-\Re\pi_{T-t}(t)\nu_0\leq0$ and $(\mathrm{d}\Re\pi_{T-t}\ast\nu)(t)\leq0$. Combining all of the above yields
    \begin{align*}
        |\exp(Y_t)| = \exp(\Re Y_t) \leq \E^{2r T}S_0^{\Re\psi_1(T)}\exp\left(U_t-\frac{1}{2}\langle U\rangle_t\right).
    \end{align*}
    Since the right-hand side is a martingale by \cite[Lem. 7.3]{abi2019affine}, $\exp(Y)$ is also a true martingale. Therefore, the exponential-affine transformation formula from Theorem~\ref{thm:expafftransf} holds, and so the result follows.
\end{proof}

\subsection[Fourier--Laplace transform for g-generalised geometric averages]{Fourier--Laplace transform for \texorpdfstring{$g$}{g}-generalised geometric averages}\label{sec:proof-CFform}

\begin{proof}[Proof of Corollary~\ref{thm:condcfG_nS_Texpression}]
    Define $\Theta_T: [0,T] \longrightarrow [0,T]$ by $\Theta_T(u) = T - u$ and set $g_T = g \circ \Theta_T^{-1}$. Then $\int_{[0,T]}f(u)\, g(\mathrm{d}u) = \int_{[0,T]} f(T-u)g_T(\mathrm{d}u)$ for any $g$-integrable function $f$. Let us define $\mu = (\mu_1, \mu_2)$ by $\mu_1 = s\,g_T + w\,\delta_0$, and $\mu_2 = 0$. Let $\psi = (\psi_1, \psi_2)$ be given by \eqref{eq:phi1}-\eqref{eq:VolRic}. Then $\psi_2 = \mathcal{K} \ast R(\psi)$ and
    $$
        \psi_1(t) = \mu_1([0,t]) = s g_T([0,t]) + w = s g([T-t,T]) + w.
    $$
    Recall that $G_T = G_t\,G_{t,T}$ by \eqref{eq:Gg}, so $\log G_T = \log G_t + \log G_{t,T}$, and so
    \begin{equation}\label{eq:psit-GT-relation}
        \Psi_t(s,w)
        = \E^{-s\log G_t}\,\mathbb{E}\!\left[\exp\bigl( s \log G_T + w\log S_T \bigr)\,\middle|\, \mathcal{F}_t\right],
        \qquad \log G_t = \int_{[0,t]} \log S_y\,g(\mathrm{d}y).
    \end{equation}
    Therefore it suffices to compute the transform of the full average. We have
    \begin{align*}
        \mathbb{E}&\left[\exp \left( s \log(G_T) + w\log(S_T) \right)\middle| \mathcal{F}_{t}\right]
        \\ &= \mathbb{E}\left[\exp \left(\int_{[0,T]}\langle X_{T-y},\mu(\mathrm{d} y)\rangle \right)\middle|\mathcal{F}_{t}\right]
        \\ &= \exp \left(\int_{[0,T]}\langle\mathbb{E}[X_{T-y}|\mathcal{F}_{t}],\mu(\mathrm{d} y)\rangle + \frac{1}{2}\int_{t}^{T}\psi(T-y)A^2\xi_{t}(y)\psi(T-y)^\top\mathrm{d} y \right),
    \end{align*}
    where the last equality holds by Corollary~\ref{cor:CorsolRicVolmeasure}. For the second term, using the definition of $A^2$, we obtain
    \begin{align*}
        \int_{t}^{T}&\psi(T-y)A^2\xi_{t}(y)\psi(T-y)^\top\mathrm{d} y \\
        &= \int_{t}^{T}\left(\psi_1(T-y)^2+2\rho\sigma\psi_1(T-y)\psi_2(T-y)+\sigma^2\psi_2(T-y)^2\right)\xi_{t}(y)\mathrm{d} y.
    \end{align*}
    For the first term, we use the stochastic differential equation satisfied by $\log(S_y)$, to find
    \begin{align*}
        \mathbb{E}[\log(S_y)|\mathcal{F}_{t}] = \log(S_{t}) + r(y-t) - \frac{1}{2}\int_{t}^y\xi_{t}(x)\mathrm{d} x, \qquad y \in [t,T].
    \end{align*}    
    Hence, we obtain
    \begin{align*}
        \int_{[0,T]}\langle\mathbb{E}[X_{T-y}|\mathcal{F}_{t}],\mu(\mathrm{d} y)\rangle 
        &= s \int_{[0,T]} \mathbb{E}[\log(S_{y})|\mathcal{F}_{t}]\, g(\mathrm{d}y) + w\mathbb{E}[\log(S_{T})|\mathcal{F}_{t}] 
        \\ &= s \int_{[0,t]} \log(S_y)\, g(\mathrm{d}y) + s \int_{(t,T]}\left( \log(S_t) + r(y-t) - \frac{1}{2}\int_t^y \xi_t(x)\, \mathrm{d}x\right)\, g(\mathrm{d}y)
        \\ &\qquad + w \left( \log(S_t) + r(T-t) - \frac{1}{2} \int_t^T \xi_t(x)\, \mathrm{d}x\right)
        \\ &= \log(S_t)\big( s g((t,T]) + w \big) + s \int_{[0,t]}\log(S_y)\, g(\mathrm{d}y) 
        \\ &\qquad + r \left( s \int_{(t,T]} (y-t)\, g(\mathrm{d}y) + w(T-t) \right)
        - \frac{1}{2}\int_t^T \big( w + s g((x,T]) \big) \xi_t(x)\, \mathrm{d}x
    \end{align*}
    where we have used Fubini's theorem to find that
    $$
        \int_{(t,T]} \int_t^y \xi_t(x)\, \mathrm{d}x\, g(\mathrm{d}y)
        = \int_t^T \int_{(x,T]} \xi_t(x)\, g(\mathrm{d}y)\, \mathrm{d}x 
        = \int_t^T g((x,T])\xi_t(x)\, \mathrm{d}x.
    $$
    Substituting into \eqref{eq:psit-GT-relation} and cancelling $\exp\left(s\int_{[0,t]}\log S_y\, g(\mathrm{d}y)\right) = G_t^{\,s}$ against the $s\int_{[0,t]}\log(S_y)\,g(\mathrm{d}y)$ contribution yields the stated expression for $\Psi_t(s,w)$.
\end{proof}

\section{Proofs of the results in Section~\ref{sec:pricing}}\label{sec:proofs-from-section-3}

\subsection{Semi-analytic pricing formula}\label{sec:proof-pricing-formula}

\begin{proof}[Proof of Theorem \ref{thm:general-geometric-payoff-pricing}]
    Firstly, bounding the geometric mean by the corresponding arithmetic mean we find $\mathbb{E}[G_T] \leq \int_{[0,T]} \mathbb{E}[S_t]\, g(\mathrm{d}t) < \infty$, and hence $G_T$ is integrable. Since $G_T^R \leq 1 + G_T$, $G_T^R$ is also integrable. Since $\zeta_0$ is finite and $|G_T^z|=G_T^R$ for $z\in\mathcal{S}_R$, we have
    \begin{align*}
        \int_{\mathcal{S}_R} \mathbb{E}\left[|G_T^z|\right] |\zeta_0|(\mathrm{d}z) &= \mathbb{E}[G_T^R]\, |\zeta_0|(\mathcal{S}_R) < \infty.
    \end{align*}
    Similarly, since $|G_T^zS_T^{1-z}| = G_T^R S_T^{1-R} \leq R G_T + (1-R)S_T$ with $z\in\mathcal{S}_R$, we find
    \begin{align*}
        \int_{\mathcal{S}_R} \mathbb{E}\left[ |G_T^zS_T^{1-z}| \right] |\zeta_1|(\mathrm{d}z) &=
        \mathbb{E}\left[ G_T^R S_T^{1-R} \right] |\zeta_1|(\mathcal{S}_R) \leq \left( R \mathbb{E}[G_T] + (1-R)\mathbb{E}[S_T] \right) |\zeta_1|(\mathcal{S}_R) < \infty.
    \end{align*}
    Hence the conditional Fubini's theorem applies to both integral terms in \eqref{eq:bromwich-general-payoff}. It follows that
    \begin{align*}
        H_t^h &= \mathrm{e}^{-r(T-t)} \Bigg( a + b\,\mathbb{E}\left[ G_T\mid\mathcal{F}_t \right] +  c\,\mathbb{E}\left[ S_T\mid\mathcal{F}_t \right] 
        \\ &\qquad \qquad \qquad + \int_{\mathcal{S}_R} \mathbb{E}\left[ G_T^z\mid\mathcal{F}_t \right] \zeta_0(\mathrm{d}z) + \int_{\mathcal{S}_R} \mathbb{E}\left[ G_T^zS_T^{1-z} \mid\mathcal{F}_t \right] \zeta_1(\mathrm{d}z) \Bigg).
    \end{align*}
    Since $G_T = G_tG_{t,T}$ and $G_t$ is $\mathcal{F}_t$-measurable, we get $\mathbb{E}\left[ G_T^z\mid\mathcal{F}_t \right] = G_t^z \mathbb{E}\left[ G_{t,T}^z\mid\mathcal{F}_t \right] = G_t^z\Psi_t(z,0)$. Likewise, we obtain $\mathbb{E}\left[ G_T^zS_T^{1-z} \mid\mathcal{F}_t \right] = G_t^z \mathbb{E}\left[ G_{t,T} ^zS_T^{1-z} \mid\mathcal{F}_t \right] = G_t^z\Psi_t(z,1-z)$ and $\mathbb{E}\left[ G_T\mid\mathcal{F}_t \right] = G_t\Psi_t(1,0)$. Moreover, since $(\mathrm{e}^{-rt}S_t)_{t\in[0,T]}$ is a martingale, $\mathbb{E}\left[ S_T\mid\mathcal{F}_t \right] = \mathrm{e}^{r(T-t)}S_t$, and therefore $\mathrm{e}^{-r(T-t)} \mathbb{E}\left[ S_T\mid\mathcal{F}_t \right] = S_t$. Substituting these identities gives
    \eqref{eq:general-geometric-payoff-transform}.

    Finally, for the first integral on the right-hand side of
    \eqref{eq:general-geometric-payoff-transform},
    \begin{align*}
        \int_{\mathcal{S}_R} \left| G_t^z\Psi_t(z,0) \right| |\zeta_0|(\mathrm{d}z) &\leq |\zeta_0|(\mathcal{S}_R) \mathbb{E}\left[ G_T^R \mid\mathcal{F}_t \right] < \infty
    \end{align*}
    almost surely. Similarly,
    \begin{align*}
        \int_{\mathcal{S}_R} \left| G_t^z\Psi_t(z,1-z) \right| |\zeta_1|(\mathrm{d}z) &\leq |\zeta_1|(\mathcal{S}_R) \mathbb{E}\left[ G_T^R S_T^{1-R} \mid\mathcal{F}_t \right] < \infty
    \end{align*}
    almost surely. This proves the asserted absolute convergence.
\end{proof}

\subsection{Proofs for discrete to continuous}\label{sec:proof-discrete-to-continuous}

\begin{proof}[Proof of Theorem \ref{thm:general-price-process-convergence}]
    Set $D_N := h(G_T^N,S_T)-h(G_T^c,S_T)$. Then $H_t^{h,N}-H_t^{h,c} = \mathrm e^{-r(T-t)} \mathbb E[D_N\mid\mathcal F_t]$. Since $r\geq0$, $\mathrm e^{-r(T-t)}\leq1$, and therefore
    \[
        \sup_{t\in[0,T]} |H_t^{h,N}-H_t^{h,c}| \leq \sup_{t\in[0,T]} \left| \mathbb E[D_N\mid\mathcal F_t] \right|.
    \]
    As shown below, the process $\left( \mathbb E[D_N\mid\mathcal F_t] \right)_{t\in[0,T]}$ is a square-integrable martingale. By Doob's $L^2$ inequality,
    \[
        \mathbb E\left[ \sup_{t\in[0,T]} |H_t^{h,N}-H_t^{h,c}|^2 \right] \leq \mathbb E\left[ \sup_{t\in[0,T]} \left| \mathbb E[D_N\mid\mathcal F_t] \right|^2 \right]
        \leq 4\mathbb E[|D_N|^2].
    \]
    Thus, it remains to show that $\mathbb E[|D_N|^2]\longrightarrow0$. 

    Set $S_T^*:=\sup_{u\in[0,T]}S_u$. Since $M_t:=\mathrm e^{-rt}S_t$ is a square-integrable martingale and $r\geq0$, $S_T^* \leq \mathrm e^{rT}\sup_{u\in[0,T]}|M_u|$. Hence Doob's inequality gives $\mathbb E[(S_T^*)^2] \leq 4\mathrm e^{2rT}\mathbb E[M_T^2] < \infty$. Since $\Delta_N\to0$, the left-Riemann sums imply $\int_{[0,T]}f(u)\,g_N^\Delta(\mathrm du) \longrightarrow \frac1T\int_0^Tf(u)\,\mathrm du$ for every continuous $f:[0,T]\longrightarrow \mathbb R$. Since $S$ is strictly positive and continuous, the map $u\mapsto\log(S_u)$ is almost surely continuous on $[0,T]$. Consequently,
    \begin{align} \label{eq: 1}
        \int_{[0,T]}\log(S_u)\,g_N^\Delta(\mathrm du) \longrightarrow  \frac1T\int_0^T\log(S_u)\,\mathrm du \qquad\text{a.s.},
    \end{align}
    and therefore $G_T^N\longrightarrow G_T^c$ a.s. By Jensen's inequality, $G_T^N \leq \int_{[0,T]}S_u\,g_N^\Delta(\mathrm du) \leq S_T^*$, and similarly $G_T^c \leq \frac1T\int_0^TS_u\,\mathrm du \leq S_T^*$. Hence, for $z\in\mathcal S_R$, $|(G_T^N)^z| = (G_T^N)^R \leq 1+S_T^*$. Moreover, the weighted arithmetic-geometric mean inequality gives
    \begin{align*}
        |(G_T^N)^zS_T^{1-z}| = (G_T^N)^R S_T^{1-R} \leq R G_T^N+(1-R)S_T \leq S_T^*.
    \end{align*}
    The same estimates hold with $G_T^N$ replaced by $G_T^c$.

    Using \eqref{eq: 1}, we find for every fixed $z\in\mathcal S_R$, $(G_T^N)^z \longrightarrow (G_T^c)^z$ and $(G_T^N)^zS_T^{1-z} \longrightarrow (G_T^c)^zS_T^{1-z}$ a.s. Since $\zeta_0$ and $\zeta_1$ have finite total variation, an application of the dominated convergence theorem gives 
    \[
        \int_{\mathcal S_R} (G_T^N)^z\,\zeta_0(\mathrm dz) \longrightarrow \int_{\mathcal S_R} (G_T^c)^z\,\zeta_0(\mathrm dz)
    \]
    and
    \[
        \int_{\mathcal S_R} (G_T^N)^zS_T^{1-z}\,\zeta_1(\mathrm dz) \longrightarrow \int_{\mathcal S_R}(G_T^c)^zS_T^{1-z}\,\zeta_1(\mathrm dz)
    \]
    almost surely. Therefore $h(G_T^N,S_T) \longrightarrow h(G_T^c,S_T)$ a.s. Furthermore,
    \begin{align*}
        |h(G_T^N,S_T)| \leq |a| + (|b|+|c|)S_T^* + |\zeta_0|(\mathcal S_R)(1+S_T^*) + |\zeta_1|(\mathcal S_R)S_T^* \leq C_h(1+S_T^*)
    \end{align*}
    for a deterministic constant $C_h>0$ independent of $N$. The same estimate holds for $h(G_T^c,S_T)$. Since $(1+S_T^*)^2\in L^1(\Omega)$, dominated convergence yields the claim. 
\end{proof}

\begin{proof}[Proof of Theorem \ref{thm:lipschitz-payoff-convergence-rate}]
    Set $S_T^* := \sup_{u\in[0,T]}S_u$. Since $(\mathrm e^{-rt}S_t)_{t\in[0,T]}$ is a martingale, $r\geq0$, and $S_T\in L^{2q}(\Omega)$, Doob's inequality gives $\|S_T^*\|_{L^{2q}(\Omega)}<\infty$. Note also that the prices are well defined. Indeed, by
    \eqref{eq:lipschitz-first-variable},
    \[
        |h(G_T^N,S_T)| \leq |h(1,S_T)| + L_h|G_T^N-1| \leq |h(1,S_T)| + L_h(1+S_T^*),
    \]
    which belongs to $L^q(\Omega)$. The same argument applies to $h(G_T^c,S_T)$.
    
    Set $L_N:=\log G_T^N$, $L_c:=\log G_T^c$, $\delta_N:=\omega_N^N=t_N-t_{N-1}$ and $T_N=T+\delta_N$. Then $\delta_N\leq\Delta_N$. Define the left-Riemann approximation
    \[
        \bar L_N := \frac1T \sum_{j=0}^{N-1} \omega_j^N X_{t_j}.
    \]
    By the definition of $G_T^N$, $L_N = \frac{T}{T_N}\bar L_N + \frac{\delta_N}{T_N}X_T$. We first estimate $\bar L_N-L_c$. Since $L_c = \frac1T\int_0^T X_s\,\mathrm ds$, we have 
    \begin{align*}
        T(\bar L_N-L_c) &= \sum_{j=0}^{N-1} \int_{t_j}^{t_{j+1}} (X_{t_j}-X_s)\,\mathrm ds.
    \end{align*}
    For $u\in(t_j,t_{j+1}]$, define $\kappa_N(u):=t_{j+1}-u$, such that $0\leq\kappa_N(u)\leq\Delta_N$. Integration by parts on each interval gives
    \begin{align}\label{eq:log-riemann-error}
        T(\bar L_N-L_c) &= -\sum_{j=0}^{N-1} \int_{t_j}^{t_{j+1}} (t_{j+1}-u)\,\mathrm dX_u
        = -\int_0^T\kappa_N(u)\,\mathrm dA_u - \int_0^T\kappa_N(u)\,\mathrm dM_u.
    \end{align}
    Since $A$ has finite variation, 
    \begin{align}\label{eq:A-riemann-bound}
        \left\| \int_0^T\kappa_N(u)\,\mathrm dA_u \right\|_{L^{2q}(\Omega)} \leq \Delta_N \big\||A|_T\big\|_{L^{2q}(\Omega)}.
    \end{align}
    By the Burkholder--Davis--Gundy inequality,
    \begin{align*}
        \left\| \int_0^T\kappa_N(u)\,\mathrm dM_u \right\|_{L^{2q}(\Omega)} \leq C_{2q} \left\| \left( \int_0^T \kappa_N(u)^2\, \mathrm d\langle M\rangle_u \right)^{1/2} \right\|_{L^{2q}(\Omega)} \leq C_{2q}\Delta_N \big\| \langle M\rangle_T^{1/2} \big\|_{L^{2q}(\Omega)}.
    \end{align*}
    Therefore
    \begin{align}\label{eq:left-riemann-log-rate}
        \|\bar L_N-L_c\|_{L^{2q}(\Omega)} \leq \frac{\Delta_N}{T} \left( \big\||A|_T\big\|_{L^{2q}(\Omega)} + C_{2q} \big\|\langle M\rangle_T^{1/2}\big\|_{L^{2q}(\Omega)} \right).
    \end{align}

    It remains to control the remaining terminal observation point. Let us write 
    \begin{align}\label{eq:LN-minus-Lc}
        L_N-L_c &= \frac{T}{T_N}(\bar L_N-L_c) + \frac{\delta_N}{T_N}(X_T-L_c).
    \end{align}
    By integration by parts,
    \begin{align*}
        X_T-L_c = \frac1T \left( TX_T-\int_0^TX_s\,\mathrm ds \right)
        &= \frac1T\int_0^T u\,\mathrm dX_u 
        \\ &= \frac1T\int_0^T u\,\mathrm dA_u + \frac1T\int_0^T u\,\mathrm dM_u.
    \end{align*}
    Using the BDG-inequality we arrive at $\|X_T-L_c\|_{L^{2q}(\Omega)} \leq \big\||A|_T\big\|_{L^{2q}(\Omega)} + C_{2q} \big\|\langle M\rangle_T^{1/2}\big\|_{L^{2q}(\Omega)}$. This, combined with \eqref{eq:left-riemann-log-rate}, \eqref{eq:LN-minus-Lc}, and using $\frac{T}{T_N}\leq1$ and $\frac{\delta_N}{T_N} \leq \frac{\Delta_N}{T}$, yields
    \begin{align}\label{eq:log-average-rate}
        \|L_N-L_c\|_{L^{2q}(\Omega)} \leq \frac{2\Delta_N}{T} \left( \big\||A|_T\big\|_{L^{2q}(\Omega)} + C_{2q} \big\|\langle M\rangle_T^{1/2}\big\|_{L^{2q}(\Omega)} \right).
    \end{align}
    
    We next pass from the logarithmic averages to the geometric averages. By Jensen's inequality,
    \[
        G_T^N \leq \int_{[0,T]}S_u\,g_N^\Delta(\mathrm du) \leq S_T^*\quad \text{ and } \quad G_T^c \leq \frac1T\int_0^T S_u\,\mathrm du \leq S_T^*.
    \]
    Using $ |\mathrm e^x-\mathrm e^y| \leq (\mathrm e^x+\mathrm e^y)|x-y|$ for $x,y\in\mathbb R$, we obtain 
    \begin{align*}
        |G_T^N-G_T^c| \leq (G_T^N+G_T^c)|L_N-L_c| \leq 2S_T^*|L_N-L_c|.
    \end{align*}
    By the Lipschitz property~\eqref{eq:lipschitz-first-variable}, H\"older's inequality, and~\eqref{eq:log-average-rate}, we obtain
    \begin{align*}
        \|h(G_T^N,S_T)-h(G_T^c,S_T)\|_{L^q(\Omega)} &\leq L_h \|G_T^N-G_T^c\|_{L^q(\Omega)} 
        \\ &\leq 2L_h \|S_T^*\|_{L^{2q}(\Omega)}\|L_N-L_c\|_{L^{2q}(\Omega)} 
        \leq 4C_q(T) \Delta_N \|S_T^*\|_{L^{2q}(\Omega)}
    \end{align*}
    for a finite constant $C_q(T)$ independent of $N$. Finally, recalling that $D_N := h(G_T^N,S_T)-h(G_T^c,S_T)$. Then $H_t^{h,N}-H_t^{h,c} = \mathrm e^{-r(T-t)} \mathbb E[D_N\mid\mathcal F_t]$. Since $r\geq0$, $\mathrm e^{-r(T-t)}\leq1$, and therefore $\sup_{t\in[0,T]} |H_t^{h,N}-H_t^{h,c}| \leq \sup_{t\in[0,T]} \left| \mathbb E[D_N\mid\mathcal F_t] \right|$. By Doob's $L^q$ inequality,
    \begin{align*}
        \left\| \sup_{t\in[0,T]} |H_t^{h,N}-H_t^{h,c}| \right\|_{L^q(\Omega)} \leq \frac{q}{q-1} \|D_N\|_{L^q(\Omega)} \leq \frac{q}{q-1} L_h C_q(T)\Delta_N.
    \end{align*}
    Taking the $q$-th power yields \eqref{eq:lipschitz-price-process-rate}.
\end{proof}

\section{Proofs from Section \ref{sec:hedging}}

\subsection{Proof of Theorem \ref{thm:optimal-strategy-geo}}\label{sec:proof-optimal-strategy-geo}

\begin{proof}[Proof of Theorem \ref{thm:optimal-strategy-geo}]
We first derive the GKW decomposition for the auxiliary claim $G_T^sS_T^w$, $(s,w)\in\mathcal D$. By \eqref{eq:Hsw-hedging} and \eqref{eq:Y-general-hedge-expanded}, its pricing martingale is $H_t(s,w) = \exp\bigl(Y_t(s,w)\bigr)$. An application of Theorem \ref{thm:expafftransf} gives
\begin{align*}
    \mathrm dH_t(s,w) = H_t(s,w)\sqrt{\nu_t}\, \left(\psi_{1}^{s,w}(T-t) + \rho\sigma \psi_2^{s,w}(T-t) \right)\, \mathrm dW_t^S + H_t(s,w)\sqrt{\nu_t}\sqrt{1-\rho^2}\, \sigma\psi_2^{s,w}(T-t)\, \mathrm dW_t^I.
\end{align*}
Since $\psi_{1,-}^{s,w}$ and $\psi_{1}^{s,w}$ differ only on atoms of $g$, which are at most countable, we may replace $\psi_{1}^{s,w}(T-t)$ by $\psi_{1,-}^{s,w}(T-t)$. Since both functions are equal $\mathrm{d}t$-a.e., it does not alter the equivalence class, whence
\begin{align}\label{eq:dHsw}
    \mathrm dH_t(s,w) = H_t(s,w)\sqrt{\nu_t}\, \beta_t(s,w)\, \mathrm dW_t^S + H_t(s,w)\sqrt{\nu_t}\sqrt{1-\rho^2}\, \sigma\psi_2^{s,w}(T-t)\, \mathrm dW_t^I.
\end{align} 

Since $\mathrm dS_t = S_t \sqrt{\nu_t}\,\mathrm dW_t^S$, we have $\mathrm d\langle S\rangle_t = S_t^2\nu_t\,\mathrm dt$ and $\mathrm d\langle H(s,w),S\rangle_t = H_t(s,w)S_t\nu_t \beta_t(s,w)\,\mathrm dt$. Consequently, using the representation of the GKW integrand in terms of quadratic variations, we find that for $G_T^sS_T^w$ the optimal portfolio is given by
\begin{align}\label{eq:elementary-gkw-integrand}
    \vartheta_t^*(s,w) = \frac{H_t(s,w)}{S_t} \beta_t(s,w).
\end{align}
Furthermore, if
\begin{align}\label{eq:elementary-residual}
    L_t(s,w) &:= \sqrt{1-\rho^2}\,\sigma \int_0^t H_u(s,w) \sqrt{\nu_u}\, \psi_2^{s,w}(T-u)\, \mathrm dW_u^I,
\end{align}
then
\begin{align}\label{eq: elementary GKW}
    H_t(s,w) = H_0(s,w) + (\vartheta^*(s,w)\cdot S)_t + L_t(s,w), \qquad \langle L(s,w),S\rangle\equiv0.
\end{align}

Next, we show that this decomposition is integrable with respect to the measures $\zeta_0, \zeta_1$, which yields the assertion. For $z\in\mathcal S_R$, define $X_0(z):=G_T^z$ and $X_1(z):=G_T^zS_T^{1-z}$. Their $L^2$-norms satisfy
\begin{align}\label{eq: 2}
    \|X_0(z)\|_{L^2(\Omega)}^2 &= \mathbb E[G_T^{2R}], \\
    \|X_1(z)\|_{L^2(\Omega)}^2
    &= \mathbb E[ G_T^{2R}S_T^{2(1-R)} ]
    \leq R\mathbb E[G_T^2] + (1-R)\mathbb E[S_T^2].
\end{align}
Thus both norms are bounded uniformly in $z\in\mathcal S_R$. Moreover, $z\mapsto X_j(z)$ is strongly measurable as an $L^2(\Omega;\mathbb C)$-valued map. Since $\zeta_0$ and $\zeta_1$ have finite total variation, the corresponding contour integrals are well-defined as Bochner integrals in $L^2(\Omega;\mathbb C)$. The GKW projection is a bounded linear projection in $L^2(\Omega)$. In particular, for the elementary GKW decompositions above, $\|\vartheta^*(z,0)\|_{\mathcal G^2_{\mathbb C}(S)} \leq \|G_T^z\|_{L^2}$, $\mathbb{E}[ |L_T(z,0)|^2]  \leq \|G_T^z\|_{L^2(\Omega)}^2$, and similarly $\|\vartheta^*(z,1-z)\|_{\mathcal G^2_{\mathbb C}(S)} \leq \|G_T^zS_T^{1-z}\|_{L^2(\Omega)}$ and $\mathbb{E}[ |L_T(z,1-z)|^2] \leq \|G_T^zS_T^{1-z}\|_{L^2(\Omega)}^2$. The preceding uniform bounds \eqref{eq: 2} combined with the finiteness of $|\zeta_0|(\mathcal S_R)$ and $|\zeta_1|(\mathcal S_R)$ imply
\[
    \int_{\mathcal S_R} \|\vartheta^*(z,0)\|_{\mathcal G^2_{\mathbb C}(S)} |\zeta_0|(\mathrm dz) + \int_{\mathcal S_R} \|\vartheta^*(z,1-z)\|_{\mathcal G^2_{\mathbb C}(S)} |\zeta_1|(\mathrm dz) < \infty,
\]
and analogous estimates for the residual martingales. Hence the Fubini theorem applies to the Bochner and stochastic integrals. 

In particular, using Theorem \ref{thm:general-geometric-payoff-pricing}, we obtain
\begin{align}\label{eq:Hh-transform-hedging}
    H_t^h &= a + bH_t(1,0) + cS_t + \int_{\mathcal S_R} H_t(z,0)\, \zeta_0(\mathrm dz) + \int_{\mathcal S_R} H_t(z,1-z)\, \zeta_1(\mathrm dz).
\end{align}
Inserting \eqref{eq: elementary GKW} into \eqref{eq:Hh-transform-hedging}, using \eqref{eq:elementary-gkw-integrand} and \eqref{eq:elementary-residual}, and rearranging terms gives 
\[
    H_t^h = v^{*,h} + (\vartheta^{*,h} \cdot S)_t + L_t^h,
\]
where $L_t^h = \sqrt{1-\rho^2}\,\sigma \int_0^t \sqrt{\nu_u}\, Q_u^h\, \mathrm dW_u^I$. Since $W^I$ is orthogonal to $W^S$, we also find $\langle L^h,S\rangle\equiv0$. Thus it is the GKW decomposition of $H^h$ with respect to $S$. By uniqueness, \eqref{eq:theta-star-general-g} is the variance-optimal strategy, and the optimal initial capital is \eqref{eq:optimal-capital-general-g}. Finally, the It\^o isometry yields for the GKW residual error
\[
    \epsilon_h = \mathbb E[|L_T^h|^2] = (1-\rho^2)\sigma^2 \int_0^T \mathbb E\left[ \nu_t |Q_t^h|^2 \right] \mathrm dt.
\]
This proves~\eqref{eq:hedging-error-general}.
\end{proof}

\subsection{Proof of Proposition \ref{prop:finite-dimensional-Riccati-lift}}\label{sec:proof-finite-dimensional-Riccati-lift}

\begin{proof}[Proof of Proposition \ref{prop:finite-dimensional-Riccati-lift}]
Fix $(s,w)\in\mathcal D$ and let $\psi_2^{n;s,w}$ denote the unique solution of
\eqref{eq:lifted-Volterra-Riccati-equivalence}. For $i=1,\dots,n$, define 
\[
    \eta_i^{n;s,w}(u) := \int_0^u \mathrm e^{-x_i^n(u-v)} R\left( \psi_1^{s,w}(v), \psi_2^{n;s,w}(v) \right)\,\mathrm dv .
\]
Then $\eta_i^{n;s,w}$ is absolutely continuous and satisfies \eqref{eq:lifted-Riccati-eta} for almost every $u\in[0,T]$. Set $\widetilde\psi_2^{n;s,w}(u) := \sum_{i=1}^n w_i^n\eta_i^{n;s,w}(u)$. By Fubini's theorem and the definition of $\mathcal K^n$,
\begin{align*}
\widetilde\psi_2^{n;s,w}(u) &= \sum_{i=1}^n w_i^n \int_0^u \mathrm e^{-x_i^n(u-v)} R\left( \psi_1^{s,w}(v), \psi_2^{n;s,w}(v) \right)\,\mathrm dv 
\\ &= \int_0^u \mathcal K^n(u-v) R\left( \psi_1^{s,w}(v), \psi_2^{n;s,w}(v) \right)\,\mathrm dv
= \psi_2^{n;s,w}(u),
\end{align*}
where the last equality follows from \eqref{eq:lifted-Volterra-Riccati-equivalence}. This proves
\eqref{eq:lifted-psi2-factor-sum}. Uniqueness of the functions $\eta_i^{n;s,w}$ follows from uniqueness of the corresponding linear equations once $\psi_2^{n;s,w}$ is fixed.

It remains to derive the finite-dimensional representation of the conditional transform. Since $\psi_1^{s,w}$ and $\psi_{1,-}^{s,w}$ differ only at the atoms of $g$, they agree Lebesgue-almost everywhere. Hence Corollary \ref{thm:condcfG_nS_Texpression}, applied to the kernel $\mathcal K^n$, can be rewritten as
\begin{align*}
\Psi_t^n(s,w) = \exp\Bigg( \psi_{1,-}^{s,w}(T-t)X_t^n + \int_t^T \Big[ R\left( \psi_1^{s,w}(T-y), \psi_2^{n;s,w}(T-y) \right) + \kappa\psi_2^{n;s,w}(T-y) \Big] \xi_t^n(y)\,\mathrm dy \Bigg),
\end{align*}
where we have used $R(p,q)+\kappa q = \frac12(p^2-p) + \rho\sigma pq + \frac12\sigma^2q^2$. We now express this integral in terms of the factors $X^n, Y^1,\dots, Y^n$. Fix $t\in[0,T]$ and set $\tau:=T-t$. For $r\in[0,\tau]$ define
\[ 
    m_i(r) := \mathbb E\left[ Y_{t+r}^{n,i} \,\middle|\, \mathcal F_t \right], \qquad i=1,\dots,n.
\]
Taking conditional expectations in \eqref{eq:factor-SDE} gives 
\begin{align}\label{eq:conditional-factor-ode}
m_i'(r) &= -x_i^n m_i(r) + \kappa\left( \theta-\xi_t^n(t+r) \right), \qquad m_i(0)=Y_t^{n,i},
\end{align}
for almost every $r\in[0,\tau]$. Moreover, by \eqref{eq:variance-factor-representation},
\begin{align}\label{eq:forward-variance-factors}
\xi_t^n(t+r) = \nu_0 + \sum_{i=1}^n w_i^n m_i(r).
\end{align}
Define
\[
    F(r) := \sum_{i=1}^n w_i^n \eta_i^{n;s,w}(\tau-r)m_i(r), \qquad r\in[0,\tau].
\]
Using \eqref{eq:lifted-Riccati-eta}, \eqref{eq:conditional-factor-ode}, and \eqref{eq:lifted-psi2-factor-sum}, we obtain for almost every $r\in[0,\tau]$,
\begin{align*}
F'(r) &= \sum_{i=1}^n w_i^n \Big( -(\eta_i^{n;s,w})'(\tau-r)m_i(r) + \eta_i^{n;s,w}(\tau- r)m_i'(r) \Big)
\\ &= -R\left( \psi_1^{s,w}(\tau-r), \psi_2^{n;s,w}(\tau-r) \right) \sum_{i=1}^n w_i^n m_i(r) + \kappa \psi_2^{n;s,w}(\tau-r) \left( \theta-\xi_t^n(t+r) \right).
\end{align*}
By \eqref{eq:forward-variance-factors}, we find $\sum_{i=1}^n w_i^n m_i(r) = \xi_t^n(t+r)-\nu_0$, and therefore
\begin{align*}
-F'(r) &= \Bigg[ R\left( \psi_1^{s,w}(\tau-r), \psi_2^{n;s,w}(\tau-r) \right) + \kappa\psi_2^{n;s,w}(\tau-r) \Bigg] \xi_t^n(t+r)
\\ &\qquad - \nu_0 R\left( \psi_1^{s,w}(\tau-r), \psi_2^{n;s,w}(\tau-r)\right) - \kappa\theta \psi_2^{n;s,w}(\tau-r).
\end{align*}
Integrating over $r\in[0,\tau]$, and using $\eta_i^{n;s,w}(0)=0$, gives
\begin{align*}
&\int_0^\tau \Bigg[ R\left( \psi_1^{s,w}(\tau-r), \psi_2^{n;s,w}(\tau-r) \right) + \kappa\psi_2^{n;s,w}(\tau-r) \Bigg] \xi_t^n(t+r)\,\mathrm dr 
\\ &\qquad = F(0) + \int_0^\tau \Bigg[ \nu_0 R\left( \psi_1^{s,w}(\tau-r), \psi_2^{n;s,w}(\tau-r) \right) + \kappa\theta \psi_2^{n;s,w}(\tau-r) \Bigg]\,\mathrm dr 
\\ &\qquad = \sum_{i=1}^n w_i^n \eta_i^{n;s,w}(\tau)Y_t^{n,i} + \phi^{n;s,w}(\tau),
\end{align*}
where the last equality follows from \eqref{eq:lifted-Riccati-phi}. After the change of variables
$y=t+r$, we have therefore shown that 
\begin{align*}
&\int_t^T \Bigg[ R\left( \psi_1^{s,w}(T-y), \psi_2^{n;s,w}(T-y) \right) + \kappa\psi_2^{n;s,w}(T-y) \Bigg] \xi_t^n(y)\,\mathrm dy \nonumber
\\ &\qquad = \phi^{n;s,w}(T-t) + \sum_{i=1}^n w_i^n \eta_i^{n;s,w}(T-t)Y_t^{n,i}.
\end{align*}
This yields the alternative representation
\[
 \Psi_t^n(s,w) = \exp\Bigg( \psi_{1,-}^{s,w}(T-t)X_t^n + \phi^{n;s,w}(T-t) + \sum_{i=1}^n w_i^n \eta_i^{n;s,w}(T-t)Y_t^{n,i} \Bigg).
\]
Finally, since $G_T^n = \exp\left(A_t^{g,n}\right)G_{t,T}^n$, and $A_t^{g,n} =  \int_{[0,t]} X_u^n\,g(\mathrm du)$ is $\mathcal F_t$-measurable, we obtain 
\[
    H_t^n(s,w) = \mathbb E\left[ (G_T^n)^s(S_T^n)^w \,\middle|\, \mathcal F_t \right] = \mathrm e^{sA_t^{g,n}}\Psi_t^n(s,w),
\]
which gives \eqref{eq:lifted-affine-transform}.
\end{proof}

\subsection{Proof of Proposition \ref{prop:lifted-payoff-L2-convergence}}\label{sec:proof-lifted-payoff-L2-convergence}

\begin{proof}[Proof of Proposition \ref{prop:lifted-payoff-L2-convergence}]
Since $\mathrm dX_t^n = -\frac12\nu_t^n\,\mathrm dt + \sqrt{\nu_t^n}\,\mathrm dW_t^S$ and $\mathrm dX_t = -\frac12\nu_t\,\mathrm dt + \sqrt{\nu_t}\,\mathrm dW_t^S$, we obtain
\[
    X_t^n-X_t = -\frac12 \int_0^t (\nu_s^n-\nu_s)\,\mathrm ds + \int_0^t (\sqrt{\nu_s^n}-\sqrt{\nu_s})\, \mathrm dW_s^S.
\]
Using $|\sqrt{x}-\sqrt{y}|^2 \leq |x-y|$ for $x,y \geq 0$, together with Cauchy--Schwarz and Doob's $L^2$ inequality, we obtain 
\begin{align*}
    \mathbb E\left[ \sup_{t\in[0,T]} |X_t^n-X_t|^2 \right]
    &\leq \frac{T}{2} \int_0^T \mathbb E[ |\nu_s^n-\nu_s|^2 ]\,\mathrm ds + 8 \int_0^T \mathbb E[ |\nu_s^n-\nu_s| ]\,\mathrm ds
    \\ &\leq \frac{T}{2} \int_0^T \mathbb E[ |\nu_s^n-\nu_s|^2 ]\,\mathrm ds
    + 8\sqrt{T} \left( \int_0^T\mathbb E[ |\nu_s^n-\nu_s|^2 ]\,\mathrm ds \right)^{1/2},
\end{align*}
which tends to zero by Assumption~\ref{assump:strong-lift-convergence}(i). Since $g$ is a probability measure,
\begin{align*}
    \left| \log G_T^n-\log G_T \right| = \left| \int_{[0,T]} (X_t^n-X_t)\, g(\mathrm dt) \right|
    \leq \sup_{t\in[0,T]} |X_t^n-X_t|.
\end{align*}
Thus $G_T^n \longrightarrow G_T$ in probability. Likewise, $S_T^n=\exp(X_T^n) \longrightarrow \exp(X_T)=S_T$ in probability. Finally, note that every payoff of the form $h(G_T, S_T)$ has at most linear growth due to $|h(x,y)| \leq C_h(1+x+y)$ for $x,y>0$, which follows from $x^R \leq 1 + x$ combined with $x^R y^{1-R} \leq R x + (1-R)y$. Moreover, finiteness of $|\zeta_0|$ and $|\zeta_1|$ implies, by dominated convergence on the contour, that $h$ is continuous on $\mathbb R_+^2$. Hence $h(G_T^n,S_T^n) \longrightarrow h(G_T,S_T)$ holds in probability. 

Set $S_T^{*,n} := \sup_{t\in[0,T]}S_t^n$ and $S_T^* := \sup_{t\in[0,T]}S_t$. By Jensen's inequality, $G_T^n \leq \int_{[0,T]}S_t^n\,g(\mathrm dt) \leq S_T^{*,n}$, and similarly $G_T\leq S_T^*$. Thus $|h(G_T^n, S_T^n)| \leq C_h(1+S_T^{*,n})$. Let $p:=2+\delta$. Since $S^n$ is a nonnegative martingale, Doob's $L^p$ inequality and \eqref{eq:uniform-stock-moments-lift} give
\[
    \sup_{n\geq1} \mathbb E[ (S_T^{*,n})^{2+\delta}] <\infty.
\]
Consequently, the family $\bigl( h(G_T^n, S_T^n)^2 \bigr)_{n\geq1}$ is uniformly integrable. The assertion follows from Vitali's theorem. Since $H_t^{h,n}-H_t^h = \mathbb E[ h(G_T^n, S_T^n) - h(G_T, S_T) \mid\mathcal F_t ]$, Doob's $L^2$ inequality gives
\[
    \left\| \sup_{t\in[0,T]} |H_t^{h,n}-H_t^h| \right\|_{L^2(\Omega)} \leq 2 \|h(G_T^n, S_T^n) - h(G_T, S_T)\|_{L^2(\Omega)},
\]
which proves the assertion.
\end{proof}

\subsection{Proof of Theorem \ref{thm:lifted-hedge-stability}}\label{sec:proof-lifted-hedge-stability}

\begin{proof}[Proof of Theorem \ref{thm:lifted-hedge-stability}]
To simplify the notation, define $\Gamma_t^{h,n} := \vartheta_t^{*,h,n} S_t^n\sqrt{\nu_t^n}$ and $\Gamma_t^h := \vartheta_t^{*,h} S_t\sqrt{\nu_t}$. In view of Theorem~\ref{thm:lifted-general-hedge} and Theorem~\ref{thm:optimal-strategy-geo}, respectively, these coefficients are given by
\begin{align}
    \Gamma_t^{h,n} &= cS_t^n\sqrt{\nu_t^n} + bH_t^n(1,0)\sqrt{\nu_t^n}\, \beta_t^n(1,0) 
    \nonumber\\ &\quad + \int_{\mathcal S_R} H_t^n(z,0)\sqrt{\nu_t^n}\, \beta_t^n(z,0)\, \zeta_0(\mathrm dz) + \int_{\mathcal S_R} H_t^n(z,1-z)\sqrt{\nu_t^n}\,\beta_t^n(z,1-z)\, \zeta_1(\mathrm dz), \label{eq:Gamma-lifted}
\end{align}
and analogously for $\Gamma^h$. Likewise, define $\Lambda_t^{h,n} := \sqrt{1-\rho^2}\,\sigma \sqrt{\nu_t^n}\,Q_t^{h,n}$ and $\Lambda_t^h := \sqrt{1-\rho^2}\,\sigma \sqrt{\nu_t}\,Q_t^h$, where $Q^{h,n}$ and $Q^h$ are given by \eqref{eq:Qh-general-g-lifted} and \eqref{eq:Qh-general-g}. Then $L_t^{h,n} = \int_0^t \Lambda_u^{h,n}\, \mathrm dW_u^I$ and $L_t^h = \int_0^t \Lambda_u^h\, \mathrm dW_u^I$. In particular, the pricing martingales admit the representation
\begin{align*}
    H_t^{h,n} &= v^{*,h,n} + \int_0^t \Gamma_u^{h,n}\, \mathrm dW_u^S + \int_0^t \Lambda_u^{h,n}\, \mathrm dW_u^I,
    \\ H_t^h &= v^{*,h} + \int_0^t \Gamma_u^h\, \mathrm dW_u^S + \int_0^t \Lambda_u^h\,\mathrm dW_u^I,
\end{align*}
with $v^{*,h,n} = \mathbb E[h(G_T^n, S_T^n)]$ and $v^{*,h} = \mathbb E[h(G_T, S_T)]$. Taking the difference at maturity $t = T$ and noting that all terms are strongly orthogonal, we see that 
\begin{align*}
    \delta_n^2 &= |v^{*,h,n}-v^{*,h}|^2 + \mathbb E\left[ \int_0^T |\Gamma_t^{h,n}-\Gamma_t^h|^2 \,\mathrm dt \right] + \mathbb E\left[ \int_0^T |\Lambda_t^{h,n}-\Lambda_t^h|^2 \,\mathrm dt \right].
\end{align*}
This implies $|v^{*,h,n}-v^{*,h}| \leq \delta_n$, 
\begin{align}
    \left( \mathbb E\left[ \int_0^T |\Gamma_t^{h,n}-\Gamma_t^h|^2 \,\mathrm dt \right] \right)^{1/2} \leq \delta_n \qquad \text{ and } \qquad \left( \mathbb E\left[ \int_0^T |\Lambda_t^{h,n}-\Lambda_t^h|^2 \,\mathrm dt \right] \right)^{1/2} \leq \delta_n.
\end{align}
Note that  
\[
    (\vartheta^{*,h,n}\cdot S^n)_t - (\vartheta^{*,h}\cdot S)_t = \int_0^t (\Gamma_s^{h,n}- \Gamma_s^h) \,\mathrm dW_s^S.
\]
and similarly
\[
    L_t^{h,n}-L_t^h = \int_0^t (\Lambda_s^{h,n}-\Lambda_s^h)\,\mathrm dW_s^I.
\]
The assertion follows from the previous bounds combined with Doob's inequality. Finally,
\begin{align*}
    |\epsilon_{h,n}-\epsilon_h| &= \left| \|L_T^{h,n}\|_{L^2(\Omega)}^2 - \|L_T^h\|_{L^2(\Omega)}^2 \right|
    \\ &\leq \|L_T^{h,n}-L_T^h\|_{L^2(\Omega)} \left( \|L_T^{h,n}\|_{L^2(\Omega)} + \|L_T^h\|_{L^2(\Omega)} \right).
\end{align*}
Since the orthogonal residual is an orthogonal projection component,
\[
 \|L_T^{h,n}\|_{L^2(\Omega)} \leq \|h(G_T^n,S_T^n)\|_{L^2(\Omega)},
 \qquad
 \|L_T^h\|_{L^2(\Omega)} \leq \|h(G_T,S_T)\|_{L^2(\Omega)}.
\]
We conclude by combining these bounds with
\[
 \|h(G_T^n,S_T^n)\|_{L^2(\Omega)}
 \leq \|h(G_T,S_T)\|_{L^2(\Omega)}+\delta_n.
\]
\end{proof}

\subsection{Strong stability for regular Volterra kernels}\label{sec:strong-stability}

\begin{theorem}[Strong convergence of the lifted variance process]
\label{thm:strong-variance-convergence-regular}
Fix $T>0$. Suppose that $\mathcal K$ is completely monotone with $\mathcal K\in C^1([0,T+1])$. For each $n\geq1$, let
\[
    \mathcal K^n(t) = \sum_{i=1}^n w_i^n \mathrm e^{-x_i^n t}, \qquad w_i^n,x_i^n\geq0,
\]
and assume that $\|\mathcal K^n-\mathcal K\|_{L^2([0,T])} \longrightarrow 0$, and
\begin{align}
    \sup_{n\geq1} \left( \|\mathcal K^n\|_{L^\infty([0,T+1])} + \|(\mathcal K^n)'\| _{L^\infty([0,T+1])} \right) &<\infty. \label{eq:regular-kernel-uniform-C1}
\end{align}
Let $\nu$ and $\nu^n$ denote the nonnegative solutions to the associated Volterra equations driven by the same Brownian motion $W^\nu$, i.e.
\begin{align}
    \nu_t &= \nu_0 + \int_0^t \mathcal K(t-s)\kappa(\theta-\nu_s)\,\mathrm ds + \int_0^t \mathcal K(t-s)\sigma\sqrt{\nu_s}\,\mathrm dW_s^\nu,
    \label{eq:regular-limit-variance}
    \\ \nu_t^n &= \nu_0 + \int_0^t \mathcal K^n(t-s)\kappa(\theta-\nu_s^n)\,\mathrm ds + \int_0^t \mathcal K^n(t-s)\sigma\sqrt{\nu_s^n}\,\mathrm dW_s^\nu. \label{eq:regular-lifted-variance}
\end{align}
Then $\int_0^T \mathbb E[ |\nu_t^n-\nu_t|^2 ]\,\mathrm dt \longrightarrow0$.
\end{theorem}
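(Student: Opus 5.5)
Since both kernels are $C^1$ and uniformly bounded with bounded derivatives on $[0,T+1]$ by \eqref{eq:regular-kernel-uniform-C1}, I would first show that $\nu$ and all $\nu^n$ are continuous semimartingales with uniformly controlled moments. Differentiating the Volterra equations gives
\[
\mathrm d\nu^n_t = \mathcal K^n(0)\bigl(\kappa(\theta-\nu_t^n)\,\mathrm dt+\sigma\sqrt{\nu^n_t}\,\mathrm dW_t^\nu\bigr) + \Bigl(\int_0^t (\mathcal K^n)'(t-s)\,\mathrm dZ^n_s\Bigr)\mathrm dt,
\]
where $Z^n_t:=\int_0^t\kappa(\theta-\nu^n_s)\,\mathrm ds+\int_0^t\sigma\sqrt{\nu^n_s}\,\mathrm dW^\nu_s$, and the analogous identity holds for $\nu$. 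Before that, I would derive uniform moment bounds $\sup_n\sup_{t\le T}\mathbb E[(\nu^n_t)^p]<\infty$ for $p\ge 2$ by the standard Volterra Gronwall argument (BDG plus Young's convolution inequality, using $\sup_n\|\mathcal K^n\|_{L^2}<\infty$), exactly as for the bounds quoted from \cite{abi2019affine}.

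The main step is an error decomposition. Write $\nu^n_t-\nu_t = I^n_t + J^n_t$, where
\[
I^n_t=\int_0^t(\mathcal K^n-\mathcal K)(t-s)\,\mathrm dZ_s
\]
is driven by the limiting process, and
\[
J^n_t=\int_0^t\mathcal K^n(t-s)\,\bigl(\kappa(\nu_s-\nu^n_s)\,\mathrm ds+\sigma(\sqrt{\nu^n_s}-\sqrt{\nu_s})\,\mathrm dW^\nu_s\bigr).
\]
The term $I^n$ is easy. By Itô isometry and the moment bounds,
\[
\int_0^T\mathbb E|I^n_t|^2\,\mathrm dt\le C\,\|\mathcal K^n-\mathcal K\|_{L^2([0,T])}^2\to0.
\]
The difficulty is $J^n$, because the diffusion coefficient is only $1/2$-Hölder. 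A naive $L^2$-Gronwall gives only $\mathbb E|\nu^n_t-\nu_t|^2\lesssim \varepsilon_n+\int_0^t \|\mathcal K^n\|_\infty^2\,\mathbb E|\nu^n_s-\nu_s|\,\mathrm ds$. That bound does not close, and I expect this to be the main obstacle.

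To overcome it, I would use a Yamada--Watanabe type argument on the semimartingale form. Set $D^n:=\nu^n-\nu$. Its martingale part is $\sigma\bigl(\mathcal K^n(0)\sqrt{\nu^n}-\mathcal K(0)\sqrt\nu\bigr)\,\mathrm dW^\nu$. Apply Itô's formula to $\phi_m(D^n_t)$, where $\phi_m$ is the Yamada--Watanabe approximation of $|x|$. The second-order term $\phi_m''(D)\,\mathcal K^n(0)^2\sigma^2|\sqrt{\nu^n}-\sqrt\nu|^2\le C\phi_m''(D)|D|$ vanishes as $m\to\infty$. The mismatch $(\mathcal K^n(0)-\mathcal K(0))\sqrt\nu$ is handled by splitting, at the cost of a term of order $|\mathcal K^n(0)-\mathcal K(0)|$.

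A subtlety is that $\mathcal K^n(0)\to\mathcal K(0)$ does not follow from $L^2$ convergence alone. However, uniform Lipschitz bounds together with $L^2$ convergence yield uniform convergence on $[0,T]$ by Arzelà--Ascoli, so in particular $\mathcal K^n(0)\to\mathcal K(0)$; moreover $(\mathcal K^n)'\to\mathcal K'$ weakly-$*$ in $L^\infty$. The drift contains the memory term $\int_0^t(\mathcal K^n)'(t-s)\,\mathrm dZ^n_s$, whose difference with the limit splits into $\int(\mathcal K^n)'(t-s)\,\mathrm d(Z^n-Z)_s$ and $\int((\mathcal K^n)'-\mathcal K')(t-s)\,\mathrm dZ_s$.

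The first piece is bounded in $L^1$ by $C\int_0^t\mathbb E|D_s|\,\mathrm ds$ plus a stochastic integral. I would control the stochastic integral via stochastic Fubini, rewriting it as $\int_0^t(\mathcal K^n)'(t-s)\,\mathrm dM^n_s$ and bounding it in $L^2$ by $C(\int_0^t\mathbb E|D_s|\,\mathrm ds)^{1/2}$. The second piece tends to zero in $L^2(\Omega\times[0,T])$ by dominated convergence after Itô isometry, using the weak-$*$ convergence together with continuity of convolution. The whole argument then becomes a Gronwall-type inequality for $a_n(t):=\sup_{s\le t}\mathbb E|D^n_s|$ of the form
\[
a_n(t)\le \varepsilon_n + C\int_0^t a_n(s)\,\mathrm ds + C\Bigl(\int_0^t a_n(s)\,\mathrm ds\Bigr)^{1/2},
\qquad \varepsilon_n\to0.
\]
A Bihari/Osgood-type lemma then gives $a_n(T)\to0$.

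Finally, I would upgrade from $L^1$ to $L^2$ by interpolation:
\[
\mathbb E|D_t|^2\le (\mathbb E|D_t|)^{1/2}\,(\mathbb E|D_t|^3)^{1/2},
\]
where the second factor is bounded by the uniform moment bounds. Integrating over $[0,T]$ then yields $\int_0^T \mathbb E|\nu_t^n-\nu_t|^2\,\mathrm dt\to0$.
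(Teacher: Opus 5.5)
\textbf{There is a genuine gap in the closing step.} Your inequality
$a_n(t)\le \varepsilon_n + C\int_0^t a_n(s)\,\mathrm ds + C\bigl(\int_0^t a_n(s)\,\mathrm ds\bigr)^{1/2}$
does not force $a_n(T)\to0$, and no Bihari/Osgood lemma rescues it. The modulus $x\mapsto x+\sqrt{x}$ violates Osgood's condition, because $\int_{0+}\mathrm dx/\sqrt{x}<\infty$. Concretely, $a(t)=C^2t/2$ satisfies $a(t)=C\bigl(\int_0^t a(s)\,\mathrm ds\bigr)^{1/2}$ with $\varepsilon_n=0$. The same happens with the outer time integral that the drift actually produces: $a(t)=C^2t^3/36$ solves $a(t)=C\int_0^t\bigl(\int_0^s a\bigr)^{1/2}\mathrm ds$.

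The square root comes from estimating the stochastic memory term $\int_0^t(\mathcal K^n)'(t-s)\sigma(\sqrt{\nu^n_s}-\sqrt{\nu_s})\,\mathrm dW_s$ through the It\^o isometry. The Yamada--Watanabe functions $\phi_m$ only neutralise the $\frac12$-H\"older coefficient inside the quadratic-variation term. They do nothing for a stochastic integral that sits in the drift. This is exactly the obstruction to pathwise uniqueness for Volterra square-root equations, so your argument does not get past it.

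There are two smaller issues.
\begin{itemize}
\item After the It\^o isometry, the piece driven by $(\mathcal K^n)'-\mathcal K'$ needs $\|(\mathcal K^n)'-\mathcal K'\|_{L^2([0,T])}\to0$. Weak-$*$ convergence does not give this. It is true here, but only via convexity of completely monotone functions, which you never use.
\item The mismatch term $\phi_m''(D^n)|\mathcal K^n(0)-\mathcal K(0)|^2\nu$ is only bounded by $(\sup\phi_m'')\,|\mathcal K^n(0)-\mathcal K(0)|^2\nu$, and $\sup\phi_m''\to\infty$. So $m$ must be tied to $n$.
\end{itemize}

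\textbf{How the paper avoids this.} The paper makes no quantitative estimate at all. It uses weak stability for stochastic Volterra equations to get tightness of the pairs $(\nu^{n_k},\nu^{m_k})$. Their limit points solve the limiting equation with a common Brownian motion. It then cites pathwise uniqueness for the regular-kernel equation, gets convergence in probability from Gy\"ongy--Krylov, and upgrades by uniform integrability. Your interpolation step plays the same role as that last step and is fine.

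\textbf{How to repair your route.} To keep a direct Yamada--Watanabe estimate, you need the memory term to be pathwise \emph{linear} in the error.
\begin{itemize}
\item Apply $\phi_m$ to $J^n$ rather than $D^n$. All kernel mismatch then lives in $I^n$, which needs only $L^2$ convergence of the kernels.
\item With $V^n:=Z^n-Z$, write $\int_0^t(\mathcal K^n)'(t-s)\,\mathrm dV^n_s=(\mathcal K^n)'(0)V^n_t+\int_0^t(\mathcal K^n)''(t-s)V^n_s\,\mathrm ds$. Complete monotonicity gives $(\mathcal K^n)''\ge0$, hence $\sup_n\|(\mathcal K^n)''\|_{L^1([0,T])}\le 2\sup_n\|(\mathcal K^n)'\|_{L^\infty([0,T+1])}$.
\item Invert $J^n_t=\mathcal K^n(0)V^n_t+((\mathcal K^n)'\ast V^n)_t$, using $\mathcal K^n(0)\to\mathcal K(0)>0$, to get $|V^n_t|\le C\bigl(|J^n_t|+\int_0^t|J^n_s|\,\mathrm ds\bigr)$.
\item The second-order term is then bounded by $C\phi_m''(J^n)\bigl(|J^n|+|I^n|\bigr)$. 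Handle it by choosing $m=m_n$, and an ordinary Gronwall argument closes.
\end{itemize}
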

\begin{proof}
We divide the proof into three steps.

\emph{Step 1: Uniform regularity and moment bounds.} By \eqref{eq:regular-kernel-uniform-C1}, there is a constant $C_T>0$ such that, for every $n\geq1$ and every $h\in(0,1]$,
\begin{align*}
    \int_0^h |\mathcal K^n(s)|^2\,\mathrm ds &\leq h\,\|\mathcal K^n\|_{L^\infty([0,T+1])}^2 \leq C_T h,
\end{align*}
while
\begin{align*}
    \int_0^T |\mathcal K^n(s+h)-\mathcal K^n(s)|^2\,\mathrm ds &\leq T h^2 \|(\mathcal K^n)'\| _{L^\infty([0,T+1])}^2 \leq C_T h.
\end{align*}
Hence the sequence $(\mathcal K^n)_{n\geq1}$ satisfies the kernel increment condition in Assumption~\ref{assump:K} uniformly in $n$, with exponent $\gamma=1$. The same is true for $\mathcal K$. Moreover, for every $q\geq2$, one can verify that
\begin{align}
    \sup_{n\geq1}\sup_{t\in[0,T]} \mathbb E[(\nu_t^n)^q] + \sup_{t\in[0,T]} \mathbb E[\nu_t^q] < \infty. \label{eq:uniform-variance-q-moments-regular}
\end{align}

\emph{Step 2: Convergence in probability.}
We show that $\int_0^T |\nu_t^n - \nu_t|^2\, \mathrm{d}t \longrightarrow 0$ in probability. Consider arbitrary subsequences $(n_k)_{k\geq1}$ and $(m_k)_{k\geq1}$. The pair $Z^k:=(\nu^{n_k},\nu^{m_k})$ solves the two-dimensional stochastic Volterra equation
\begin{align*}
    Z_t^k &= (\nu_0,\nu_0)^\top + \int_0^t \bm{\mathcal K}^k(t-s)b(Z_s^k)\,\mathrm ds + \int_0^t \bm{\mathcal K}^k(t-s)a(Z_s^k)\,\mathrm dW_s^\nu,
\end{align*}
where
\[
    \bm{\mathcal K}^k = \begin{pmatrix}
        \mathcal K^{n_k} & 0\\
        0 & \mathcal K^{m_k}
    \end{pmatrix},
\]
and, extending the square root to the whole real line by
$x\mapsto\sqrt{x^+}$,
\[
    b(x_1,x_2) = \kappa \begin{pmatrix} \theta-x_1\\ \theta-x_2 \end{pmatrix},
    \qquad 
    a(x_1,x_2) = \sigma \begin{pmatrix} \sqrt{x_1^+}\\ \sqrt{x_2^+} \end{pmatrix}.
\]
The coefficients are continuous and of linear growth. Moreover,
\[
    \bm{\mathcal K}^k \longrightarrow \begin{pmatrix} \mathcal K&0\\ 0&\mathcal K \end{pmatrix}
\]
in $L^2([0,T])$. Step~1 gives the required uniform kernel regularity. The weak stability theorem for stochastic Volterra equations therefore implies tightness of $(Z^k)_{k\geq1}$ in $L^2([0,T];\mathbb R^2)$, and every limit point $(\bar\nu^1,\bar\nu^2)$ is a weak solution of
\begin{align}
    \bar\nu_t^j &= \nu_0 + \int_0^t \mathcal K(t-s)\kappa(\theta-\bar\nu_s^j)\,\mathrm ds + \int_0^t \mathcal K(t-s)\sigma\sqrt{\bar\nu_s^j}\, \mathrm d\bar W_s^\nu, \qquad j=1,2, \label{eq:doubled-limit-variance}
\end{align}
with the same limiting Brownian motion $\bar W^\nu$ in both coordinates, see \cite[Thm.~7.5]{MR4350978}. 

Since $\mathcal K\in C^1([0,T])$, the limiting stochastic Volterra equation has pathwise uniqueness. Indeed, the drift $x\mapsto\kappa(\theta-x)$ is globally Lipschitz and $x\mapsto\sigma\sqrt{x^+}$ is globally $1/2$-H\"older, so the regular-kernel Yamada--Watanabe theorem for stochastic Volterra equations applies; see, e.g., \cite[Thm.~2.3 and Cor.~2.6]{MR4578147}. Consequently, $\bar\nu^1=\bar\nu^2$ a.s. Thus every weak limit of $(\nu^{n_k},\nu^{m_k})$ is concentrated on the diagonal of $L^2([0,T])\times L^2([0,T])$. By the Gy\"ongy--Krylov criterion, $(\nu^n)_{n\geq1}$ converges in probability in $L^2([0,T])$. The limit must be the unique strong solution $\nu$ of \eqref{eq:regular-limit-variance}, which proves the desired convergence in probability.

\emph{Step 3: Upgrade to $L^2(\Omega\times[0,T])$.}
Set
\[
    Y_n := \int_0^T |\nu_t^n-\nu_t|^2\,\mathrm dt.
\]
By Step 2, we obtain $Y_n\to0$ in probability. Choose any $q>2$ and put $r=q/2>1$. Jensen's inequality gives
\begin{align*}
    Y_n^r &\leq T^{r-1} \int_0^T |\nu_t^n-\nu_t|^{2r}\,\mathrm dt
    \\ &= T^{q/2-1} \int_0^T |\nu_t^n-\nu_t|^q\,\mathrm dt
    \leq 2^{q-1}T^{q/2-1} \int_0^T \bigl((\nu_t^n)^q+\nu_t^q\bigr)\,\mathrm dt.
\end{align*}
By \eqref{eq:uniform-variance-q-moments-regular}, $\sup_{n\geq1}\mathbb E[Y_n^r]<\infty$. Hence $(Y_n)_{n\geq1}$ is uniformly integrable. Since $Y_n\to0$ in probability, Vitali's theorem yields $\mathbb E[Y_n]\longrightarrow0$, which proves the assertion. 
\end{proof}

\bibliographystyle{abbrv}
\bibliography{volterra-heston-asia}

\end{document}